\newtheorem{lem}{Lemma}
\newtheorem{assumption}{Assumption}
\newtheorem{prop}{Proposition}
\newtheorem{cor}{Corollary}
\newtheorem{thm}{Theorem}
\newtheoremstyle{assumption}{1ex}{1ex}%
      {\it}
      {}
      {\bf}
      {.}
      {5pt}
      {\thmname{#1}\thmnumber{ #2}*\thmnote{ \slshape{(#3)}}} 
\theoremstyle{assumption}	
\newtheorem{assumptionA}{Assumption}
\newtheoremstyle{remark2}{1ex}{1ex}%
      {}
      {}
      {\bf}
      {.}
      {5pt}
      {\thmname{#1}\thmnumber{ #2}\thmnote{ \slshape{(#3)}}} 
\theoremstyle{remark2}
\newtheorem{rem}{Remark}
\newtheorem{example}{Example}
\newcommand{\Burr}{\operatorname{Burr}}
\renewenvironment{proof}[1][\bfseries\proofname]{\par
   \pushQED{\qed}%
   \normalfont \topsep6\p@\@plus6\p@\relax
   \trivlist
   \item[\hskip\labelsep
     #1\@addpunct{:}]\ignorespaces
}{%
   \popQED\endtrivlist\@endpefalse
}
\providecommand{\leftsquigarrow}{%
  \mathrel{\mathpalette\reflect@squig\relax}%
}
\newcommand{\reflect@squig}[2]{%
  \reflectbox{$\m@th#1\rightsquigarrow$}%
}
\newcommand{\Comments}{1}
\newcommand{\mynote}[2]{\ifnum\Comments=1\textcolor{#1}{#2}\fi}
\newcommand{\mytodo}[2]{\ifnum\Comments=1%
  \todo[linecolor=#1!80!black,backgroundcolor=#1,bordercolor=#1!80!black]{#2}\fi}
\newcommand{\VaR}{\operatorname{VaR}}
\newcommand{\p}{\operatorname{P}}
\newcommand{\D}{\,\mathrm{d}}
\begin{document}

\baselineskip18pt
\renewcommand\floatpagefraction{.9}
\renewcommand\topfraction{.9}
\renewcommand\bottomfraction{.9}
\renewcommand\textfraction{.1}
\setcounter{totalnumber}{50}
\setcounter{topnumber}{50}
\setcounter{bottomnumber}{50}
\abovedisplayskip1.5ex plus1ex minus1ex
\belowdisplayskip1.5ex plus1ex minus1ex
\abovedisplayshortskip1.5ex plus1ex minus1ex
\belowdisplayshortskip1.5ex plus1ex minus1ex

\title{The Estimation Risk in Extreme Systemic Risk Forecasts\thanks{This work was funded by the Deutsche Forschungsgemeinschaft (DFG, German Research Foundation) through project 460479886.}
}

\author{
	Yannick Hoga\thanks{Faculty of Economics and Business Administration, University of Duisburg-Essen, Universit\"atsstra\ss e 12, D--45117 Essen, Germany, \href{mailto:yannick.hoga@vwl.uni-due.de}{yannick.hoga@vwl.uni-due.de}.}
}

\date{\today}
\maketitle

\begin{abstract}
\noindent Systemic risk measures have been shown to be predictive of financial crises and declines in real activity. Thus, forecasting them is of major importance in finance and economics. In this paper, we propose a new forecasting method for systemic risk as measured by the marginal expected shortfall (MES). It is based on first de-volatilizing the observations and, then, calculating systemic risk for the residuals using an estimator based on extreme value theory. We show the validity of the method by establishing the asymptotic normality of the MES forecasts. The good finite-sample coverage of the implied MES forecast intervals is confirmed in simulations. An empirical application to major US banks illustrates the significant time variation in the precision of MES forecasts, and explores the implications of this fact from a regulatory perspective. \\

\noindent \textbf{Keywords:} Confidence Intervals, Forecasting, Hypothesis Testing, Marginal Expected Shortfall, Systemic Risk \\
\noindent \textbf{JEL classification:} C12 (Hypothesis Testing), C53 (Forecasting and Prediction Methods), C58 (Financial Econometrics)

\end{abstract}


\newpage

\section{Motivation}\label{Motivation}

The financial crisis of 2007--2009 has sparked regulatory and academic interest in assessing systemic risk. For instance, in April 2009 G20 leaders asked national regulators to develop guidelines for the assessment of the systemic importance of financial institutions, which were provided in a joint report by the International Monetary Fund (IMF), the Bank for International Settlements (BIS) and the Financial Stability Board (FSB) \citep{IBS09}. By now, these early developments have manifested themselves in official regulations. For example, the class of global systemically important banks (G-SIBs) is divided into buckets from 1 to 5, where banks in bucket 5 have to hold the highest additional capital buffers.\footnote{See \url{https://www.fsb.org/wp-content/uploads/P231121.pdf} for the 2021 list of the 30 G-SIBs.} The Basel framework, that sets out the methodology determining G-SIB membership, closely relies on systemic risk assessments \citep[SCO40]{BCBSBF19}.

Next to its regulatory importance, forecasting systemic risk is important in various other contexts. First, one hallmark of financial crises is that asset prices start to co-move. To measure the extent to which prices move in lockstep, several systemic risk measures may be used \citep{Bea12,AB16,Aea17}. Since the tendency of asset prices to co-move implies that diversification benefits are seriously reduced, it becomes important to predict systemic risk measures as indicators of diversification meltdown. Second, investigating the link between the real economy and the financial sector, \citet{ABT12}, \citet{GKP16} and \citet{BE17} find an increase in systemic risk to be predictive of future declines in real activity. These examples underscore the importance of accurately forecasting systemic risk.

Predictions of systemic risk may be produced from a variety of models, such as multivariate GARCH models or quantile regression models \citep{GT13,AB16}. However, very little is known about the asymptotic properties (e.g., consistency or asymptotic normality) of systemic risk forecasts issued from these or other models. This contrasts with the large literature developing asymptotic theory for forecasts of \textit{univariate} risk measures, such as the Value-at-Risk (VaR) or the expected shortfall (ES) \citep{Cea07,GS08,FZ15,Hog18+}. Therefore, it is the main aim of this paper to fill this gap for systemic risk forecasts. Specifically, we establish conditions under which systemic risk forecasts issued from a general class of multivariate GARCH-type models are (consistent and) asymptotically normal.

Of course, consistency is vital for the point forecasts to reflect actual levels of systemic risk. However, point forecasts of systemic risk are only of limited value, since they lack a measure of uncertainty. To illustrate the importance of confidence intervals around point risk forecasts, \citet{CG05} give the example of a portfolio manager allowed to take on portfolios with a VaR of at most 15\% of current capital. A VaR point estimate of 13\% would not indicate any need for rebalancing, yet a (say) 90\%-confidence interval of 10--16\% would induce the prudent portfolio manager to do so. Clearly, a similar case can be made for the importance of confidence intervals for systemic risk forecasts, which our asymptotic normality result allows us to construct.

As our systemic risk measure, we use the marginal expected shortfall (MES) of \citet{Aea17}. We do so, because of the ability of MES to identify key contributors to systemic risk during financial crises and due to its predictive content for a downturn in real economic activity \citep{GKP16,Aea17}. Furthermore, MES has an additivity property that is crucial for attributing systemic risk \citep{CIM13}. This may be useful for individual banks as well as the financial system as a whole. For the purposes of risk management or asset allocation, an individual bank may want to break down firm-wide losses into contributions from single units or trading desks. From the wider perspective of the financial system, the additivity property allows to decompose the system-wide ES into the sum of the MESs of all banks in the system. We also refer to the empirical application for an illustration of this property.

For MES to truly capture \textit{systemic} risk, it needs to be forecasted far out in the tail. For instance, \citet[p.~13]{Aea17} state that ``[w]e can think of systemic events [...] as extreme tail events that happen once or twice a decade''. Consequently, only few meaningful observations are available for forecasting MES. To deal with this, our MES estimator is motivated by extreme value theory (EVT). By imposing weak assumptions on the joint tail, EVT-based methods alleviate the problem of data scarceness outside the center of the distribution. Indeed, numerous studies show that EVT-based estimators improve the forecast quality of univariate risk measures, such as the VaR or the ES \citep{MF00,BLS06,KMP06,Bal07,Hog19+}. Hence, these methods have caught on in empirical work as well \citep{GL05}. We stress that the case for using EVT-based estimators for systemic risk measures, where joint extremes are of interest, seems to be even stronger than in the univariate case because data is even scarcer in the \textit{joint} tail.

A key ingredient of our MES forecast is the MES estimator of \citet{Cea15}, and in deriving asymptotic properties of MES forecasts we build on their work. However, \citet{Cea15} deal with unconditional MES estimation for independent, identically distributed (i.i.d.) random variables and, thus, their framework is inherently static. In contrast, we consider (conditional) MES forecasting in dynamic models. In particular, this requires taking volatility dynamics into account in the forecasts. From an econometric perspective, conditional MES has the advantage of incorporating current market conditions. Therefore, changes in market conditions are reflected in conditional MES, but not in its static version. This is akin to the standard deviation as a measure of risk. The conditional standard deviation (a.k.a.~volatility) is a good measure of current risk, while the unconditional standard deviation only provides an average measure of risk. 

We also extend our results (and the results of \citet{Cea15}) to a higher dimensional setting, where we consider MES forecasts for multiple variables jointly. This is important because one of the main purposes of systemic risk measures is to explore linkages in complex systems. We prove the joint asymptotic normality of MES forecasts in the higher-dimensional case. To enable inference, we propose an estimator of the asymptotic variance-covariance matrix and show its consistency. Then, we demonstrate how our results can be used to test for equal systemic risk contributions of the different units in the system. This test is later on used in the empirical application. From a technical perspective, our higher-dimensional results draw on \citet{Hog17a+}, who explores tail index estimation for multivariate time series.

We confirm the good finite-sample coverage of our asymptotic confidence intervals for MES in simulations. We do so for the constant conditional correlation (CCC) GARCH model of \citet{Bol90}. Our main findings are that coverage improves the more extreme the risk level of the MES forecasts. Also, the better the (marginal and joint) tail can be approximated using extreme value methods, the more precise the forecasts tend to be in terms of root mean square error (RMSE) and the lengths of the forecast intervals.

Our empirical application considers the 8 G-SIBS from the US. As expected, there is significant time variation in the levels of systemic risk as measured by the MES. Significant peaks in systemic risk can be observed during the financial crisis of 2007--2009, the European sovereign debt crisis in 2011 and the Corona crash of March 2020. Computing our MES forecast intervals over time shows that it is particularly during times of crises---when accurate systemic risk assessments are needed most---that forecasts tend to be least precise (as measured by the lengths of the forecast intervals). We also apply our test for equal systemic risk contributions of each of the 8 banks. Not surprisingly, the null of equality can be rejected for every single time point in our sample. This is consistent with the fact that the 8 banks are assigned to different buckets in the G-SIB classification.

The remainder of the paper is structured as follows. Section~\ref{Preliminaries} introduces the multivariate volatility model and our MES forecasts together with all required regularity conditions. Section~\ref{Asymptotic normality of MES forecasts} derives limit theory for MES forecasts and Section~\ref{Higher-Dimensional Extensions} extends this to multiple MES forecasts. Coverage of the confidence intervals for MES is assessed in the simulations in Section~\ref{Simulations}. The empirical application in Section~\ref{Empirical Application} investigates MES forecasts for the 8 US G-SIBs. The final Section~\ref{Conclusion} concludes. All proofs are relegated to the Appendix.

\section{Preliminaries}\label{Preliminaries}

We adopt the following notational conventions. Throughout, we use bold letters to denote vectors and matrices. In particular, $\mI$ is the $(2\times2)$-identity matrix. For any matrix $\mA=(a_{ij})$, we will use the norm defined by $\Vert\mA\Vert=\sum_{i,j}|a_{ij}|$. The transpose of a matrix $\mA$ is denoted by $\mA^\prime$ and its vectorization by $\vec(\mA)$, where the columns of $\mA$ are stacked on top of each other. The diagonal matrix containing the elements of the vector $\vv$ on the main diagonal is $\diag(\vv)$. We let $\overset{d}{=}$ stand for equality in distribution. For some random variable $X$, put $X^{+}=\max\{X,0\}$ and $X^{-}=X-X^{+}$. For scalar sequences $a_n$ and $b_n$, we write $a_n\asymp b_n$ if both $a_n=O(b_n)$ and $b_n=O(a_n)$ hold, as $n\to\infty$, and we write $a_n\sim b_n$ if $a_n/b_n\to1$.

\subsection{The data-generating process}\label{The data-generating process}

Consider a sample $\{(X_t, Y_t)^\prime\}_{t=-\ell_n+1,\ldots, n}$ from the random variables of interest. In a forecasting situation, interest focuses on predicting systemic risk based on the current state of the market, which is captured by the information set $\mathcal{F}_{n}=\sigma\big((X_n, Y_n)^\prime, (X_{n-1}, Y_{n-1})^\prime,\ldots\big)$. Define $F_{n}(x,y)=\p\{X_{n+1}\leq x, Y_{n+1}\leq y\mid \mathcal{F}_{n}\}$ to be the conditional joint distribution function (d.f.) with marginals $F_{n,0}(x)=F_{n}(x,\infty)$ and $F_{n,1}(y)=F_{n}(\infty,y)$. Then, the (conditional) MES is defined as
\[
	\theta_{n,p}=\E\big[Y_{n+1}\mid X_{n+1}>\VaR_{n}(p), \mathcal{F}_n\big],
\]
where $\VaR_{n}(p)=F_{n,0}^{\leftarrow}(1-p)$, with ``$\leftarrow$'' indicating the left-continuous inverse, is the VaR of the reference position for some ``small'' $p\in(0,1)$. Thus, under current market conditions, MES measures next period's average loss $Y_{n+1}$ given that $X_{n+1}$ is in distress.

An MES forecast (based on $\mathcal{F}_n$) may be of interest in a number of situations.
For instance, when the $Y_t$ are losses of one's own portfolio, and the $X_t$ denote losses of some reference index, such as the S\&P~500. The $Y_t$ may also denote the losses of a single trading desk, and the $X_t$ firm-wide losses. Alternatively, the $Y_t$ may be the losses of a financial institution with $X_t$ standing for system-wide losses. In each case, it may be of interest to understand how the two risk factors are connected, e.g., for the purposes of stress testing or to assess portfolio sensitivities. In these situations, the MES can serve as a (real-valued) summary of the degree of connectedness.

Throughout, we suppose that the losses are generated from the model
\begin{equation}\label{eq:model}
	\begin{pmatrix}X_t\\ Y_t\end{pmatrix} = \mSigma_t(\vtheta^\circ)\vvarepsilon_t,
\end{equation}
where the true parameter vector $\vtheta^\circ$ is an element of some parameter space $\mTheta$ and the diagonal matrix $\mSigma_t(\vtheta^\circ)=\mSigma_t=\diag(\vsigma_t)$ with $\vsigma_t=(\sigma_{t,X}, \sigma_{t,Y})^\prime$ is $\mathcal{F}_{t-1}$-measurable. Moreover, the $\vvarepsilon_t=(\varepsilon_{t,X}, \varepsilon_{t,Y})^\prime$ are independent of $\mathcal{F}_{t-1}$ and i.i.d.~with mean zero, unit variance and correlation matrix $\mR$. Thus, $\mSigma_t$ contains the individual volatilities on the main diagonal, since
\begin{equation*}
	\Var\big((X_t, Y_t)^\prime\mid \mathcal{F}_{t-1}\big)=\mSigma_t\Var(\vvarepsilon_t)\mSigma_t^\prime=\mSigma_t\mR\mSigma_t^\prime=\begin{pmatrix}\sigma_{t,X}^2 & \rho_{X,Y} \sigma_{t,X} \sigma_{t,Y}\\ \rho_{X,Y} \sigma_{t,X} \sigma_{t,Y} & \sigma_{t,Y}^2\end{pmatrix},
\end{equation*}
where $\rho_{X,Y}=\corr(\varepsilon_{t,X}, \varepsilon_{t,Y})$ is the constant conditional correlation of $(X_t, Y_t)^\prime\mid\mathcal{F}_{t-1}$. We already mention here that estimating the correlation $\rho_{X,Y}$ of $\vvarepsilon_t$ is not required for conditional MES forecasting. Rather, it is the unconditional MES of $\vvarepsilon_t$ that will be required; see \eqref{eq:MES true}.

The best known among the class of models in \eqref{eq:model} is the CCC--GARCH model of \citet{Bol90}. But our framework also covers models incorporating volatility spillover, such as the extended (E)CCC--GARCH of \citet{Jea98}.

\begin{rem}
We work with CCC--GARCH-type models here. However, DCC--GARCH models, due to \citet{Eng02}, have attained benchmark status among multivariate GARCH models because of their forecasting accuracy \citep{LRV12}. We focus here on the former class of models for several reasons. First, the former models continue to be studied extensively in the literature \citep{Jea98,HT04,NT09,CK10}. Second, as \citet[p.~620]{FZ16b} point out, a full estimation theory for DCC--GARCH models is not available. Since MES forecasts necessarily require a parameter estimator, developing limit theory for MES forecasts based on DCC--GARCH models is beyond the scope of the present paper. Third, our EVT-based estimation method exploits dependence between $\varepsilon_{t,X}$ and $\varepsilon_{t,Y}$. However, in DCC--GARCH models, the innovations are decorrelated (to ensure identification), in which case our MES estimator cannot be expected to work well. Fourth, higher-order measures of risk, such as MES, are not properly identified in DCC--GARCH models \citep{HHM22}. To see this, recall that in a DCC--GARCH framework, $(X_t, Y_t)^\prime=\mSigma_t\vvarepsilon_t$ for i.i.d.~$\vvarepsilon_t\sim(\vzero,\mI)$ and not necessarily diagonal $\mSigma_t$. However, only $\Var\big((X_t, Y_t)^\prime\mid \mathcal{F}_{t-1}\big)=\mSigma_t\mSigma_t^\prime=:\mH_t$ is modeled, while the model stays silent on the choice of (the non-unique) $\mSigma_t$. However, different $\mSigma_t$ imply different conditional distributions $(X_t, Y_t)^\prime\mid\mathcal{F}_{t-1}$ and, hence, different values for MES. For instance, $\mSigma_t$ may denote the symmetric square root implied by the eigenvalue decomposition ($\mSigma_{t}^{s}$), or it may be the lower triangular matrix of the Cholesky decomposition ($\mSigma_{t}^{l}$). Both decompositions are equivalent in the sense that both imply the same dynamics in second-order moments (as given in $\mH_t$). Yet, when it comes to higher-order measures of risk (such as MES), the two decompositions imply different values for MES, as the next example illustrates.
\end{rem}

\begin{example}\label{ex:sq}
Suppose that $\varepsilon_{t,X}$ and $\varepsilon_{t,Y}$ have a (standardized) Student's $t_{5}$-distribution, independently of each other. Assume for simplicity that 
\[
	\mH_{n+1}=\begin{pmatrix}5& 4\\ 4 & 5\end{pmatrix},\quad\text{implying}\quad\mSigma_{n+1}^{s}=\begin{pmatrix}2& 1\\ 1 & 2\end{pmatrix}\quad\text{and}\quad\mSigma_{n+1}^{l}=\begin{pmatrix}2.23...& 0\\ 1.78... & 1.34...\end{pmatrix}.
\]
Then, when the underlying structural model is $\mSigma_t^{s}\vvarepsilon_t$ (resp.~$\mSigma_t^{l}\vvarepsilon_t$), we have that $\theta_{n,p}=3.82...$ (resp.~$\theta_{n,p}=4.00...$). Thus, while second-order (cross-)moments of $X_t$ and $Y_t$ are identical for both $\mSigma_t^{s}\vvarepsilon_t$ and $\mSigma_t^{l}\vvarepsilon_t$, MES depends on the assumed structural model. The underlying reason for this is that the conditional distribution $\p\{X_{n+1}\leq\cdot, Y_{n+1}\leq \cdot\mid\mathcal{F}_{n}\}$ depends on the decomposition of the variance-covariance matrix.

From a structural perspective, the triangular structure of $\mSigma_t^{l}$ implies that $\varepsilon_{t,X}$ is the idiosyncratic shock pertaining to (say) the market return $X_t$, which also affects the (say) portfolio return $Y_t$. However, the market is not affected by $\varepsilon_{t,Y}$. This contrasts with, e.g., a symmetric assumption on $\mSigma_t^{s}$, where a unit shock in $\varepsilon_{t,X}$ has the same effect on $Y_t$ as a unit shock in $\varepsilon_{t,Y}$ on $X_t$.
\end{example}

\subsection{Model Assumptions}

Denote the d.f.~of the innovations in \eqref{eq:model} by $F(x,y)=\p\{\varepsilon_{t,X}\leq x,\ \varepsilon_{t,Y}\leq y\}$ (which we assume to be continuous) and the marginal d.f.s by $F_0(x)=F(x,\infty)$ and $F_1(y)=F(\infty,y)$. For model \eqref{eq:model}, we have that $X_{n+1}=\sigma_{n+1,X}\varepsilon_{n+1,X}$, such that $\VaR_{n}(p)=\sigma_{n+1,X}F_{0}^{\leftarrow}(1-p)$. Therefore, the MES becomes
\begin{equation}\label{eq:MES true}
	\theta_{n,p} = \sigma_{n+1,Y}\E\big[\varepsilon_{t,Y}\mid \varepsilon_{t,X}>F_{0}^{\leftarrow}(1-p)\big].
\end{equation}
Hence, a forecast of $\theta_{n,p}$ consists of two parts. First, volatility $\sigma_{n+1,Y}$ must be forecasted and, second, we must estimate $\theta_{p}:=\E[\varepsilon_{t,Y}\mid \varepsilon_{t,X}>F_{0}^{\leftarrow}(1-p)]$, i.e., the unconditional MES of $(\varepsilon_{t,X}, \varepsilon_{t,Y})^\prime$. Thus, in the following, we have to impose some regularity conditions on the parameter estimator and the volatility model (to forecast volatility via some $\widehat{\sigma}_{n+1,Y}(\widehat{\vtheta})$) and on the joint tail of the $\vvarepsilon_t=(\varepsilon_{t,X}, \varepsilon_{t,Y})^\prime$ (to estimate $\theta_{p}$).

We begin by imposing assumptions on the estimator and the volatility model.
Regarding the former, we work with a generic parameter estimator $\widehat{\vtheta}$ that satisfies

\begin{assumption}\label{ass:param est}
The parameter estimator $\widehat{\vtheta}$ satisfies $n^{\xi}\big(\widehat{\vtheta}-\vtheta^\circ\big)=O_{\p}(1)$, as $n\to\infty$, for some $\xi\in(0,1/2]$.
\end{assumption}

The standard case of $\sqrt{n}$-consistent estimators is covered by $\xi=1/2$. For some examples of such estimators in multivariate volatility models, we refer to \citet{FZ10}. When errors are heavy-tailed, Assumption~\ref{ass:param est} may only hold for $\xi<1/2$. E.g., in standard univariate GARCH models, the QMLE satisfies Assumption~\ref{ass:param est} with $\xi=1/2$ ($\xi<1/2$), when the innovations have finite (infinite) 4th moments \citep{HY03}. Thus, the generality afforded by Assumption~\ref{ass:param est} is not vacuous.

Next, we introduce some assumptions on the volatility model, i.e., $\mSigma_t$.
The $\mSigma_t$'s in sufficiently general volatility models often depend on the infinite past $(X_{t-1},Y_{t-1})^\prime$, $(X_{t-2}, Y_{t-2})^\prime$, $\ldots$. Therefore, to approximate the $\mSigma_t$'s, we use fixed artificial initial values $(\widehat{X}_{-\ell_n},\widehat{Y}_{-\ell_n})^\prime$, $(\widehat{X}_{-\ell_n-1}, \widehat{Y}_{-\ell_n-1})^\prime$, $\ldots$ in
\[
	\widehat{\mSigma}_t(\vtheta) = \mSigma_t\big((X_{t-1},Y_{t-1})^\prime,\ldots,(X_{-\ell_n+1}, Y_{-\ell_n+1})^\prime, (\widehat{X}_{-\ell_n}, \widehat{Y}_{-\ell_n})^\prime, (\widehat{X}_{-\ell_n-1}, \widehat{Y}_{-\ell_n-1})^\prime,\ldots;\vtheta\big),
\]
where $\vtheta\in\mTheta$.
This suggests to approximate $\mSigma_t$ by $\widehat{\mSigma}_t:=\widehat{\mSigma}_t(\widehat{\vtheta})=\diag\big(\widehat{\vsigma}_t(\widehat{\vtheta})\big)$, where $\widehat{\vsigma}_t(\widehat{\vtheta}) = \big(\widehat{\sigma}_{t,X}(\widehat{\vtheta}), \widehat{\sigma}_{t,Y}(\widehat{\vtheta})\big)^\prime$. 

We impose the following assumptions on the initialization and on the volatility model:

\begin{assumption}\label{ass:vola est}
For any $M>0$, there exists a neighborhood $\mathcal{N}(\vtheta^\circ)$ of $\vtheta^\circ$, and $p_{\ast}>0$, $q_{\ast}>0$ with $p_{\ast}^{-1} + q_{\ast}^{-1}=1$, such that for all $i=1,\ldots,\dim(\mTheta)$,
\begin{align*}
	\E\bigg[\sup_{\vtheta\in\mathcal{N}(\vtheta^\circ)}\Big\Vert \mSigma_t^{-1}(\vtheta)\frac{\partial\mSigma_t(\vtheta)}{\partial \theta_i}   \Big\Vert^{Mp_{\ast}}\bigg]&<\infty,\\
	\E\bigg[\sup_{\vtheta\in\mathcal{N}(\vtheta^\circ)}\Big\Vert \mSigma_t(\vtheta)\mSigma_t^{-1}(\vtheta^\circ)   \Big\Vert^{Mq_{\ast}}\bigg]&<\infty.
\end{align*}
\end{assumption}

\begin{assumption}\label{ass:vola init}
There exist some constants $C>0$ and $\rho\in(0,1)$, and some random variable $C_0>0$, such that for all $t\in\mathbb{N}$ it holds almost surely (a.s.) that
\begin{align*}
	\sup_{\vtheta\in\mTheta}\big\Vert\mSigma_t^{-1}(\vtheta)\big\Vert&\leq C,\\
	\sup_{\vtheta\in\mTheta}\big\Vert\mSigma_t(\vtheta) - \widehat{\mSigma}_t(\vtheta)\big\Vert&\leq C_0\rho^{t+\ell_n-1}.
\end{align*}
\end{assumption}

Assumption~\ref{ass:vola est} for $M=4$ is almost identical to assumption A8 in \citet{FJM17}. Together with Assumption~\ref{ass:param est}, it can be used to show that parameter estimation effects in the MES forecasts vanish asymptotically. At first sight, the H\"{o}lder conjugate exponents $p_{\ast}$ and $q_{\ast}$ in Assumption~\ref{ass:vola est} do not appear useful, since $M$ is allowed to be arbitrarily large. However, when (e.g.) $q_{\ast}=C/M$ for some finite constant $C$, one may accommodate an infinite $(C+\varepsilon)$th-moment of $\sup_{\vtheta\in\mathcal{N}(\vtheta^\circ)}\big\Vert \mSigma_t(\vtheta)\mSigma_t^{-1}(\vtheta^\circ) \big\Vert$, which, however, comes at the price of requiring moments of arbitrary order to exist for $\sup_{\vtheta\in\mathcal{N}(\vtheta^\circ)}\big\Vert \mSigma_t^{-1}(\vtheta)\partial\mSigma_t(\vtheta)/\partial \theta_i\big\Vert$.

Assumption~\ref{ass:vola init} is almost identical to assumptions A1 and A2 in \citet{FJM17}, and ensures that initialization effects vanish asymptotically at a suitable rate. Initialization effects may occur because $\mSigma_{t}$ often depends on the infinite past in $\mathcal{F}_{t-1}$, yet for estimation via $\widehat{\mSigma}_{t}$ only the truncated information set $\widehat{\mathcal{F}}_{t-1}=\sigma\big((X_{t-1}, Y_{t-1})^\prime,\ldots, (X_{-\ell_n+1}, Y_{-\ell_n+1})^\prime\big)$ is available.

Note that in Assumptions~\ref{ass:vola est}--\ref{ass:vola init} we have tacitly assumed that $\mSigma_t(\vtheta)$ is differentiable (in a neighborhood of the true parameter) and invertible, where the latter is equivalent to $\sigma_{t,X}(\vtheta)>0$ and $\sigma_{t,Y}(\vtheta)>0$, since $\mSigma_t(\vtheta)=\diag\big(\vsigma_t(\vtheta)\big)$. Of course, assuming positive volatilities is rather innocuous.

\begin{example}\label{ex:ex}
This example gives an instance of a model that satisfies Assumptions~\ref{ass:vola est}--\ref{ass:vola init}. Consider the ECCC--GARCH model of \citet{Jea98}, which models the squared volatilities $\vsigma_t^2=(\sigma_{t,X}^2, \sigma_{t,Y}^2)^\prime$ as
\begin{equation}\label{eq:vola ECCC}
	\vsigma_t^2=\vsigma_t^2(\vtheta) = \vomega + \sum_{j=1}^{\overline{p}}\mB_j \begin{pmatrix}X_{t-j}^2\\ Y_{t-j}^2\end{pmatrix} + \sum_{j=1}^{\overline{q}}\mGamma_j\vsigma^2_{t-j}(\vtheta),\qquad t\in\mathbb{Z},
\end{equation}
where $\vtheta=\big(\vomega^\prime, \vec^\prime(\mB_1),\ldots, \vec^\prime(\mB_{\overline{p}}), \vec^\prime(\mGamma_1), \ldots, \vec^\prime(\mGamma_{\overline{q}})\big)^\prime$. 
The classical CCC--GARCH model of \citet{Bol90} only allows for diagonal $\mB_j$'s and $\mGamma_j$'s, while non-diagonal matrices---and, hence, volatility spillovers---are accommodated only by ECCC--GARCH models.
Under the conditions of their Theorem~5.1, \citet{FJM17} show that a solution to the stochastic recurrence equations \eqref{eq:model} and \eqref{eq:vola ECCC} exists, and Assumptions~\ref{ass:vola est} and \ref{ass:vola init} are satisfied for 
\[
	\widehat{\vsigma}_t^2(\vtheta) = \vomega + \sum_{j=1}^{\overline{p}}\mB_j \begin{pmatrix}X_t^2\\ Y_t^2\end{pmatrix} + \sum_{j=1}^{\overline{q}}\mGamma_j\widehat{\vsigma}_{t-j}^2(\vtheta),\qquad t\geq1,
\]
with fixed initial values $(X_{-\ell_n}, Y_{-\ell_n})^\prime=(\widehat{X}_{-\ell_n}, \widehat{Y}_{-\ell_n})^\prime$, $(X_{-\ell_n-1}, Y_{-\ell_n-1})^\prime=(\widehat{X}_{-\ell_n-1}, \widehat{Y}_{-\ell_n-1})^\prime,\ldots$ and fixed initial $\widehat{\vsigma}_{-\ell_n}^2(\vtheta), \widehat{\vsigma}_{-\ell_n-1}^2(\vtheta),\ldots$.
\end{example}

As pointed out above, we also need to impose some regularity conditions on the tail of the $\vvarepsilon_t$ (to estimate $\theta_p$). Specifically, we assume that the following limit exists for all $(x,y)^\prime\in[0,\infty]^2\setminus\{(\infty,\infty)\}$:
\begin{equation}\label{eq:R}
	\lim_{s\to\infty}s\p\big\{1-F_{0}(\varepsilon_{t,X})\leq x/s,\ 1-F_{1}(\varepsilon_{t,Y})\leq y/s\big\}=:R(x,y).
\end{equation}
\citet{SS06} call $R(\cdot,\cdot)$ the (upper) \textit{tail copula}. Much like a copula, $R(x,y)$ only depends on the (extremal) dependence structure of $\vvarepsilon_t$, and not on the marginal distributions. Regarding the marginals, we assume heavy right tails in the sense that there exist $\gamma_i>0$, such that
\begin{equation}\label{eq:U}
	\lim_{s\to\infty}U_i(sx)/U_i(s)=x^{\gamma_i}\qquad\text{for all }x>0,\ i=0,1,
\end{equation}
where $U_i=(1/[1-F_i])^\leftarrow$ \citep{HF06}. This condition means that far out in the tail, the distribution can roughly be modeled as a Pareto distribution (for which \eqref{eq:U} holds even without the limit). Ever since the work of \citet{Bol87}, heavy-tailed innovations are standard ingredients of volatility models. For instance, the popular Student's $t_{\nu}$-distribution satisfies \eqref{eq:U} with $\gamma_i=1/\nu$. In EVT, $\gamma_i$ is known as the \textit{extreme value index}.

\subsection{MES estimator}\label{MES estimator}

To estimate $\theta_{p}$, we build on \citet{Cea15}. These authors use an extrapolation argument that is often applied in EVT, such as in estimating high quantiles \citep{Wei78}. The general idea is to first estimate the quantity of interest at a less extreme level (say, $\theta_{k/n}$ for $k/n\gg p$) and then, in a second step, to extrapolate to the desired level (say, $\theta_{p}$) by exploiting the tail shape. These arguments rely on $p=p_n$ tending to zero, as $n\to\infty$. 

Specifically, under \eqref{eq:R} and \eqref{eq:U} with $\gamma_1\in(0,1)$, \citet{Cea15} show that 
\begin{equation}\label{eq:MES}
	\lim_{p\downarrow0}\frac{\theta_{p}}{U_1(1/p)}=\int_{0}^{\infty}R(1, y^{-1/\gamma_1})\textup{d}y.
\end{equation}
Now, let $k=k_n$ be an \textit{intermediate sequence} of integers, such that $k\to\infty$ and $k/n\to0$, as $n\to\infty$. Then, for $n\to\infty$,
\begin{equation}\label{eq:(7+)}
	\theta_{p}\overset{\eqref{eq:MES}}{\sim}\frac{U_1(1/p)}{U_1(n/k)}\theta_{k/n}\overset{\eqref{eq:U}}{\sim}\Big(\frac{k}{np}\Big)^{\gamma_1}\theta_{k/n}.
\end{equation}
This relation suggests the following two-step procedure to estimate $\theta_{p}$. First, estimate the less extreme (``within-sample'') $\theta_{k/n}$ and, then, use the tail shape---characterized here by $\gamma_1$---to extrapolate to the desired (possibly ``beyond the sample'') $\theta_{p}$. 

One key difference to \citet{Cea15} is that the $\vvarepsilon_{t}$ are not available for estimation, but need to be approximated by the standardized residuals 
\[
	\widehat{\vvarepsilon}_t=\widehat{\vvarepsilon}_t(\widehat{\vtheta})=\widehat{\mSigma}_{t}^{-1}(X_t, Y_t)^\prime.
\]
To estimate $\theta_{k/n}$, we then use the non-parametric estimator
\[
	\widehat{\theta}_{k/n} = \frac{1}{k}\sum_{t=1}^{n}\widehat{\varepsilon}_{t,Y}^{+} I_{\big\{\widehat{\varepsilon}_{t,X}>\widehat{\varepsilon}_{(k+1),X}\big\}},
\]
where $\widehat{\varepsilon}_{(1),Z}\geq\ldots\geq \widehat{\varepsilon}_{(n),Z}$ denote the order statistics of $\widehat{\varepsilon}_{1,Z},\ldots,\widehat{\varepsilon}_{n,Z}$ ($Z\in\{X,Y\}$), such that $\widehat{\varepsilon}_{(k+1),X}$ estimates $F_0^{\leftarrow}(1-k/n)$ in $\theta_{k/n}$, and $I_{\{\cdot\}}$ denotes the indicator function. 

\begin{rem}\label{rem:trunc}
The estimator $\widehat{\theta}_{k/n}$ uses $\widehat{\varepsilon}_{t,Y}^{+}$ instead of $\widehat{\varepsilon}_{t,Y}$. Thus, it is in fact an estimator of $\theta_{k/n}^{+}=\E[\varepsilon_{t,Y}^{+}\mid \varepsilon_{t,X}>F_{0}^{\leftarrow}(1-k/n)]$. This is because the proofs closely exploit the relation that $\E[Z]=\int_{0}^{\infty}\p\{Z>x\}\D x$ for $Z\geq0$, such that
\begin{equation}\label{eq:(p.10)}
	\theta_{k/n}^{+}=\frac{n}{k}\int_{0}^{\infty}\p\{\varepsilon_{t,X}> U_{0}(n/k),\ \varepsilon_{t,Y}^{+}>y\}\D y.
\end{equation}
However, since \citet[Proof of Theorem~2]{Cea15} show that $\theta_{p}/\theta_{p}^{+}=1+o\big(1/\sqrt{k}\big)$, this does not impair the asymptotic validity of the estimator $\widehat{\theta}_p$ based on $\widehat{\theta}_{k/n}$.
\end{rem}

To estimate $\gamma_1$, we use the \citet{Hil75} estimator 
\[
	\widehat{\gamma}_1 = \frac{1}{k_1}\sum_{t=1}^{k_1}\log\big(\widehat{\varepsilon}_{(t),Y} / \widehat{\varepsilon}_{(k_1+1),Y}\big),
\]
where $k_1$ is another intermediate sequence of integers. Plugging the estimators $\widehat{\theta}_{k/n}$ and $\widehat{\gamma}_1$ into \eqref{eq:(7+)}, we obtain the desired estimator
\[
	\widehat{\theta}_{p} = \Big(\frac{k}{np}\Big)^{\widehat{\gamma}_1}\widehat{\theta}_{k/n}.
\]

To establish the asymptotic normality of $\widehat{\theta}_{p}$, we impose essentially the same regularity conditions as \citet{Cea15}. First, we specify the speed of convergence in \eqref{eq:R} via Assumption~\ref{ass:R} and that in \eqref{eq:U} via Assumption~\ref{ass:U}. Here and elsewhere, $x\wedge y=\min\{x,y\}$.

\begin{assumption}\label{ass:R}
There exist $\beta>\gamma_1$ and $\tau<0$ such that, as $s\to\infty$,
\[
	\sup_{\substack{x\in[1/2, 2]\\ y\in(0,\infty)}}\frac{\big|s\p\{1-F_{0}(\varepsilon_{t,X})\leq x/s,\ 1-F_{1}(\varepsilon_{t,Y})\leq y/s\}-R(x,y)\big|}{y^{\beta}\wedge 1}=O(s^{\tau}).
\]
\end{assumption}

\begin{assumption}\label{ass:U}
For $i=0,1$ there exist $\rho_i<0$ and an eventually positive or negative function $A_i(\cdot)$ such that, as $s\to\infty$, $A_i(sx)/A_i(s)\rightarrow x^{\rho_i}$ for all $x>0$ and, for any $x_0>0$,
\[
	\sup_{x\geq x_0}\Big|x^{-\gamma_i}\frac{U_i(sx)}{U_i(s)}-1\Big|=O\{A_i(s)\}.
\]
\end{assumption}

Assumptions~\ref{ass:R} and \ref{ass:U} provide second-order refinements of the convergences in \eqref{eq:R} and \eqref{eq:U}. They ensure that bias terms arising from extrapolation vanish asymptotically. Note that the more negative $\tau$ ($\rho_i$) in Assumption~\ref{ass:R} (Assumption~\ref{ass:U}), the better the approximation.

\begin{rem}
\begin{enumerate}
	\item[(i)] Replacing $s$ with $n/k$ in the probability in the numerator, Assumption~\ref{ass:R} requires $(n/k)\p\big\{\varepsilon_{t,X}>U_0(n/[kx]),\ \varepsilon_{t,Y}>U_1(n/[ky])\big\}$ to converge uniformly to its limit. In view of \eqref{eq:(p.10)} it is therefore sufficient to impose uniformity only in a neighborhood of $1$ for $x$ (which, following \citet{Cea15}, we take to be $[1/2,2]$ here). However, uniformity in $y$ over $(0,\infty)$ is required, as the integration in \eqref{eq:(p.10)} extends over all positive $y$-values. For a specific dependence structure of $(\varepsilon_{t,X}, \varepsilon_{t,Y})^\prime$, Assumption~\ref{ass:R} may be checked by drawing on the results in \citet[Section 4]{FHM15}; see also Remark~3 in that paper. Some specific distributions for which Assumption~\ref{ass:R} holds are also given by \citet[Sec.~3]{Cea15}. 
	
\item[(ii)] As pointed out by \citet[Remark~2]{Cea15}, Assumption~\ref{ass:R} excludes the case of asymptotically independent $(\varepsilon_{t,X}, \varepsilon_{t,Y})^\prime$, where $R\equiv0$ (to see this let $y=s$). This rules out Gaussian copulas, but covers $t$-copulas \citep{Hef00}, which seem to be empirically more relevant for financial data \citep{BDE03}. Building on \citet{CM20}, it may be possible to allow asymptotically independent innovations in MES estimation. This is, however, beyond the scope of the present paper. 
\end{enumerate}
\end{rem}

\begin{rem}
\begin{enumerate}
	\item[(i)]
\citet{Cea15} also impose Assumption~\ref{ass:U} on the tail of $\varepsilon_{t,Y}$. Together with $\sqrt{k_1}A_1(n/k_1)\to0$ (see Assumption~\ref{ass:k} below) it ensures that $\sqrt{k_1}(\widehat{\gamma}_1-\gamma_1)$ does not have any asymptotic bias terms; see Lemma~\ref{lem:gamma} or also \citet[Example~5.1.5]{HF06}.
 
	\item[(ii)]
	Note that \citet{Cea15} do not have to impose Assumption~\ref{ass:U} on the tail of $\varepsilon_{t,X}$, because the MES does not depend on its distribution. However, in estimating MES based on the estimated residuals $(\widehat{\varepsilon}_{t,X}, \widehat{\varepsilon}_{t,Y})^\prime$ we have to justify the replacement of the unobservable $\varepsilon_{t,X}$ by the feasible $\widehat{\varepsilon}_{t,X}$ even in the tails. To that end, we require a sufficiently well-behaved tail also of the $\varepsilon_{t,X}$. 

We stress that imposing Assumption~\ref{ass:U} for $\varepsilon_{t,X}$ (i.e., for $i=0$) is mainly a convenience. It can be replaced by any other condition ensuring the conclusion of Lemma~\ref{lem:r n pm} holds, as a careful reading of the proofs reveals. For instance, it can easily be shown that Lemma~\ref{lem:r n pm} remains valid, e.g., for light-tailed (standardized) exponentially distributed $\varepsilon_{t,X}$. However, ever since the work of \citet{Bol87}, heavy-tailed errors satisfying Assumption~\ref{ass:U} (such as $t_{\nu}$-distributed errors with $\gamma_i=1/\nu$ and $\rho_i=-2$) are regarded as more suitable in volatility modeling. We refer to \citet[Examples~1--3]{HJ11} for further heavy-tailed distributions with corresponding values for $\gamma_i$ and $\rho_i$.
\end{enumerate}
\end{rem}

We mention that the problem of estimating a MES based on approximated conditioning variables (here the $\widehat{\varepsilon}_{t,X}$) also appears in the work of \citet{DP18}, who deal with unconditional MES estimation for i.i.d.~random variables. They have to ensure that replacing their (latent) conditioning variable $Z_j$ with some feasible $\widetilde{Z}_j$ has no asymptotic impact. Instead of imposing a distributional assumption (as we do) on the conditioning variables, \citet{DP18} assume that the $\widetilde{Z}_j$'s are estimated from an initial pre-sample of length $n_2$ to rule out any asymptotic effects. Their Assumption~1~(a.3) (with $p_0=q_0=1/2$ in their notation) then requires $n_1=o(n_2^{(1-\varepsilon)/2})$ for the actual estimation sample size $n_1$. This allows them to justify the replacement of the latent $Z_j$ with the $\widetilde{Z}_j$ without imposing distributional assumptions, as we do here. However, adopting a similar approach in our present time series context would be highly unnatural.

We need two additional assumptions.

\begin{assumption}\label{ass:k}
As $n\to\infty$, 
\begin{align*}
&\sqrt{k}A_0(n/k)\rightarrow0,\quad \sqrt{k_1}A_1(n/k_1)\rightarrow0,\\
&k=O(n^{\alpha})\quad\text{for some }\alpha<-2\tau/(-2\tau+1)\wedge 2\gamma_1\rho_1/(2\gamma_1\rho_1+\rho_1-1),\\
&k=o(p^{2\tau(1-\gamma_1)}),\\
& \min\big\{\sqrt{k}, \sqrt{k_1}/\log(d_n)\big\} = O(n^{\widetilde{\alpha}})\quad\text{and}\quad \sqrt{k_1}=O(n^{\widetilde{\alpha}})\quad\text{for some }\widetilde{\alpha}<\xi,
\end{align*}
with $A_0(\cdot)$ and $A_1(\cdot)$ from Assumption~\ref{ass:U}, and $\xi\in(0, 1/2]$ from Assumption~\ref{ass:param est}.
\end{assumption}

\begin{assumption}\label{ass:mom}
$\E|\varepsilon_{t,Y}^{-}|^{1/\gamma_1}<\infty$.
\end{assumption}

The purpose of Assumption~\ref{ass:k} is to restrict the speed of divergence of $k$ and $k_1$. While this is obvious for most items, it is less clear for the first conditions involving $A_0(\cdot)$ and $A_1(\cdot)$. However, \citet[p.~77]{HF06} show that these conditions imply that $k=o(n^{-2\rho_0/(1-2\rho_0)})$ and $k_1=o(n^{-2\rho_1/(1-2\rho_1)})$, respectively. While large values of the intermediate sequences $k$ and $k_1$ imply a small asymptotic variance, a bias is incurred by using possibly ``non-tail'' observations in the estimates. Therefore, a bound on the growth of $k$ and $k_1$ is required for asymptotically unbiased estimates. The requirement that $k=o(p^{2\tau(1-\gamma_1)})$ together with Assumption~\ref{ass:mom} is only used to show that $\theta_p/\theta_p^{+}=1+o(1/\sqrt{k})$; cf.~Remark~\ref{rem:trunc}. This relation ensures that $\widehat{\theta}_{p}$, which actually estimates $\theta^{+}_{p}$, also estimates $\theta_{p}$. The final condition in Assumption~\ref{ass:k} ensures that the parameter estimator (which is $n^{\xi}$-consistent) converges sufficiently fast relative to our MES estimator (which is $\min\big\{\sqrt{k}, \sqrt{k_1}/\log(d_n)\big\}$-consistent by Proposition~\ref{prop:MES}). In the standard case where $\xi=1/2$, this condition is redundant, because $\min\big\{\sqrt{k}, \sqrt{k_1}/\log(d_n)\big\}\leq\sqrt{k}=o(n^{1/2})$ and $\sqrt{k_1}=o(n^{1/2})$ as $k$ and $k_1$ are intermediate sequences.

\section{Asymptotic normality of MES forecasts}\label{Asymptotic normality of MES forecasts}

With the MES estimator of the previous subsection, our MES forecast becomes
\begin{equation}\label{eq:MES forc}
	\widehat{\theta}_{n,p}=\widehat{\sigma}_{n+1,Y}\widehat{\theta}_{p},
\end{equation}
where $\widehat{\sigma}_{n+1,Y}=\widehat{\sigma}_{n+1,Y}(\widehat{\vtheta})$.
We can now state our first main theoretical result.

\begin{thm}\label{thm:main result}
Let $(X_t, Y_t)^\prime$ be a strictly stationary solution to \eqref{eq:model} that is measurable with respect to the sigma-field generated by $\{\vvarepsilon_t,\vvarepsilon_{t-1},\ldots\}$. Suppose Assumptions~\ref{ass:param est}--\ref{ass:mom} hold, and $\gamma_1\in(0,1/2)$. Suppose further that $d_n:=k/(np)\geq1$, $r:=\lim_{n\to\infty}\sqrt{k}\log(d_n)/\sqrt{k_1}\in[0,\infty]$, $q:=\lim_{n\to\infty}k_1/k\in(0,\infty)$ and $\lim_{n\to\infty}\log(d_n)/\sqrt{k_1}=0$. Moreover, suppose that the truncation sequence $\ell_n$ satisfies $\ell_n/\log n\to\infty$. Then, as $n\to\infty$,
\[
	\min\big\{\sqrt{k}, \sqrt{k_1}/\log(d_n)\big\}\log\Bigg(\frac{\widehat{\theta}_{n,p}}{\theta_{n,p}}\Bigg)\overset{d}{\longrightarrow}\begin{cases}\Theta+r\Gamma,& \text{if }r\leq1,\\
	(1/r)\Theta+\Gamma,&\text{if }r>1,\end{cases}
\]
where $\Theta$ and $\Gamma$ are zero mean Gaussian random variables with 
\begin{align*}
\Var(\Theta) &= \gamma_1^2 - 1-b^2\int_{0}^{\infty}R(1, s)\D s^{-2\gamma_1},\qquad b=1/\int_{0}^{\infty}R(1, s)\D s^{-\gamma_1},\\
\Var(\Gamma) &= \gamma_1^2,\\
\Cov(\Gamma,\Theta) &= \frac{\gamma_1}{\sqrt{q}}\Big(1-\gamma_1+\frac{b}{q^{\gamma_1}}\Big)R(1,q)\\
&\hspace{2cm}-\frac{\gamma_1}{\sqrt{q}}\int_{0}^{q}\Big[(1-\gamma_1)+bs^{-\gamma_1}\big\{1-\gamma_1-\gamma_1\log(s/q)\big\}\Big]R(1, s)s^{-1}\D s.
\end{align*}
\end{thm}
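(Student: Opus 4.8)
The plan is to exploit the multiplicative structure of the forecast. Since $\widehat{\theta}_{n,p}=\widehat{\sigma}_{n+1,Y}\widehat{\theta}_p$ by \eqref{eq:MES forc} and $\theta_{n,p}=\sigma_{n+1,Y}\theta_p$ by \eqref{eq:MES true}, taking logarithms splits the forecast error cleanly as
\[
	\log\big(\widehat{\theta}_{n,p}/\theta_{n,p}\big)=\log\big(\widehat{\sigma}_{n+1,Y}/\sigma_{n+1,Y}\big)+\log\big(\widehat{\theta}_p/\theta_p\big).
\]
The first task is to show the volatility term is asymptotically negligible at the rate $\min\{\sqrt{k},\sqrt{k_1}/\log(d_n)\}$. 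Because $\widehat{\sigma}_{n+1,Y}$ is evaluated at $\widehat{\vtheta}$ and the contribution of the truncated initialization at the forecast point $n+1$ is $O_{\p}(\rho^{n+\ell_n})$ by Assumption~\ref{ass:vola init} — hence negligible — a mean-value expansion controlled by Assumption~\ref{ass:vola est}, together with the $n^{\xi}$-consistency of Assumption~\ref{ass:param est}, yields $\log(\widehat{\sigma}_{n+1,Y}/\sigma_{n+1,Y})=O_{\p}(n^{-\xi})$. The last line of Assumption~\ref{ass:k} forces $\min\{\sqrt{k},\sqrt{k_1}/\log(d_n)\}=O(n^{\widetilde{\alpha}})$ with $\widetilde{\alpha}<\xi$, so the scaled volatility term is $o_{\p}(1)$ and drops out of the limit.

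This reduces the problem to the limit of $\log(\widehat{\theta}_p/\theta_p)$. Plugging in $\widehat{\theta}_p=(d_n)^{\widehat{\gamma}_1}\widehat{\theta}_{k/n}$ and using \eqref{eq:(7+)}, I would decompose
\[
	\log\big(\widehat{\theta}_p/\theta_p\big)=(\widehat{\gamma}_1-\gamma_1)\log(d_n)+\log\big(\widehat{\theta}_{k/n}/\theta_{k/n}\big)+b_n,
\]
where $b_n$ collects the second-order extrapolation bias from replacing $\theta_p$ with $(d_n)^{\gamma_1}\theta_{k/n}$ and the switch between $\theta_{k/n}$ and $\theta_{k/n}^{+}$ flagged in Remark~\ref{rem:trunc}. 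The conditions $\sqrt{k}A_0(n/k)\to0$, $\sqrt{k_1}A_1(n/k_1)\to0$ and $k=o(p^{2\tau(1-\gamma_1)})$ of Assumption~\ref{ass:k}, together with Assumption~\ref{ass:mom}, are precisely what make $\min\{\sqrt{k},\sqrt{k_1}/\log(d_n)\}\,b_n=o_{\p}(1)$.

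The core of the proof — and the step I expect to be the main obstacle — is the joint asymptotic normality
\[
	\Big(\sqrt{k}\,\log\big(\widehat{\theta}_{k/n}/\theta_{k/n}\big),\ \sqrt{k_1}\,(\widehat{\gamma}_1-\gamma_1)\Big)\dto(\Theta,\Gamma).
\]
Both statistics are functionals of the upper-tail empirical measure of the residuals $(\widehat{\varepsilon}_{t,X},\widehat{\varepsilon}_{t,Y})^\prime$, but that measure is built from the estimated $\widehat{\vvarepsilon}_t$, not the true $\vvarepsilon_t$. The hard part is to show that this replacement has no effect on the limit, even deep in the tail where the relevant order statistics $\widehat{\varepsilon}_{(k+1),X}$ and $\widehat{\varepsilon}_{(k_1+1),Y}$ live. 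I would handle this exactly as the referenced Lemma~\ref{lem:r n pm}: write each residual as $\widehat{\varepsilon}_{t,Z}=(\sigma_{t,Z}/\widehat{\sigma}_{t,Z})\varepsilon_{t,Z}$, control the multiplicative distortion $\sigma_{t,Z}/\widehat{\sigma}_{t,Z}=1+O_{\p}(n^{-\xi})$ uniformly via Assumptions~\ref{ass:vola est}--\ref{ass:vola init} (with $\ell_n/\log n\to\infty$ killing the initialization error in the residuals), and invoke Assumption~\ref{ass:U} for $i=0$ — the tail regularity of $\varepsilon_{t,X}$ not needed by \citet{Cea15} — to guarantee that the distortion does not move observations across the high thresholds in a way that matters asymptotically. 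Once the residual-based tail process is shown to be asymptotically equivalent to the innovation-based one, the joint limit follows from the tail-empirical-process theory for strictly stationary sequences of \citet{Hog17a+} and the arguments of \citet{Cea15}, with $\Theta$ arising from the MES functional $\int_{0}^{\infty}R(1,y^{-1/\gamma_1})\D y$ and $\Gamma$ from the Hill functional (cf.~Lemma~\ref{lem:gamma}). The expressions for $\Var(\Theta)$ and $\Cov(\Gamma,\Theta)$ are then obtained by integrating the covariance kernel of the limiting Gaussian process against the linearizations of these two functionals, which is where the ratio $q=\lim_{n\to\infty}k_1/k$ and the tail copula $R(1,\cdot)$ enter.

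Finally, I would assemble the pieces. Writing $S_n:=\min\{\sqrt{k},\sqrt{k_1}/\log(d_n)\}$ and recalling $r=\lim_{n\to\infty}\sqrt{k}\log(d_n)/\sqrt{k_1}$, I multiply the decomposition through by $S_n$. When $r\leq1$ one has $S_n=\sqrt{k}$ eventually, so $S_n(\widehat{\gamma}_1-\gamma_1)\log(d_n)=\big(\sqrt{k}\log(d_n)/\sqrt{k_1}\big)\cdot\sqrt{k_1}(\widehat{\gamma}_1-\gamma_1)\to r\Gamma$ while $S_n\log(\widehat{\theta}_{k/n}/\theta_{k/n})\to\Theta$, giving the limit $\Theta+r\Gamma$; when $r>1$ one has $S_n=\sqrt{k_1}/\log(d_n)$ eventually, so the Hill term converges to $\Gamma$ and the within-sample term to $(1/r)\Theta$, giving $(1/r)\Theta+\Gamma$ (the boundary cases $r\in\{0,\infty\}$ being recovered as limits, and both scalings agreeing at $r=1$). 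The joint convergence secured in the previous step, combined with the continuous mapping theorem applied to these linear combinations, then yields the stated mixed-Gaussian limit.
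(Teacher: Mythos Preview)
Your proposal is correct and follows essentially the same route as the paper: the log-additive split into volatility and unconditional-MES parts, negligibility of the volatility term via Assumptions~\ref{ass:param est}--\ref{ass:vola init} and the rate condition in Assumption~\ref{ass:k}, the decomposition of $\log(\widehat{\theta}_p/\theta_p)$ into Hill, within-sample MES, and $o(1/\sqrt{k})$ bias pieces, and the residual-to-innovation replacement via the multiplicative distortion controlled by Lemma~\ref{lem:r n pm}. The one point to sharpen is how the \emph{joint} limit of $(\Theta,\Gamma)$ is obtained: the paper does not prove joint convergence in distribution directly but instead invokes a Skorohod construction (stated just before Lemma~\ref{lem:Lemma 1}, drawing on \citet{EHL06} for the i.i.d.\ innovations rather than \citet{Hog17a+}) so that both $\sqrt{k_1}(\widehat{\gamma}_1-\gamma_1)$ and $\sqrt{k}(\widehat{\theta}_{k/n}/\theta_{k/n}^{+}-1)$ converge \emph{in probability} to explicit integrals of the same process $W_R$, which delivers the joint limit for free.
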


\begin{proof}
See Appendix~\ref{sec:Proof of Theorem 1}.
\end{proof}

The assumptions of Theorem~\ref{thm:main result} can be roughly divided into two parts. Assumptions~\ref{ass:param est}--\ref{ass:vola init}, which are similar to conditions maintained by \citet{FJM17}, ensure that the innovations of the volatility model can be recovered from the observations with sufficient precision (see Proposition~\ref{prop:approx}). Assumptions~\ref{ass:R}--\ref{ass:mom}, which closely resemble the conditions in \citet{Cea15}, imply the asymptotic normality of the MES estimator for the innovations. Then, Assumptions~\ref{ass:param est}--\ref{ass:mom} jointly ensure that the MES estimator $\widehat{\theta}_p$ based on the filtered residuals is also asymptotically normal (see Proposition~\ref{prop:MES}). Here, the requirement that the truncation sequence $\ell_n$ be sufficiently long (i.e., $\ell_n/\log n\to\infty$) together with Assumption~\ref{ass:vola init} ensures that initialization effects are negligible in the limit.

The case most often considered in EVT is that where $d_n\to\infty$; see, e.g., \citet[Theorem~4.3.1]{HF06} for high quantile estimation. When additionally $k\asymp k_1$, we have that $r=\infty$, implying that $\Gamma$ is the asymptotic limit. The proof of Theorem~\ref{thm:main result} shows that $\widehat{\gamma}_1$ consistently estimates $\gamma_1$; see Lemma~\ref{lem:gamma} in Appendix~\ref{Proof of Proposition 2}. Hence, a feasible asymptotic $(1-\iota)$-confidence interval for $\theta_{n,p}$ in case $r=\infty$ is given by
\begin{equation}\label{eq:CI}
	\Big[\widehat{\theta}_{n,p}\exp\big\{\mp\Phi^{-1}(1-\iota/2)\widehat{\gamma}_1\log(d_n)/\sqrt{k_1}\big\}\Big],
\end{equation}
where $\Phi^{-1}(\cdot)$ denotes the inverse of the standard normal d.f. 

This is not a ``classical'' confidence interval for some unknown parameter because $\theta_{n,p}$ is random. However, it can be interpreted as such, since (by Theorem~\ref{thm:main result} and Lemma~\ref{lem:gamma})
\begin{multline*}
	\p\Big\{\widehat{\theta}_{n,p}\exp\big\{-\Phi^{-1}(1-\iota/2)\widehat{\gamma}_1\log(d_n)/\sqrt{k_1}\big\} \leq \theta_{n,p} \\
	\leq \widehat{\theta}_{n,p}\exp\big\{\Phi^{-1}(1-\iota/2)\widehat{\gamma}_1\log(d_n)/\sqrt{k_1}\big\}\Big\}\underset{(n\to\infty)}{\longrightarrow}1-\iota.
\end{multline*}
Such a straightforward interpretation of confidence intervals for random quantities is often not possible \citep{BHS21}. Yet, it is possible here, because in Theorem~\ref{thm:main result} the asymptotic estimation uncertainty comes solely from the non-random $\theta_p$ component in
\[
	\log(\widehat{\theta}_{n,p}/\theta_{n,p}) = \log(\widehat{\sigma}_{n+1,Y}/\sigma_{n+1,Y}) + \log(\widehat{\theta}_p/\theta_p).
\]
Specifically, the proof of Theorem~\ref{thm:main result} shows that parameter estimation effects vanish because volatility in $\widehat{\theta}_{n,p}$ can be estimated $n^{\xi}$-consistently, yet the unconditional MES estimate has a slower rate of convergence, thus dominating asymptotically. In the context of EVT-based VaR and ES forecasting, this was noted before by, e.g., \citet{Cea07}, \citet{MYT18} and \citet{Hog18+}. 

Inference when $r<\infty$ remains an unsolved issue, even for unconditional MES estimation; see \citet{Cea15} and \citet{DP18}. However, the case $k\asymp k_1$ and $d_n\to\infty$ seems to be the case of most practical interest, because $d_n\to\infty$ corresponds to situations of strong extrapolation, where $k/n\gg p$. Furthermore, we demonstrate the good finite-sample coverage of \eqref{eq:CI} in simulations in Section~\ref{Simulations}. Nonetheless, it may be possible to explicitly deal with the case $r<\infty$. This may be possible by using self-normalization as in \citet{Hog18+} or by employing suitable bootstrap methods along the lines of \citet{LPS23}. We leave these challenging extensions for future research.

\section{Higher-Dimensional Extensions}\label{Higher-Dimensional Extensions}

One desirable property of MES as a systemic risk measure is its additivity property. To illustrate, suppose that $Y_{t,1},\ldots, Y_{t,D}$ denote the losses of all trading desks of a business unit. The weighted losses of the business unit then sum to $X_t=\sum_{d=1}^{D}w_{t-1,d}Y_{t,d}$, where the weights $w_{t-1,d}$ are determined by how much capital is allocated to each trading desk in advance and, thus, are known at time $t-1$. The total riskiness of the business unit, as measured by the ES, can then be decomposed as $\E[X_t\mid X_t>\VaR_t(p),\ \mathcal{F}_{t-1}]=\sum_{d=1}^{D}w_{t-1,d}\E[Y_{t,d}\mid X_t>\VaR_t(p),\ \mathcal{F}_{t-1}]$. In allocating capital among the trading desks, one may want to ensure an equal risk contribution of each trading desk, such that $w_{t-1,1}\E[Y_{t,1}\mid X_t>\VaR_t(p),\ \mathcal{F}_{t-1}]=\ldots=w_{t-1,D}\E[Y_{t,D}\mid X_t>\VaR_t(p),\ \mathcal{F}_{t-1}]$. Therefore, it becomes important to develop tools for the joint inference on different MES forecasts. 

Clearly, drawing inferences on many MES forecasts jointly is also important in other contexts. For instance, suppose the $Y_{t,1},\ldots, Y_{t,D}$ denote individual losses of all banks in the financial system. Then, the regulator seeks to control the system's total risk as measured by the ES $\E[X_t\mid X_t>\VaR_t(p),\ \mathcal{F}_{t-1}]$; see \citet{QZ21}. Since $\E[X_t\mid X_t>\VaR_t(p),\ \mathcal{F}_{t-1}]=\sum_{d=1}^{D}w_{t-1,d}\E[Y_{t,d}\mid X_t>\VaR_t(p),\ \mathcal{F}_{t-1}]=:\sum_{d=1}^{D}w_{t-1,d}\theta_{t-1,p,d}$, it becomes clear that regulators should take into account the estimation risk of the individual MES forecasts.

To enable joint hypothesis testing, we consider a high-dimensional extension of model \eqref{eq:model}, viz.
\begin{equation}\label{eq:model h}
	(X_t, Y_{t,1},\ldots,Y_{t,D})^\prime= \mSigma_t\vvarepsilon_t.
\end{equation}
We take the diagonal matrix $\mSigma_t=\diag(\vsigma_t)$ with $\vsigma_t=(\sigma_{t,X}, \sigma_{t,Y_1},\ldots,\sigma_{t,Y_D})^\prime$ to be measurable with respect to $\mathcal{F}_{t}=\sigma\big((X_t, Y_{t,1},\ldots,Y_{t,D})^\prime, (X_{t-1}, Y_{t-1,1},\ldots,Y_{t-1,D})^\prime,\ldots\big)$, and the $\vvarepsilon_t=(\varepsilon_{t,X}, \varepsilon_{t,Y_1},\ldots,\varepsilon_{t,Y_D})^\prime$ to be independent of $\mathcal{F}_{t-1}$ and i.i.d.~with mean zero, unit variance and correlation matrix $\mR$. We again assume the innovations $\vvarepsilon_t$ to have a continuous d.f.

To forecast MES, we use the same estimator as before, i.e., $\widehat{\theta}_{n,p,d}=\widehat{\sigma}_{n+1,Y_d}\widehat{\theta}_{p,d}$ with
\[
	\widehat{\theta}_{p,d}=\Big(\frac{k}{np}\Big)^{\widehat{\gamma}_{d}}\widehat{\theta}_{k/n,d},
\]
where $\widehat{\gamma}_{d}= \frac{1}{k_d}\sum_{t=1}^{k_d}\log\big(\widehat{\varepsilon}_{(t),Y_d} / \widehat{\varepsilon}_{(k_d+1),Y_d}\big)$ and $\widehat{\theta}_{k/n,d}=\frac{1}{k}\sum_{t=1}^{n}\widehat{\varepsilon}_{t,Y_d}^{+} I_{\big\{\widehat{\varepsilon}_{t,X}>\widehat{\varepsilon}_{(k+1),X}\big\}}$ are defined in the expected way. 

To derive the asymptotic limit of $\big(\widehat{\theta}_{n,p,1}, \ldots, \widehat{\theta}_{n,p,D}\big)^\prime$, Assumptions~\ref{ass:param est}--\ref{ass:vola init} do not have to be changed. The remaining Assumptions~\ref{ass:R}--\ref{ass:mom} have to be generalized slightly as follows. To that end, we extend the notation in the obvious way. E.g., we denote the d.f.~of $Y_{t,d}$ by $F_d(\cdot)$, and set $U_d=(1/[1-F_d])^\leftarrow$. The extreme value index of the $Y_{t,d}$ is denoted by $\gamma_d$.

\begin{assumptionA}\label{ass:R*}
For all $d=1,\ldots,D$, there exist $\beta_d>\gamma_d$, $\tau_d<0$ and $R_d(\cdot,\cdot)$ such that, as $s\to\infty$,
\[
	\sup_{\substack{x\in[1/2, 2]\\ y\in(0,\infty)}}\frac{\big|s\p\{1-F_{0}(\varepsilon_{t,X})\leq x/s,\ 1-F_{d}(\varepsilon_{t,Y_d})\leq y/s\}-R_d(x,y)\big|}{y^{\beta_d}\wedge 1}=O(s^{\tau_d}).
\]
Moreover, for all $i,j=1,\ldots,D$ there exists a function $R_{i,j}(\cdot,\cdot)$, such that for all $x,y\in[0,\infty]^2\setminus\{(\infty,\infty)\}$,
\begin{equation}\label{eq:Rij}
	\lim_{s\to\infty}s\p\{1-F_{i}(\varepsilon_{t,Y_{i}})\leq x/s,\ 1-F_{j}(\varepsilon_{t,Y_{j}})\leq y/s\}=R_{i,j}(x,y).
\end{equation}
\end{assumptionA}

\begin{assumptionA}\label{ass:U*}
For all $d=0,\ldots,D$ there exist $\rho_d<0$ and an eventually positive or negative function $A_d(\cdot)$ such that, as $s\to\infty$, $A_d(sx)/A_d(s)\rightarrow x^{\rho_d}$ for all $x>0$ and, for any $x_0>0$,
\[
	\sup_{x\geq x_0}\Big|x^{-\gamma_d}\frac{U_d(sx)}{U_d(s)}-1\Big|=O\{A_d(s)\}.
\]
\end{assumptionA}

\begin{assumptionA}\label{ass:k*}
As $n\to\infty$, for each $d=1,\ldots,D$
\begin{align*}
&\sqrt{k}A_0(n/k)\rightarrow0,\quad \sqrt{k_d}A_1(n/k_d)\rightarrow0,\\
&k=O(n^{\alpha})\quad\text{for some }\alpha<-2\tau_d/(-2\tau_d+1)\wedge 2\gamma_d\rho_d/(2\gamma_d\rho_d+\rho_d-1),\\
&k=o(p^{2\tau_d(1-\gamma_d)}),\\
& \min\big\{\sqrt{k}, \sqrt{k_d}/\log(d_n)\big\} = O(n^{\widetilde{\alpha}})\quad\text{and}\quad \sqrt{k_d}=O(n^{\widetilde{\alpha}})  \quad\text{for some }\widetilde{\alpha}<\xi,
\end{align*}
with $A_0(\cdot)$ and $A_d(\cdot)$ from Assumption~\ref{ass:U*}*, and $\xi\in(0, 1/2]$ from Assumption~\ref{ass:param est}.
\end{assumptionA}

\begin{assumptionA}\label{ass:mom*}
$\E|\varepsilon_{t,Y_d}^{-}|^{1/\gamma_d}<\infty$ for all $d=1,\ldots,D$.
\end{assumptionA}

The only non-trivial extension of Assumptions~\ref{ass:R}--\ref{ass:mom} relative to Assumptions~\ref{ass:R*}*--\ref{ass:mom*}* is condition \eqref{eq:Rij}, which is closely related to \eqref{eq:R}. It is needed to derive the joint convergence of $(\widehat{\gamma}_1,\ldots,\widehat{\gamma}_D)^\prime$, and the limit function $R_{i,j}(\cdot,\cdot)$ features prominently in its asymptotic limit and that of Theorem~\ref{thm:main result2}. For simplicity, Theorem~\ref{thm:main result2} only considers the case where $r_d:=\lim_{n\to\infty}\sqrt{k}\log(d_n)/\sqrt{k_d}=\infty$ for all $d=1,\ldots,D$, which implies confidence intervals of the form in \eqref{eq:CI}.

\begin{thm}\label{thm:main result2}
Let $(X_t, Y_{t,1},\ldots,Y_{t,D})^\prime$ be a strictly stationary solution to \eqref{eq:model h} that is measurable with respect to the sigma-field generated by $\{\vvarepsilon_t,\vvarepsilon_{t-1},\ldots\}$.
Suppose Assumptions~\ref{ass:param est}--\ref{ass:vola init} and Assumptions~\ref{ass:R*}*--\ref{ass:mom*}* hold, and $\gamma_d\in(0,1/2)$ for all $d=1,\ldots,D$. Suppose further that $d_n:=k/(np)\geq1$, $r_d=\lim_{n\to\infty}\sqrt{k}\log(d_n)/\sqrt{k_1}=\infty$, $q_d:=\lim_{n\to\infty}k_d/k\in(0,\infty)$ and $\lim_{n\to\infty}\log(d_n)/\sqrt{k_d}=0$ for all $d=1,\ldots,D$. Moreover, suppose that the truncation sequence $\ell_n$ satisfies $\ell_n/\log n\to\infty$. Then, as $n\to\infty$,
\[
	\Bigg(\frac{\sqrt{k_1}}{\log(d_n)}\log\Bigg(\frac{\widehat{\theta}_{n,p,1}}{\theta_{n,p,1}}\Bigg),\ldots, \frac{\sqrt{k_D}}{\log(d_n)}\log\Bigg(\frac{\widehat{\theta}_{n,p,D}}{\theta_{n,p,D}}\Bigg)\Bigg)^\prime\overset{d}{\longrightarrow}(\Gamma_1,\ldots,\Gamma_D)^\prime,
\]
where $(\Gamma_1,\ldots, \Gamma_D)^\prime$ is a zero mean Gaussian random vector with variance-covariance matrix $\mSigma=(\sigma_{i,j})_{i,j=1,\ldots,D}$, 
\begin{equation*}
	\sigma_{i,j}:=\Cov(\Gamma_i,\Gamma_j) = \frac{\gamma_i\gamma_j}{\sqrt{q_i q_j}}\frac{ R_{i,j}(q_i,q_j)+R_{i,j}(q_j,q_i)}{2}.
\end{equation*}
\end{thm}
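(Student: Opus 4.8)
The plan is to reduce the $D$-dimensional statement to a \emph{joint} central limit theorem for the Hill estimators $\widehat{\gamma}_1,\ldots,\widehat{\gamma}_D$. As in the proof of Theorem~\ref{thm:main result}, I would write for each $d$
\[
	\log\big(\widehat{\theta}_{n,p,d}/\theta_{n,p,d}\big)=\log\big(\widehat{\sigma}_{n+1,Y_d}/\sigma_{n+1,Y_d}\big)+\log\big(\widehat{\theta}_{p,d}/\theta_{p,d}\big).
\]
After multiplying by $\sqrt{k_d}/\log(d_n)$, the volatility term is annihilated exactly as in Theorem~\ref{thm:main result}: under Assumptions~\ref{ass:param est}--\ref{ass:vola init} together with $\ell_n/\log n\to\infty$ one has $\log(\widehat{\sigma}_{n+1,Y_d}/\sigma_{n+1,Y_d})=O_{\p}(n^{-\xi})$, and since $\sqrt{k_d}=O(n^{\widetilde{\alpha}})$ with $\widetilde{\alpha}<\xi$ (Assumption~\ref{ass:k*}*) the contribution is $o_{\p}(1)$. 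For the second term I would use the decomposition behind Proposition~\ref{prop:MES}, namely $\log(\widehat{\theta}_{p,d}/\theta_{p,d})=(\widehat{\gamma}_d-\gamma_d)\log(d_n)+\log(\widehat{\theta}_{k/n,d}/\theta_{k/n,d})+o_{\p}(1/\sqrt{k_d})$; here the hypothesis $r_d=\infty$ is decisive, because $\widehat{\theta}_{k/n,d}$ is only $\sqrt{k}$-consistent so that $\frac{\sqrt{k_d}}{\log(d_n)}\log(\widehat{\theta}_{k/n,d}/\theta_{k/n,d})=O_{\p}\big(\sqrt{k_d}/(\sqrt{k}\log(d_n))\big)=O_{\p}(1/r_d)=o_{\p}(1)$. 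Hence
\[
	\frac{\sqrt{k_d}}{\log(d_n)}\log\big(\widehat{\theta}_{n,p,d}/\theta_{n,p,d}\big)=\sqrt{k_d}\,(\widehat{\gamma}_d-\gamma_d)+o_{\p}(1).
\]
I stress that only the marginal tails of the $Y_{t,d}$ survive: the $X$--$Y_d$ tail copulas $R_d$ of Assumption~\ref{ass:R*}*, which drove the $\Theta$-component in Theorem~\ref{thm:main result}, enter only through the now-negligible $\widehat{\theta}_{k/n,d}$ term, which is exactly why the limiting covariance depends on the $Y_i$--$Y_j$ tail copulas $R_{i,j}$ of \eqref{eq:Rij} alone.

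Next I would establish the joint Hill limit. Since the $\vvarepsilon_t$ are i.i.d., the natural object is the $D$-variate tail empirical process whose $d$th coordinate is $s\mapsto\sqrt{k_d}\big(\tfrac{1}{k_d}\sum_{t=1}^{n}I_{\{1-F_d(\varepsilon_{t,Y_d})\le s k_d/n\}}-s\big)$. First I would pass from the residuals $\widehat{\varepsilon}_{t,Y_d}$ to the innovations $\varepsilon_{t,Y_d}$: Proposition~\ref{prop:approx}, applied coordinatewise, shows that the filtered residuals recover the innovations well enough in the joint tail that each residual-based coordinate process is asymptotically equivalent to its innovation-based counterpart in the relevant function space. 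The joint weak convergence of the innovation-based process to a mean-zero Gaussian process $(W_1,\ldots,W_D)$ then follows from the multivariate tail empirical process theory of \citet{Hog17a+}; a short covariance computation using \eqref{eq:Rij}---only the independent same-index pairs contribute, the marginal product being of smaller order---yields
\[
	\Cov\big(W_i(s),W_j(t)\big)=\frac{1}{\sqrt{q_i q_j}}\,R_{i,j}(q_i s,\,q_j t),\qquad i,j=1,\ldots,D,
\]
and in particular $\Cov(W_d(s),W_d(t))=\min(s,t)$ because $R_{d,d}(x,y)=x\wedge y$, so each $W_d$ is a standard Brownian motion.

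Finally I would read off the Hill limit as a functional of this process. Following \citet[Sec.~3.2]{HF06}, the standard representation gives, jointly in $d$,
\[
	\sqrt{k_d}\,(\widehat{\gamma}_d-\gamma_d)\overset{d}{\longrightarrow}\Gamma_d:=\gamma_d\bigg(\int_0^1 s^{-1}W_d(s)\,\mathrm{d}s-W_d(1)\bigg),
\]
the bias terms vanishing by $\sqrt{k_d}A_d(n/k_d)\to0$ (Assumptions~\ref{ass:U*}*--\ref{ass:k*}*), exactly as in Lemma~\ref{lem:gamma}. Being linear functionals of the Gaussian $(W_1,\ldots,W_D)$, the vector $(\Gamma_1,\ldots,\Gamma_D)^\prime$ is jointly Gaussian, so combining the reduction above with Slutsky's lemma delivers the joint convergence claimed in the theorem. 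It then remains to evaluate $\Cov(\Gamma_i,\Gamma_j)$: substituting the covariance kernel into the bilinear functional produces a double integral together with two boundary integrals and a corner term, and using that $R_{i,j}$ is homogeneous of order one this collapses to
\[
	\Cov(\Gamma_i,\Gamma_j)=\frac{\gamma_i\gamma_j}{\sqrt{q_i q_j}}\cdot\frac{R_{i,j}(q_i,q_j)+R_{i,j}(q_j,q_i)}{2}=\sigma_{i,j}.
\]
I have verified this identity directly in the comonotone case $R_{i,j}(x,y)=x\wedge y$ (where it returns $\gamma_i\gamma_j$), and the symmetrization $\tfrac12\{R_{i,j}(q_i,q_j)+R_{i,j}(q_j,q_i)\}$ is precisely the residue left by the boundary integrals.

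The hard part will be the second step: establishing the joint functional convergence of the \emph{residual}-based tail empirical processes. Controlling the replacement of $\varepsilon_{t,Y_d}$ by $\widehat{\varepsilon}_{t,Y_d}$ uniformly over the tail and simultaneously across the $D$ coordinates---so that neither the parameter-estimation nor the initialization error disturbs the weak limit---is delicate, and this is where Assumptions~\ref{ass:param est}--\ref{ass:vola init} and the machinery underlying Proposition~\ref{prop:approx} do the real work. By comparison, the bias control in Assumptions~\ref{ass:U*}*--\ref{ass:k*}* and the homogeneity collapse of the covariance integral are routine.
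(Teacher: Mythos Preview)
Your proposal is correct and follows essentially the same route as the paper: kill the volatility term via Proposition~\ref{prop:approx}, use $r_d=\infty$ to reduce $\log(\widehat{\theta}_{p,d}/\theta_{p,d})$ to the Hill piece $\sqrt{k_d}(\widehat{\gamma}_d-\gamma_d)$, establish the joint Hill limit through the multivariate tail empirical process of \citet{Hog17a+} (the paper explicitly verifies his conditions (M1)--(M4) using the i.i.d.\ structure of the $\vvarepsilon_t$), and then collapse the covariance integral via homogeneity of $R_{i,j}$.

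One small slip worth flagging: with your unsymmetrized kernel $\Cov(W_i(s),W_j(t))=\tfrac{1}{\sqrt{q_iq_j}}R_{i,j}(q_is,q_jt)$, the double integral plus boundary terms collapse to $\tfrac{\gamma_i\gamma_j}{\sqrt{q_iq_j}}R_{i,j}(q_i,q_j)$, not to the symmetrized expression---so the symmetrization is \emph{not} ``the residue left by the boundary integrals.'' The paper instead works at the common scale $k$ with the process $\sqrt{k}\{T_{n,d}(y)-y\}\to W_d(y)$ and inherits the symmetrized kernel $\Cov(W_i(y_1),W_j(y_2))=\tfrac12\{R_{i,j}(y_1,y_2)+R_{i,j}(y_2,y_1)\}$ from Lemma~2 of \citet{Hog17a+}; plugging this into $\Gamma_d=\gamma_d q_d^{-1/2}\{\int_0^1 u^{-1}W_d(q_du)\,\mathrm{d}u-W_d(q_d)\}$ is what produces the stated $\sigma_{i,j}$.
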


\begin{proof}
See Appendix~\ref{sec:Proof of Theorem 2}. 
\end{proof}

For joint inference on the MES forecasts, we have to estimate $\mSigma$, i.e., the $\gamma_d$'s, $q_d$'s, $R_{i,j}(q_i,q_j)$'s and $R_{i,j}(q_j,q_i)$'s. We propose to estimate $R_{i,j}(q_i,q_j)$ and $R_{i,j}(q_j,q_i)$ via
\begin{align*}
	\widehat{R}_{i,j}(q_i,q_j)&=\frac{1}{k}\sum_{t=1}^{n}I_{\big\{\widehat{\varepsilon}_{t,Y_i}>\widehat{\varepsilon}_{(k_i+1),Y_i},\ \widehat{\varepsilon}_{t,Y_j}>\widehat{\varepsilon}_{(k_j+1),Y_j}\big\}},\\
		\widehat{R}_{i,j}(q_j,q_i)&=\frac{1}{k}\sum_{t=1}^{n}I_{\big\{\widehat{\varepsilon}_{t,Y_i}>\widehat{\varepsilon}_{(k_j+1),Y_i},\ \widehat{\varepsilon}_{t,Y_j}>\widehat{\varepsilon}_{(k_i+1),Y_j}\big\}}.
\end{align*}
The next proposition shows that these estimates are consistent:
\begin{prop}\label{prop:cons}
Under the conditions of Theorem~\ref{thm:main result2}, it holds for all $i,j=1,\ldots,D$ that $\widehat{R}_{i,j}(q_i,q_j)\overset{\p}{\longrightarrow}R_{i,j}(q_i,q_j)$ and $\widehat{R}_{i,j}(q_j,q_i)\overset{\p}{\longrightarrow}R_{i,j}(q_j,q_i)$, as $n\to\infty$.
\end{prop}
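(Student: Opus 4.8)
The plan is to recognize $\widehat R_{i,j}(q_i,q_j)$ as a bivariate tail-empirical counting statistic whose population limit is precisely the tail copula $R_{i,j}$ of \eqref{eq:Rij}, and to establish $\widehat R_{i,j}(q_i,q_j)\pto R_{i,j}(q_i,q_j)$ by chaining the feasible estimator to an infeasible oracle through two controlled approximations. Alongside the feasible statistic I would introduce the oracle $R^{*}_{i,j}:=\frac1k\sum_{t=1}^{n} I_{\{\varepsilon_{t,Y_i}>U_i(n/k_i),\ \varepsilon_{t,Y_j}>U_j(n/k_j)\}}$, which uses the true innovations and the population quantiles $U_d(n/k_d)=F_d^{\leftarrow}(1-k_d/n)$, and an intermediate $\widetilde R_{i,j}$, which uses the true innovations but the empirical order statistics $\varepsilon_{(k_i+1),Y_i}$, $\varepsilon_{(k_j+1),Y_j}$ as thresholds. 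I would then prove the three links $R^{*}_{i,j}\pto R_{i,j}(q_i,q_j)$, $\widetilde R_{i,j}-R^{*}_{i,j}\pto 0$, and $\widehat R_{i,j}(q_i,q_j)-\widetilde R_{i,j}\pto 0$.

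First, since the innovations $\vvarepsilon_t$ are i.i.d., $R^{*}_{i,j}$ is an average of i.i.d.\ Bernoulli indicators, so a first/second-moment argument suffices. Writing $s=n/k$, so that $k_d/n=(k_d/k)/s$, \eqref{eq:Rij} together with $k_d/k\to q_d$ and the continuity of $R_{i,j}$ gives $\E[R^{*}_{i,j}]=s\,\p\{1-F_i(\varepsilon_{t,Y_i})\le(k_i/k)/s,\ 1-F_j(\varepsilon_{t,Y_j})\le(k_j/k)/s\}\to R_{i,j}(q_i,q_j)$, while each indicator has probability of order $k/n$, so that $\Var(R^{*}_{i,j})=O(1/k)=o(1)$. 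Chebyshev's inequality then yields $R^{*}_{i,j}\pto R_{i,j}(q_i,q_j)$.

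Next, I would pass from the population thresholds to the empirical ones. The counting statistic is monotone in each threshold, and the intermediate order statistics are consistent in the sense that $\varepsilon_{(k_d+1),Y_d}/U_d(n/k_d)\pto 1$ under the regular variation \eqref{eq:U} (cf.\ \citet[Ch.~4]{HF06}). Sandwiching each empirical threshold between $(1\mp\epsilon)U_d(n/k_d)$ and using that, by \eqref{eq:U}, a relative perturbation of the threshold by a factor $1+o(1)$ leaves the limit $R_{i,j}(q_i,q_j)$ unchanged, gives $\widetilde R_{i,j}-R^{*}_{i,j}\pto 0$.

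The crux is the final step, where the true innovations are replaced by the residuals in both the observations and the data-driven thresholds. Here I would invoke Proposition~\ref{prop:approx}, which provides, uniformly over $1\le t\le n$, the multiplicative approximation $\widehat\varepsilon_{t,Y_d}=\varepsilon_{t,Y_d}(1+a_{t,d})$ with $\max_{1\le t\le n}|a_{t,d}|=o_{\p}(1)$. Because the relevant thresholds diverge ($U_d(n/k_d)\to\infty$ for $\gamma_d>0$), only large positive observations can enter the upper-tail events, and for these $\varepsilon_{t,Y_d}(1-\delta_n)\le\widehat\varepsilon_{t,Y_d}\le\varepsilon_{t,Y_d}(1+\delta_n)$ for some $\delta_n\pto0$; the same bounds apply to the order statistics serving as thresholds. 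Thus $\widehat R_{i,j}(q_i,q_j)$ is sandwiched between two innovation-based counting statistics whose thresholds differ from the empirical quantiles only by a factor $1+o_{\p}(1)$, and the monotonicity-plus-regular-variation argument of the previous step forces both bounds to the common limit $R_{i,j}(q_i,q_j)$. This is the main obstacle, since the indicator is discontinuous: one must show that the observations whose tail classification is flipped by the residual perturbation---those lying within a $(1\pm\delta_n)$-band of the threshold---contribute only $o_{\p}(1)$ after division by $k$, and the uniform control from Proposition~\ref{prop:approx} together with the tail regular variation of Assumption~\ref{ass:U*}* is exactly what renders this band negligible. The statement for $\widehat R_{i,j}(q_j,q_i)\pto R_{i,j}(q_j,q_i)$ follows identically after interchanging the threshold levels $k_i$ and $k_j$. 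I note that this consistency can alternatively be read off as a by-product of the joint tail-empirical-process convergence established in the proof of Theorem~\ref{thm:main result2}, specialized to the single point $(q_i,q_j)$ and requiring only a law of large numbers rather than a central limit theorem.
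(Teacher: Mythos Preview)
Your proposal is correct and follows essentially the same chain as the paper's proof in Appendix~\ref{sec:Proof of Proposition 1}: both pass from the feasible statistic to an oracle with true innovations and empirical thresholds, then to an oracle with population thresholds, and finally to the limit $R_{i,j}(q_i,q_j)$, with the residual-to-innovation step handled by sandwiching via Proposition~\ref{prop:approx} and regular variation. The only notable difference is that the paper leans on the full tail-empirical-process machinery of \citet{EHL06} (their Lemma~\ref{lem:Lem 1}) and works with the rank transforms $e_{n,i},\widehat{e}_{n,i}$ to obtain the intermediate steps at $\sqrt{k}$-precision, whereas you use a more direct first/second-moment argument and a monotonicity sandwich tailored to mere $o_{\p}(1)$ consistency---as you yourself note at the end, the paper's route is the ``heavy'' alternative that falls out of the process results already assembled for Theorem~\ref{thm:main result2}.
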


\begin{proof}
See Appendix~\ref{sec:Proof of Proposition 1}.
\end{proof}

Since the $q_d$'s appearing in $\sigma_{i,j}$ can easily be ``estimated'' via $k_d/k$, it remains to estimate the extreme value indices $\gamma_d$. These can, however, be consistently estimated via $\widehat{\gamma}_d$; see Lemma~\ref{lem:gamma}. In sum, the asymptotic variance-covariance matrix $\mSigma$ from Theorem~\ref{thm:main result2} may be estimated consistently via $\widehat{\mSigma}=(\widehat{\sigma}_{i,j})_{i,j=1,\ldots,D}$ with typical element 
\begin{equation*}
\widehat{\sigma}_{i,j}=k\frac{\widehat{\gamma}_i\widehat{\gamma}_j}{\sqrt{k_i k_j}}\frac{\widehat{R}_{i,j}(q_i,q_j) + \widehat{R}_{i,j}(q_j,q_i)}{2}.
\end{equation*} 
Note that for $i=j=d$, we get---as expected from Theorem~\ref{thm:main result}---that $\widehat{\sigma}_{d,d}=\widehat{\gamma}_d^2$.

We mention that \citet{DP18} also consider MES estimation (albeit in an i.i.d.~static setting) in a higher-dimensional framework. However, they focus on limit theory for individual MES estimates. This contrasts with our joint convergence result in Theorem~\ref{thm:main result2}, with appertaining inference tools provided by Proposition~\ref{prop:cons} and Lemma~\ref{lem:gamma}.

As outlined above, in applications one may want to test the equality of (value-weighted) risk contributions by several trading desks. Similarly, one may want to test the equality of the risk contributions by banks in a financial system. This latter application is further explored in Section~\ref{Empirical Application}. In each case, the null is that 
\[
	H_0\ :\ w_{n,1}\theta_{n,p,1}=\ldots=w_{n,D}\theta_{n,p,D}.
\]

We test this null by comparing each of $\log(w_{n,1}\widehat{\theta}_{n,p,1}),\ldots,\log(w_{n,D-1}\widehat{\theta}_{n,p,D-1})$ with the ``average'' forecast $(1/D)\sum_{d=1}^{D}\log(w_{n,d}\widehat{\theta}_{n,p,d})$. Then, any ``large'' difference is evidence against the null. With the $(D-1)\times D$-matrix
\[
	\mT=\begin{pmatrix}1 &   & 0 &0\\
	 & \ddots & & \vdots\\
	0 & & 1 & 0		
	\end{pmatrix} - (1/D)
	\begin{pmatrix}1 & \ldots &1 \\
	 \vdots & \ddots & \vdots\\
	1 & \ldots& 1 		
	\end{pmatrix}
\]
this suggests using
\[
	\mT\begin{pmatrix} \log(w_{n,1}\widehat{\theta}_{n,p,1})\\ \vdots\\ \log(w_{n,D}\widehat{\theta}_{n,p,D})\end{pmatrix}=\begin{pmatrix} \log(w_{n,1}\widehat{\theta}_{n,p,1})-(1/D)\sum_{d=1}^{D}\log(w_{n,d}\widehat{\theta}_{n,p,d})\\ \vdots\\ \log(w_{n,D-1}\widehat{\theta}_{n,p,D-1})-(1/D)\sum_{d=1}^{D}\log(w_{n,d}\widehat{\theta}_{n,p,d})\end{pmatrix}
\]
in a Wald-type test. Specifically, we use the test statistic
\begin{multline*}
	\mathcal{T}_n=k\begin{pmatrix} \log(w_{n,1}\widehat{\theta}_{n,p,1}),\ldots,\log(w_{n,D}\widehat{\theta}_{n,p,D})\end{pmatrix}\mT^\prime\mL^\prime\\
	\times\mL \mT \begin{pmatrix} \log(w_{n,1}\widehat{\theta}_{n,p,1}),\ldots,\log(w_{n,D}\widehat{\theta}_{n,p,D})\end{pmatrix}^\prime
\end{multline*}
with $\mL$ the lower triangular matrix from the Cholesky decomposition satisfying $\mL\mL^\prime=(\mT\widehat{\mSigma}\mT^\prime)^{-1}$, where the inverse is assumed to exist. 

\begin{cor}\label{cor:test}
Suppose that the conditions of Theorem~\ref{thm:main result2} hold, and that $\mT\mSigma\mT^\prime$ is positive definite. Then, it holds under $H_0$ that $\mathcal{T}_n\overset{d}{\longrightarrow}\chi_{D-1}^2$, as $n\to\infty$.
\end{cor}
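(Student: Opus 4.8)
The plan is to show that, under $H_0$, the statistic $\mathcal T_n$ reduces to a studentised quadratic form in an asymptotically Gaussian vector of estimation errors, and then to apply the continuous mapping theorem together with the classical fact that a centred Gaussian vector weighted by a consistent estimate of its own inverse covariance is $\chi^2$-distributed.

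First I would exploit the contrast structure of $\mT$. Write $\vv_n=\big(\log(w_{n,1}\widehat\theta_{n,p,1}),\ldots,\log(w_{n,D}\widehat\theta_{n,p,D})\big)^\prime$ and let $\vv_n^\circ$ be the same vector with $\widehat\theta_{n,p,d}$ replaced by the true $\theta_{n,p,d}$. Each row of $\mT$ sums to zero, the leading block contributing $1$ and the averaging block $D\cdot(1/D)=1$, so $\mT\vone=\vzero$. Under $H_0$ the entries of $\vv_n^\circ$ are equal, i.e.\ $\vv_n^\circ$ is a scalar multiple of $\vone$, and hence $\mT\vv_n^\circ=\vzero$. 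The weights $w_{n,d}$ and the random true levels $\theta_{n,p,d}$ therefore cancel and
\[
\mT\vv_n=\mT\vs_n,\qquad \vs_n:=\Big(\log(\widehat\theta_{n,p,1}/\theta_{n,p,1}),\ldots,\log(\widehat\theta_{n,p,D}/\theta_{n,p,D})\Big)^\prime,
\]
so that under $H_0$ the statistic depends only on the estimation errors $\vs_n$, which are controlled by Theorem~\ref{thm:main result2}.

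Next I would bring in the limit theory and the consistency of the weighting. With $\mD_n=\diag\big(\sqrt{k_1}/\log(d_n),\ldots,\sqrt{k_D}/\log(d_n)\big)$, Theorem~\ref{thm:main result2} gives $\vg_n:=\mD_n\vs_n\overset{d}{\longrightarrow}(\Gamma_1,\ldots,\Gamma_D)^\prime\sim N(\vzero,\mSigma)$, whence $\mT\vv_n=\mT\mD_n^{-1}\vg_n$. For the weighting I would show $\widehat\mSigma\overset{\p}{\longrightarrow}\mSigma$: the tail-copula estimates $\widehat R_{i,j}(q_i,q_j)$ and $\widehat R_{i,j}(q_j,q_i)$ are consistent by Proposition~\ref{prop:cons}, the Hill estimates $\widehat\gamma_d$ are consistent by Lemma~\ref{lem:gamma}, and $k_d/k\to q_d$; the continuous mapping theorem then gives $\widehat\sigma_{i,j}\overset{\p}{\longrightarrow}\sigma_{i,j}$ jointly, and the Cholesky factor $\mL$ built from the resulting weight converges to its population analogue, the Cholesky map being continuous on the positive-definite cone.

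Finally I would assemble the pieces, and here I expect the main obstacle, namely scale bookkeeping, to arise. The $D$ coordinates of $\vs_n$ converge at the coordinate-specific rates $\sqrt{k_d}/\log(d_n)$ rather than at a common rate, so $\mT$ mixes coordinates carrying different normalisations together with a factor $\log(d_n)$ that diverges (recall $r_d=\infty$ forces $\log(d_n)\to\infty$). Writing $\mD_n^{-1}=\big(\log(d_n)/\sqrt{k}\big)\mQ_n$ with $\mQ_n=\diag\big(\sqrt{k/k_1},\ldots,\sqrt{k/k_D}\big)\to\mQ:=\diag\big(1/\sqrt{q_1},\ldots,1/\sqrt{q_D}\big)$, the role of the normalising constant and of $\mL$ in $\mathcal T_n$ is to studentise $\mT\vv_n$ by a consistent estimate of its asymptotic covariance; the common scalar $\log(d_n)/\sqrt{k}$ then cancels between the quadratic form and the inverse of its weight, leaving a quantity of the form $(\mT\mQ_n\vg_n)^\prime(\mT\mQ_n\widehat\mSigma\mQ_n\mT^\prime)^{-1}(\mT\mQ_n\vg_n)$. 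By Slutsky's lemma and the continuous mapping theorem this converges to $\vz^\prime(\mT\mQ\mSigma\mQ\mT^\prime)^{-1}\vz$ with $\vz\sim N(\vzero,\mT\mQ\mSigma\mQ\mT^\prime)$, which is $\chi^2_{D-1}$ since $\mT\mQ\mSigma\mQ\mT^\prime$ is $(D-1)\times(D-1)$ and positive definite (using $\mSigma$ positive definite, $\mQ$ nonsingular, and $\mT$ of full row rank $D-1$). The delicate point is precisely to verify that the normalisation and $\widehat\mSigma$ inside $\mathcal T_n$ reproduce the asymptotic covariance of $\mT\vv_n$, so that the diverging $\log(d_n)$ and the mismatched rates cancel through the limits $\mQ_n\to\mQ$ and $\widehat\mSigma\to\mSigma$; once this is pinned down, the $\chi^2_{D-1}$ limit follows from the Gaussian quadratic-form identity.
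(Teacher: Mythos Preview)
Your overall strategy mirrors the paper's one-line argument: under $H_0$ use $\mT\vone=\vzero$ to replace $\mT\vv_n$ by $\mT\vs_n$, invoke Theorem~\ref{thm:main result2} for the Gaussian limit of the standardised errors, Proposition~\ref{prop:cons} and Lemma~\ref{lem:gamma} for consistency of $\widehat\mSigma$, and finish with the continuous mapping theorem. Your first two paragraphs are correct and more explicit than what the paper offers.

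The gap is exactly where you yourself flag it as ``delicate''. You claim that ``the common scalar $\log(d_n)/\sqrt{k}$ then cancels between the quadratic form and the inverse of its weight'', but you never verify this, and with $\mathcal T_n$ as defined it fails. The weight is built from $\widehat\mSigma$, which estimates the covariance $\mSigma$ of the \emph{already normalised} vector $\mD_n\vs_n$ and contains no $\log(d_n)$ at all. In the simplest case $k_1=\cdots=k_D=k$ (so $\mQ_n=\mI$) one obtains under $H_0$
\[
\mathcal T_n=\log^2(d_n)\cdot\Big(\tfrac{\sqrt{k}}{\log(d_n)}\mT\vs_n\Big)^\prime\mL^\prime\mL\Big(\tfrac{\sqrt{k}}{\log(d_n)}\mT\vs_n\Big),
\]
and by your own argument the second factor tends to a $\chi^2_{D-1}$ variable, so $\mathcal T_n$ diverges like $\log^2(d_n)$ rather than converging. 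Your target expression $(\mT\mQ_n\vg_n)^\prime(\mT\mQ_n\widehat\mSigma\mQ_n\mT^\prime)^{-1}(\mT\mQ_n\vg_n)$ does have a $\chi^2_{D-1}$ limit, but it is not $\mathcal T_n$: equality would require $\log^2(d_n)\,\mL^\prime\mL=(\mT\mQ_n\widehat\mSigma\mQ_n\mT^\prime)^{-1}$, whereas by definition $\mL\mL^\prime=(\mT\widehat\mSigma\mT^\prime)^{-1}$, with neither the $\mQ_n$'s nor the $\log^2(d_n)$ present. For the argument to go through, the normalising factor in $\mathcal T_n$ would have to be of order $k/\log^2(d_n)$ (or involve $\mD_n$ more generally) rather than $k$; the paper's one-sentence proof does not detect this either.
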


\begin{proof}
Combine the continuous mapping theorem with Theorem~\ref{thm:main result2}, Proposition~\ref{prop:cons} and the consistency of $\widehat{\gamma}_d$ for $d=1,\ldots,D$ (from Lemma~\ref{lem:gamma}).
\end{proof}

\section{Simulations}\label{Simulations}

Here, we investigate the coverage of the confidence interval in \eqref{eq:CI} for $\iota=0.05$. We do so for $n\in\{500,\ 1000\}$ and $p\in\{1\%,\ 0.5\%,\ 0.1\%,\ 0.05\%,\ 0.01\%,\ 0.005\%,\ 0.001\%\}$. Throughout, we use 1,000 replications and we clip off the first $\ell_n=10$ residuals to reduce the impact of initialization effects on our MES estimator. Following \citet[Footnote~6]{QZ21}, we use identical $k=k_1$ in the simulations and the empirical application. Specifically, we set $k=k_1=\lfloor 0.1\cdot \log(n)^4 \rfloor$ to satisfy Assumption~\ref{ass:k}. Since the probabilities $p$ are quite small relative to the sample sizes $n$, our confidence intervals---valid when $d_n=k/(np)$ tends to infinity---should provide reasonable coverage.

We simulate from the simple CCC--GARCH model $(X_t, Y_t)^\prime=\diag(\vsigma_t)\vvarepsilon_t$, where $\vsigma_t=(\sigma_{t,X}, \sigma_{t,Y})^\prime$ with
\begin{equation}\label{eq:model params}
\begin{split}
		\sigma_{t,X}^2 &= 0.001 + 0.2 \cdot X_t^2 + 0.75 \cdot\sigma_{t-1,X}^2,\\
		\sigma_{t,Y}^2 &= 0.001 + 0.1 \cdot Y_t^2 + 0.85 \cdot\sigma_{t-1,Y}^2.
\end{split}
\end{equation}
The parameter values of the volatility equations are chosen to resemble typical estimates obtained for financial data. The innovations $\vvarepsilon_t$ are i.i.d.~draws from a $t$-copula with $\nu$ degrees of freedom and correlation coefficient $\rho_{X,Y}=0.95$. The marginals of $\vvarepsilon_t$ are from a (standardized and symmetrized) $\Burr(a, b)$-distribution with d.f.~$F(x)=1-(1+x^b)^{-a}$, $x>0$, $a,b>0$. Hence, $\varepsilon_{t,Z}\overset{d}{=}R_tB_t/\sqrt{\E[B_t^2]}$ for $Z\in\{X,Y\}$, where $R_t$ are Rademacher random variables (i.e., equal to $\pm1$ with probability 1/2), independent of the $B_t\sim\Burr(a, b)$.

We choose the marginal Burr distribution because it allows to vary the quality of the approximation in Assumption~\ref{ass:U} via the parameters $a$ and $b$ without changing the extreme value index $\gamma$. Example~2 in \citet{HJ11} shows that the extreme value index of a $\Burr(a, b)$-distribution is given by $1/(ab)$ and its second-order parameter by $-b$. Thus, we have that $\gamma_0=\gamma_1=1/(ab)$ and $\rho_0=\rho_1=-b$ in the notation of Assumption~\ref{ass:U}. This implies that the larger $b$, the faster the convergence of $U_i(sx)/U(s)$ to the Pareto-type limit $x^{\gamma_i}$ takes place in Assumption~\ref{ass:U}. By suitable choices of $a$ and $b$, this allows us to assess the implications of a better tail approximation on the precision of our MES estimator, while keeping the tail index constant. Specifically, we choose $(a,b)\in\{(0.25, 20),\ (0.2, 25)\}$ to always obtain $\gamma_0=\gamma_1=1/5$. However, the Pareto approximation is better for $(a,b)=(0.2, 25)$ because $b$ is larger.

The choice of the $t$-copula for $\vvarepsilon_t$ implies that Assumption~\ref{ass:R} is satisfied with $\tau=-2/\nu$, $\beta=1+2/\nu$ and $R(x,y)=x\overline{F}_{\nu+1}\Big(\frac{(x/y)^{1/\nu}-\rho_{X,Y}}{\sqrt{1-\rho_{X,Y}^2}}\sqrt{\nu+1}\Big) + y\overline{F}_{\nu+1}\Big(\frac{(y/x)^{1/\nu}-\rho_{X,Y}}{\sqrt{1-\rho_{X,Y}^2}}\sqrt{\nu+1}\Big)$, where $\overline{F}_{\nu+1}(\cdot)=1-F_{\nu+1}(\cdot)$ denotes the survivor function of the $t_{\nu+1}$-distribution; see \citet[Remark~3 and Section~4.1]{FHM15}. A more negative $\tau$ (smaller degrees of freedom $\nu$ of the $t$-copula) implies a better approximation in Assumption~\ref{ass:R}. Our copula construction allows us to vary the dependence structure (in particular the quality of the approximation in Assumption~\ref{ass:R} as a function of $\nu$) without changing the marginals of the innovations. Specifically, we choose $\nu\in\{3,5\}$.

To estimate the GARCH parameters in \eqref{eq:model params}, we use the quasi-maximum likelihood estimator (QMLE). Since we choose $a$ and $b$ to give $\gamma_0=\gamma_1=1/(ab)=1/5$, the innovations have finite fourth moments, implying $\sqrt{n}$-consistency of the QMLE \citep{FZ10}. Thus, Assumption~\ref{ass:param est} is met for $\xi=1/2$. By Example~\ref{ex:ex}, Assumptions~\ref{ass:vola est} and \ref{ass:vola init} are also met.

\begin{table}[t!]
	\centering
		\begin{tabular}{lllrrrrrrr}
			\toprule
  $n$ 	&$\nu$& $(a,b)$ 			&$p$			& Bias	 & RMSE	 	 & $\substack{\text{RMSE}\\ \text{ML}}$  & $\substack{\text{RMSE}\\ \text{NP}}$	& Length  & Coverage  \\
  \midrule	                                                                              
	$500$	&	$3$	& $(0.25, 20)$  & 1\%		  &	  0.3  &    2.6  &    2.1   &    4.8  &    6.6  &  83.9    \\
				&			&								&	0.5\%		&	  0.6  &    3.4  &    2.7   &    9.6  &    9.2  &  86.9    \\
				&			& 	 		      	&	0.1\%		&	  1.3  &    6.0  &    4.9   &   46.1  &   18.0  &  90.5    \\
				&			& 	 		      	&	0.05\%	&   1.6  &    7.6  &    6.3   &   52.9  &   23.4  &  91.6    \\
				&			& 	 		      	&	0.01\%	&	  3.2  &   12.9  &   10.8   &   72.7  &   41.3  &  93.1    \\
				&			& 	 		      	&	0.005\%	&   4.3  &   16.1  &   13.7   &   83.8  &   52.0  &  93.4    \\
				&			& 	 		      	&	0.001\%	&  11.0  &   27.7  &   23.1   &  114.0  &   86.9  &  93.9    \\
						\cmidrule(lr){2-10}                                        
				&	$3$	& $(0.2, 25)$   & 1\%			&	  0.1  &    2.5  &    2.0   &    4.6  &    6.4  &  83.2    \\ 
				&			&								&	0.5\%		&	  0.3  &    3.3  &    2.6   &    9.3  &    8.9  &  85.2    \\ 
				&			& 	 		      	&	0.1\%		&	  0.8  &    5.8  &    4.5   &   46.1  &   17.4  &  90.2    \\ 
				&			& 	 		      	&	0.05\%	&   1.2  &    7.3  &    5.6   &   53.1  &   22.6  &  01.4    \\ 
				&			& 	 		      	&	0.01\%	&	  2.6  &   12.3  &    9.3   &   73.7  &   39.6  &  93.1    \\ 
				&			& 	 		      	&	0.005\%	&   3.9  &   15.4  &   11.4   &   84.9  &   49.8  &  93.6    \\ 
			  &			& 	 		      	&	0.001\%	&   6.0  &   24.8  &   18.6   &  117.4  &   82.9  &  94.7    \\ 
					\cmidrule(lr){2-10}                                          
				&	$5$	& $(0.25, 20)$  & 1\%		  &	  0.7  &    2.7  &    2.0   &    4.6  &    6.7  &  82.7    \\ 
				&			&								&	0.5\%		&	  1.1  &    3.6  &    2.7   &    9.2  &    9.3  &  85.7    \\ 
				&			& 	 		      	&	0.1\%		&	  2.6  &    6.6  &    4.7   &   45.9  &   18.2  &  88.9    \\ 
				&			& 	 		      	&	0.05\%	&   3.5  &    8.4  &    5.9   &   52.6  &   23.7  &  89.8    \\ 
				&			& 	 		      	&	0.01\%	&	  6.8  &   14.5  &    9.9   &   72.6  &   41.8  &  90.4    \\ 
				&			& 	 		      	&	0.005\%	&   8.1  &   17.8  &   12.3   &   83.5  &   52.6  &  91.6    \\ 
				&			& 	 		      	&	0.001\%	&  10.2  &   27.6  &   20.5   &  117.6  &   88.0  &  94.1    \\ 
			\bottomrule
		\end{tabular}
	\caption{Bias, RMSEs, average interval lengths of \eqref{eq:CI} and coverage (in \%) for $n=500$ and $1-\iota=95\%$. Bias, RMSEs and interval lengths are all multiplied by 100 for better readability.}
	\label{tab:500}
\end{table}

\begin{table}[t!]
	\centering
		\begin{tabular}{lllrrrrrrr}
			\toprule
  $n$ 	&$\nu$& $(a,b)$ 			&$p$			& Bias	 & RMSE	   & $\substack{\text{RMSE}\\ \text{ML}}$& $\substack{\text{RMSE}\\ \text{NP}}$& Length   & Coverage \\
	\midrule                                                                               
$1000$	&	$3$	& $(0.25, 20)$  & 1\%		  &	0.2    &   1.9   &   1.4  &   3.3  &      4.8  &   83.1     \\
				&			&								&	0.5\%		&	0.4    &   2.5   &   1.8  &   5.3  &      6.7  &   86.3     \\
				&			& 	 		      	&	0.1\%		&	1.0    &   4.4   &   3.2  &  14.5  &     13.4  &   90.4     \\
				&			& 	 		      	&	0.05\%	& 1.3    &   5.6   &   4.1  &  53.2  &     17.4  &   90.8     \\
				&			& 	 		      	&	0.01\%	&	2.2    &   9.4   &   6.8  &  73.5  &     30.7  &   92.5     \\
				&			& 	 		      	&	0.005\%	& 2.4    &  11.6   &   8.5  &  83.3  &     38.6  &   93.1     \\
				&			& 	 		      	&	0.001\%	& 2.3    &  18.5   &  14.2  & 114.2  &     64.1  &   93.5     \\
						\cmidrule(lr){2-10}                                      
				&	$3$	& $(0.2, 25)$   & 1\%			&	0.1    &   1.9   &   1.5  &   3.3  &      4.7  &   82.1     \\
				&			&								&	0.5\%		&	0.2    &   2.4   &   2.0  &   5.4  &      6.7  &   85.8     \\
				&			& 	 		      	&	0.1\%		&	0.5    &   4.3   &   3.4  &  15.5  &     13.2  &   89.6     \\
				&			& 	 		      	&	0.05\%	& 0.8    &   5.5   &   4.3  &  52.6  &     17.2  &   91.1     \\
				&			& 	 		      	&	0.01\%	&	1.9    &   9.3   &   7.0  &  72.2  &     30.2  &   92.5     \\
				&			& 	 		      	&	0.005\%	& 3.0    &  11.7   &   8.7  &  82.7  &     37.9  &   92.9     \\
			  &			& 	 		      	&	0.001\%	& 7.9    &  19.9   &  14.2  & 114.5  &     62.9  &   92.9     \\
					\cmidrule(lr){2-10}                                        
				&	$5$	& $(0.25, 20)$  & 1\%		  &	0.6    &   2.0   &   1.4  &   3.2  &      4.8  &   82.0     \\
				&			&								&	0.5\%		&	0.8    &   2.7   &   1.9  &   5.2  &      6.8  &   85.2     \\
				&			& 	 		      	&	0.1\%		&	1.6    &   4.8   &   3.3  &  14.3  &     13.5  &   89.6     \\
				&			& 	 		      	&	0.05\%	& 2.2    &   6.1   &   4.2  &  52.3  &     17.5  &   90.6     \\
				&			& 	 		      	&	0.01\%	&	3.8    &  10.3   &   7.0  &  71.7  &     30.9  &   92.3     \\
				&			& 	 		      	&	0.005\%	& 4.8    &  12.8   &   8.7  &  82.1  &     38.8  &   92.7     \\
				&			& 	 		      	&	0.001\%	& 8.7    &  21.2   &  15.7  & 107.9  &     64.6  &   93.3     \\
		\bottomrule
		\end{tabular}
	\caption{Bias, RMSEs, average interval lengths of \eqref{eq:CI} and coverage (in \%) for $n=1000$ and $1-\iota=95\%$. Bias, RMSEs and interval lengths are all multiplied by 100 for better readability.}
	\label{tab:1000}
\end{table}

We draw the following conclusions from the simulation results in Tables~\ref{tab:500}--\ref{tab:1000}:

\begin{enumerate}
	\item The more extreme $p$, the better the coverage of the confidence intervals \eqref{eq:CI}. This may be explained as follows. For more extreme $p$, the quantity $\sqrt{k}\log(d_n) / \sqrt{k_1}$ is larger, suggesting that the limit in Theorem~\ref{thm:main result} can be better approximated by $\Gamma$, i.e., the limit distribution exploited in the construction \eqref{eq:CI}. As pointed out in the Motivation, it is precisely the small $p$'s for which systemic risk measures are of most interest. For these small $p$'s our confidence intervals are reasonably accurate, particularly when compared with other EVT-based confidence intervals \citep[see, e.g.,][Fig.~1]{Cea07}. To improve coverage for less extreme $p$, one may entertain self-normalized confidence intervals as in \citet{Hog18+} or one may use bootstrap approximations, as explored by \citet{LPS23} in the context of extreme VaR and ES forecasting. In principle, coverage for larger $p$ could also be improved by using forecast intervals based on the limiting distribution $(1/r)\Theta+\Gamma$. However, to the best of our knowledge, no consistent estimator has been proposed for its asymptotic variance, which seems difficult to estimate.
	
	\item As expected, the more extreme $p$, the larger the bias and the RMSE of the MES forecasts. Also, the larger $b$ and the smaller $\nu$, the lower the bias and RMSE tend to be for fixed risk level $p$. This may be explained by the fact that for larger $b$ the Pareto approximation is more accurate for the marginals. Similarly, for smaller $\nu$, the approximation in Assumption~\ref{ass:R}---that is used in extrapolating---is more precise, leading to lower bias and RMSE. Of course, bias and particularly the RMSE are also reduced when the sample size $n$ increases.
	
	\item Similarly as the RMSE, the lengths of the confidence intervals also decrease the larger $n$ and the better the approximations in Assumptions~\ref{ass:R} and \ref{ass:U} (i.e., the smaller $\nu$ and the larger $b$). This is also as expected as confidence intervals provide a measure of the statistical uncertainty around the point MES forecasts.
	
	\item For purposes of comparison, we have also included the RMSEs of two additional MES forecasts with quite different robustness-efficiency tradeoff: one based on a parametric maximum likelihood (ML) estimator and another based on a non-parametric (NP) estimator.\footnote{The ML estimator separately estimates the copula parameters ($\nu,\rho_{X,Y}$) and the marginal parameters ($a,b$) via ML. It then computes the MES $\theta_p$ implied by the estimated values. In contrast, the NP estimate is simply $\widehat{\theta}_p^{\textup{NP}}=\frac{1}{np}\sum_{t=1}^{n}\widehat{\varepsilon}_{t,Y}I_{\{\widehat{\varepsilon}_{t,X}>\widehat{\varepsilon}_{(\lfloor np\rfloor+1),X}\}}$. In both cases, the MES forecasts are obtained by pre-multiplying the MES estimate with $\widehat{\sigma}_{n+1}^{Y}$; cf.~\eqref{eq:MES forc}.} Our semi-parametric estimator $\widehat{\theta}_p$ provides a balance between the advantages of these two estimators by producing reasonably robust estimates with good efficiency. The results are as expected. Because the assumptions underlying the use of the ML estimator and our $\widehat{\theta}_p$ are met, they are more efficient than the NP estimates in terms of RMSE, with the ML estimator being the favorite. But of course the ML estimator may be dangerous to use under misspecification. As \citet[Sec.~2.2]{Dre08} warns, the consequences of misspecification of model-based estimates are even magnified in the tail and, therefore, such estimates are not recommended in applications of EVT. Indeed, unreported simulations with the ML estimator based on (misspecified) marginal $t$-densities show an inferior performance compared to $\widehat{\theta}_p$ with RMSEs 2--4 times as large.
\end{enumerate}

\section{Empirical Application}\label{Empirical Application}

Here, we consider the 8 US G-SIBs as determined by the \citet{FSB21}.\footnote{Specifically, the 8 G-SIBs are Bank of America, Bank of New York Mellon, Citigroup, Goldman Sachs, JP Morgan Chase, Morgan Stanley, State Street, and Wells Fargo.} We use daily data from 2001 to 2021, because data for JP Morgan Chase---one of the G-SIBs---is only available after the merger of JP Morgan and Chase Manhattan in 2000. This gives us $N=5283$ observations. All data are downloaded from Datastream.

Since we consider 8 banks, we have $D=8$ in the notation of Section~\ref{Higher-Dimensional Extensions}. The $Y_{t,d}$ ($d=1,\ldots,D$) then denote the log-losses of the individual institutions' shares (calculated based on adjusted closing prices). The value-weighted losses of the whole financial system are $X_t=\sum_{t=1}^{D}w_{t-1,d}Y_{t,d}$. Specifically, the weights sum to one (i.e., $\sum_{d=1}^{D}w_{t-1,d}=1$) and are calculated as $w_{t-1,d}=M_{t-1,d}/(\sum_{d=1}^{D}M_{t-1,d})$ with $M_{t-1,d}$ the market capitalization of institution $d$ at time $t-1$; see \citet{QZ21} for a similar approach. To forecast MES, we use a CCC--GARCH model of the form \eqref{eq:model h} with standard GARCH(1,1) volatility models for the marginals. Note that a model only for $(Y_{t,1},\ldots,Y_{t,D})^\prime$ is not sufficient for this purpose, as $X_t$ is not only composed of the $Y_{t,d}$'s but also of the time-varying weights $w_{t-1,d}$. We use rolling-window MES forecasts, where a window of length $(n+\ell_n)$ is rolled through the $N=5283$ observations to yield $N-(n+\ell_n)$ one-step ahead forecasts. We pick $n=1000$ and $\ell_n=10$, corresponding to roughly four years of daily returns in each moving window. To reflect the scarcity of systemic events, we choose $p=1/n=0.001$. Results for different values of $p$ are available upon request. As in the simulations, we set $k=k_1=\lfloor 0.1 \cdot \log(1000)^4\rfloor$.

\begin{figure}[t!]
	\centering
		\includegraphics[width=\textwidth]{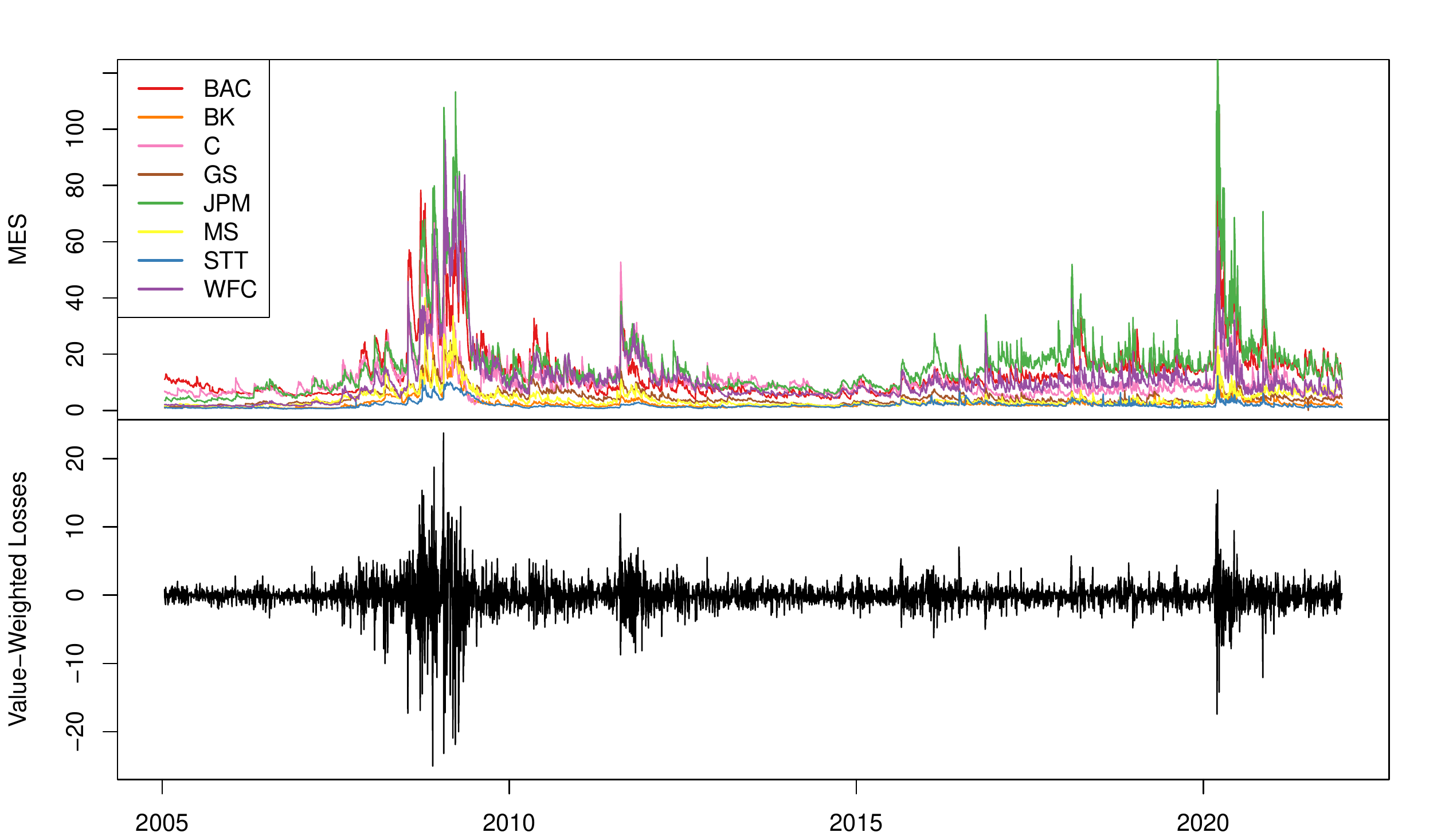}
	\caption{Top panel: Value-weighted MES forecasts for eight US G-SIBs Bank of America (BAC), Bank of New York Mellon (BK), Citigroup (C), Goldman Sachs (GS), JP Morgan Chase (JPM), Morgan Stanley (MS), State Street (STT), and Wells Fargo (WFC). Bottom panel: Time series of value-weighted losses $X_t$.}
	\label{fig:1}
\end{figure}

From a regulatory perspective, it is important---at each point $t$ in time---to limit the risk of the overall banking system as measured by the ES, viz.
\[
	\E[X_t\mid X_t>\VaR_t(p), \mathcal{F}_{t-1}]=\sum_{d=1}^{D}w_{t-1,d}\E[Y_{t,d}\mid X_t>\VaR_t(p),\mathcal{F}_{t-1}]=\sum_{d=1}^{D}w_{t-1,d}\theta_{t-1,p,d}. 
\]
The bottom panel of Figure~\ref{fig:1} shows a plot of the value-weighted losses $X_t$. During periods of high volatility, aggregate risk of the system is high. Such periods are noticeable during the financial crisis of 2007--2009, the European sovereign debt crisis in the early 2010s, and the Corona crash of March 2020.

The top panel of Figure~\ref{fig:1} plots the systemic risk contributions of each bank as measured by the value-weighted MES $w_{t-1,d}\theta_{t-1,p,d}$. Clearly, the contributions vary through time. Notice also that the MES forecasts during 2021 largely reflect the systemic risk contributions of the different institutions as determined by the \citet{FSB21}. For instance, JP Morgan Chase is listed as the systemically most risky bank by the \citet{FSB21}, which is mirrored by the high MES forecasts in the top panel of Figure~\ref{fig:1}. At the other end of the spectrum, State Street, Bank of New York Mellon, and Morgan Stanley are deemed least risky by the \citet{FSB21}, similarly as their low MES forecasts suggest.

The placement of the institutions in different buckets by the \citet{FSB21} (with bucket 5 containing the most systemically risky institutions and bucket 1 the least systemically risky ones) suggests that the systemic risk contributions of the different banks may also be distinguishable in statistical terms. Thus, for each point in time $t$ for which MES forecasts are issued, we test equality of the value-weighted MES forecasts, i.e., $w_{t-1,1}\theta_{t-1,p,1}=\ldots=w_{t-1,D}\theta_{t-1,p,D}$. To do so, we use the test statistic $\mathcal{T}_n$ from Corollary~\ref{cor:test}. For each point in time, the $p$-value is virtually zero, indicating---as expected---large heterogeneity of systemic risk contributions among banks.

\begin{figure}[t!]
	\centering
		\includegraphics[width=\textwidth]{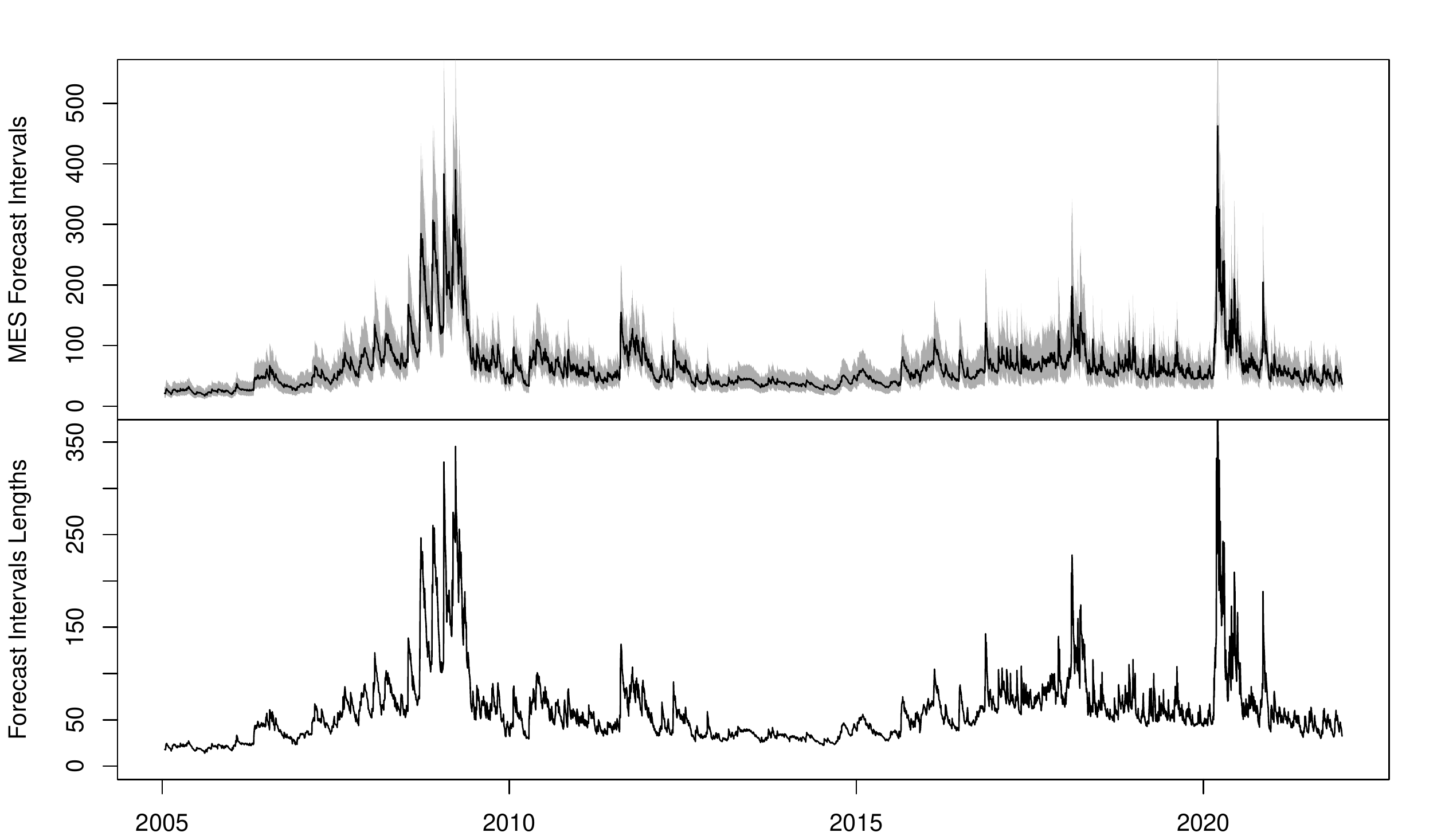}
	\caption{Top panel: MES forecasts (solid black line) together with 95\%-confidence intervals (shaded grey area) for JP Morgan Chase. Bottom panel: Lengths of 95\%-confidence intervals.}
	\label{fig:2}
\end{figure}

To limit the overall riskiness of the system, it seems prudent for the regulator to not only rely on the point MES forecasts, but also to take into account the estimation risk associated with those forecasts. To illustrate the significant impact of estimation risk, Figure~\ref{fig:2} displays the MES forecasts together with the 95\%-confidence intervals given in \eqref{eq:CI}. It does so exemplarily for JP Morgan Chase, which is ranked as the world's most systemically risky bank by the \citet{FSB21}. The financial crisis of 2007--2009 and the Corona crash of March 2020 are associated with the largest spikes in systemic riskiness.  It is precisely during these times, where precise
risk assessments are needed most, that our intervals suggest that systemic riskiness is most
poorly forecasted. This can be seen most clearly from the bottom panel of Figure~\ref{fig:2}, where lengths of the 95\%-forecast intervals are shown. During crises (e.g., around 2008) the estimation risk in the forecasts is highest. In contrast, during calmer times, the estimation risk is comparatively smaller. Formula \eqref{eq:CI} suggests that the main driver of these differences in forecast quality is the volatility of the JP Morgan Chase shares.

\section{Conclusion}\label{Conclusion}

For univariate risk measures, such as VaR and ES, there is a voluminous literature on asymptotic properties of forecasts. In contrast, asymptotic properties of systemic risk forecasts are largely unexplored. We fill this gap by deriving limit theory for EVT-based MES forecasts. In doing so, we extend the unconditional MES estimator of \citet{Cea15} along two dimensions. First, we prove its validity when applied to residuals of multivariate volatility models, thus allowing it to be used for conditional MES forecasting and confidence interval construction. In simulations, we illustrate the good finite-sample coverage of the forecast intervals, which provide valuable information beyond the mere point forecast of MES. Second, we derive limit theory also in higher dimensional systems, therefore enabling joint inference on multiple MES forecasts. This may be beneficial as illustrated in the empirical application to the losses on the 8 US G-SIBs.

The following avenues may be worth exploring in future research. First, one may develop bootstrap-based confidence intervals for MES to improve coverage, particularly for not so extreme $p$. Second, it may be interesting to explore the properties of our forecasts for data-adaptive choices of $k$ and $k_1$, which may improve finite-sample properties. The results of \citet{Dea20} in the context of extreme value index estimation suggest that this may influence the asymptotic behavior. Third, we have exclusively focused on MES as a measure of systemic risk in this paper. However, there are many other popular systemic risk measures in the literature, such as the CoVaR and CoES of \citet{AB16}. Thus, future research could develop limit theory for (EVT-based) forecasts of these measures as well.

\onehalfspacing

\appendix
\appendixpage

If not specified otherwise, all limits and all $o_{(\p)}$- and $O_{(\p)}$-symbols are to be interpreted with respect to $n\to\infty$. We denote by $K>0$ a large positive constant, that may change from line to line, and by $\mI$ the identity matrix of appropriate dimension.

\section{Proof of Theorem~\ref{thm:main result}}
\label{sec:Proof of Theorem 1}

\renewcommand{\theequation}{A.\arabic{equation}}	
\setcounter{equation}{0}

The proof of Theorem~\ref{thm:main result} requires Propositions~\ref{prop:approx} and \ref{prop:MES}.

\begin{prop}\label{prop:approx}
Suppose Assumptions~\ref{ass:param est}--\ref{ass:vola init} hold, and the truncation sequence $\ell_n$ satisfies $\ell_n/\log n\to\infty$. Then, we have for any $\iota>0$ that, as $n\to\infty$,
\begin{align*}
	\widehat{\vvarepsilon}_t=\vvarepsilon_t\big\{1 + o_{\p}(n^{\iota-\xi})\big\},\\
	\widehat{\vsigma}_t=\vsigma_t\big\{1 + o_{\p}(n^{\iota-\xi})\big\}
\end{align*}
uniformly in $t=1,\ldots,n+1$. 
\end{prop}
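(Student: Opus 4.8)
The plan is to control the two sources of discrepancy between $\widehat{\mSigma}_t(\widehat{\vtheta})$ and $\mSigma_t(\vtheta^\circ)$ separately, via the decomposition
\[
	\widehat{\mSigma}_t(\widehat{\vtheta}) - \mSigma_t(\vtheta^\circ) = \big[\widehat{\mSigma}_t(\widehat{\vtheta}) - \mSigma_t(\widehat{\vtheta})\big] + \big[\mSigma_t(\widehat{\vtheta}) - \mSigma_t(\vtheta^\circ)\big],
\]
where the first bracket is an \emph{initialization} error and the second an \emph{estimation} error. I would premultiply throughout by $\mSigma_t^{-1}(\vtheta^\circ)$ and exploit that all matrices in sight are diagonal, so that bounds on $\mSigma_t^{-1}(\vtheta^\circ)\widehat{\mSigma}_t(\widehat{\vtheta}) - \mI$ translate directly into componentwise multiplicative errors for $\widehat{\vsigma}_t$, and (after inverting) for $\widehat{\vvarepsilon}_t = \widehat{\mSigma}_t^{-1}\mSigma_t\vvarepsilon_t$.

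For the initialization error I would use Assumption~\ref{ass:vola init} directly: since $\Vert\mSigma_t^{-1}(\vtheta^\circ)\Vert\le C$ and $\sup_{\vtheta}\Vert\mSigma_t(\vtheta)-\widehat{\mSigma}_t(\vtheta)\Vert\le C_0\rho^{t+\ell_n-1}$, this term is bounded by $CC_0\rho^{\ell_n}$ uniformly over $t\ge1$. Because $\ell_n/\log n\to\infty$ and $\rho\in(0,1)$, we have $\rho^{\ell_n}=o(n^{-M})$ for every $M>0$, and since $C_0=O_{\p}(1)$ this contribution is $o_{\p}(n^{\iota-\xi})$ uniformly in $t$. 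The estimation error is the substantive part. On the event $\{\widehat{\vtheta}\in\mathcal{N}(\vtheta^\circ)\}$, which has probability tending to one by the consistency implied by Assumption~\ref{ass:param est}, an entrywise mean-value expansion in $\vtheta$ gives
\[
	\big\Vert\mSigma_t^{-1}(\vtheta^\circ)\big[\mSigma_t(\widehat{\vtheta})-\mSigma_t(\vtheta^\circ)\big]\big\Vert \le W_t\,\Vert\widehat{\vtheta}-\vtheta^\circ\Vert,\qquad W_t:=\sum_i\sup_{\vtheta\in\mathcal{N}(\vtheta^\circ)}\Big\Vert\mSigma_t^{-1}(\vtheta^\circ)\tfrac{\partial\mSigma_t(\vtheta)}{\partial\theta_i}\Big\Vert.
\]

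To bound the random coefficient $W_t$ I would factor $\mSigma_t^{-1}(\vtheta^\circ)\partial\mSigma_t(\vtheta)/\partial\theta_i = \big[\mSigma_t^{-1}(\vtheta^\circ)\mSigma_t(\vtheta)\big]\big[\mSigma_t^{-1}(\vtheta)\partial\mSigma_t(\vtheta)/\partial\theta_i\big]$, use submultiplicativity of the norm together with diagonality (so that $\Vert\mSigma_t^{-1}(\vtheta^\circ)\mSigma_t(\vtheta)\Vert=\Vert\mSigma_t(\vtheta)\mSigma_t^{-1}(\vtheta^\circ)\Vert$), and apply H\"older's inequality with the conjugate exponents $p_\ast,q_\ast$. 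Assumption~\ref{ass:vola est} then yields $\E[W_t^M]<\infty$ for \emph{every} $M>0$; this is precisely the role of the H\"older conjugates flagged in the discussion after Assumption~\ref{ass:vola est}. Combined with $\Vert\widehat{\vtheta}-\vtheta^\circ\Vert=O_{\p}(n^{-\xi})$, this controls a single $t$, and it remains only to pass to the maximum.

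The main obstacle, and the reason all moments of $W_t$ are needed, is the uniformity over $t=1,\dots,n+1$. By strict stationarity of the process and finiteness of all moments of $W_t$, a union bound with Markov's inequality gives, for every $\epsilon>0$ and $M$ large, $\p\{\max_{1\le t\le n+1}W_t>n^{\epsilon}\}\le(n+1)\E[W_1^M]/n^{\epsilon M}\to0$, so $\max_t W_t=o_{\p}(n^{\epsilon})$. Choosing $\epsilon<\iota$ yields $\max_t\Vert\mSigma_t^{-1}(\vtheta^\circ)[\mSigma_t(\widehat{\vtheta})-\mSigma_t(\vtheta^\circ)]\Vert=o_{\p}(n^{\epsilon})\cdot O_{\p}(n^{-\xi})=o_{\p}(n^{\iota-\xi})$. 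Adding the two error contributions shows $\mSigma_t^{-1}(\vtheta^\circ)\widehat{\mSigma}_t(\widehat{\vtheta})=\mI+o_{\p}(n^{\iota-\xi})$ uniformly in $t$; reading off the diagonal entries gives $\widehat{\vsigma}_t=\vsigma_t\{1+o_{\p}(n^{\iota-\xi})\}$, and inverting (the error being asymptotically negligible, so that $(\mI+o_{\p}(n^{\iota-\xi}))^{-1}=\mI+o_{\p}(n^{\iota-\xi})$ entrywise) gives $\widehat{\vvarepsilon}_t=\vvarepsilon_t\{1+o_{\p}(n^{\iota-\xi})\}$, both uniformly in $t=1,\dots,n+1$.
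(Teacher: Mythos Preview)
Your proposal is correct and follows essentially the same route as the paper: both separate the initialization error (controlled by $CC_0\rho^{\ell_n}=o_{\p}(n^{-K})$ via Assumption~\ref{ass:vola init} and $\ell_n/\log n\to\infty$) from the estimation error, handle the latter by a mean-value expansion, factor $\mSigma_t^{-1}(\vtheta^\circ)\partial\mSigma_t(\vtheta)/\partial\theta_i$ through $\mSigma_t(\vtheta)\mSigma_t^{-1}(\vtheta^\circ)$ to invoke Assumption~\ref{ass:vola est} with H\"older, and then use a union bound plus Markov's inequality with $M>1/\iota$ to control $\max_{t\le n+1}W_t$. The only cosmetic difference is that the paper works componentwise on $\sigma_{t,X}$ and $\sigma_{t,Y}$ and derives the residual and volatility claims in sequence, whereas you package both in the single matrix statement $\mSigma_t^{-1}(\vtheta^\circ)\widehat{\mSigma}_t(\widehat{\vtheta})=\mI+o_{\p}(n^{\iota-\xi})$ and then read off diagonals and invert.
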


\begin{proof}
See Appendix~\ref{Proof of Proposition 1}.
\end{proof}

\begin{prop}\label{prop:MES}
Under the conditions of Theorem~\ref{thm:main result}, we have that, as $n\to\infty$,
\[
	\min\big\{\sqrt{k}, \sqrt{k_1}/\log(d_n)\big\}\log\Bigg(\frac{\widehat{\theta}_{p}}{\theta_{p}}\Bigg)\overset{d}{\longrightarrow}\begin{cases}\Theta+r\Gamma,& \text{if }r\leq1,\\
	(1/r)\Theta+\Gamma,&\text{if }r>1,\end{cases}
\]
with $r$, $\Gamma$ and $\Theta$ defined as in Theorem~\ref{thm:main result}.
\end{prop}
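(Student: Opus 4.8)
The plan is to reduce the feasible MES estimator to an infeasible one built from the true innovations, establish a bivariate central limit theorem for its two building blocks, and then recombine under the stated scaling. Write $c_n:=\min\{\sqrt k,\sqrt{k_1}/\log(d_n)\}$. Since $\widehat\theta_p=d_n^{\widehat\gamma_1}\widehat\theta_{k/n}$, the extrapolation relation~\eqref{eq:(7+)} gives the exact decomposition
\begin{equation*}
\log\!\left(\frac{\widehat\theta_p}{\theta_p}\right)=(\widehat\gamma_1-\gamma_1)\log(d_n)+\log\!\left(\frac{\widehat\theta_{k/n}}{\theta_{k/n}}\right)-B_n,\qquad B_n:=\log\!\left(\frac{\theta_p}{d_n^{\gamma_1}\theta_{k/n}}\right).
\end{equation*}
First I would show $\sqrt k\,B_n\to0$, so that $c_nB_n\to0$ because $c_n\le\sqrt k$. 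This is the extrapolation-bias calculation of \citet{Cea15}: the second-order refinements in Assumptions~\ref{ass:R}--\ref{ass:U} together with $\sqrt kA_0(n/k)\to0$, $k=o(p^{2\tau(1-\gamma_1)})$ and $k=O(n^\alpha)$ from Assumption~\ref{ass:k} drive $B_n$ to zero at the required rate, with the $\theta_p$-versus-$\theta_p^{+}$ discrepancy absorbed through Remark~\ref{rem:trunc} and Assumption~\ref{ass:mom}. The limit is then governed by the two stochastic terms alone.

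The genuinely new step, absent in \citet{Cea15}, is to replace the residual-based statistics $\widehat\gamma_1,\widehat\theta_{k/n}$ by the infeasible analogues $\widetilde\gamma_1,\widetilde\theta_{k/n}$ computed from the true $\vvarepsilon_t$. By Proposition~\ref{prop:approx}, $\widehat\varepsilon_{t,Z}=\varepsilon_{t,Z}\{1+o_{\p}(n^{\iota-\xi})\}$ uniformly in $t$ for any $\iota>0$ and $Z\in\{X,Y\}$. I would propagate this multiplicative, asymptotically common perturbation through the order statistics by sandwiching: the residual order statistics lie between dilated and contracted innovation order statistics, so each indicator $I\{\widehat\varepsilon_{t,X}>\widehat\varepsilon_{(k+1),X}\}$ coincides with $I\{\varepsilon_{t,X}>\varepsilon_{(k+1),X}\}$ outside a negligible set of borderline indices. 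Controlling this borderline set requires the regular variation of the $X$-tail, which is exactly why Assumption~\ref{ass:U} is imposed for $i=0$ (cf.\ the remark following that assumption): it converts the multiplicative error in the random threshold into an asymptotically negligible shift in level. Combined with Assumption~\ref{ass:mom}, this would yield $\sqrt k\log(\widehat\theta_{k/n}/\widetilde\theta_{k/n})=o_{\p}(1)$ and $\sqrt{k_1}(\widehat\gamma_1-\widetilde\gamma_1)=o_{\p}(1)$. The budget is available precisely because $\widetilde\alpha<\xi$ in Assumption~\ref{ass:k} forces $\sqrt k$ and $\sqrt{k_1}/\log(d_n)$ to be $O(n^{\widetilde\alpha})$, dominated by the $n^\xi$ estimator rate, while $\ell_n/\log n\to\infty$ together with Assumption~\ref{ass:vola init} renders initialization effects of still smaller order.

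For the infeasible pair I would invoke the tail-empirical-process arguments of \citet{Cea15} to obtain the bivariate limit
\begin{equation*}
\left(\sqrt{k_1}(\widetilde\gamma_1-\gamma_1),\ \sqrt k\,\log\!\left(\widetilde\theta_{k/n}/\theta_{k/n}\right)\right)\overset{d}{\longrightarrow}(\Gamma,\Theta),
\end{equation*}
a centred bivariate Gaussian vector with $\Var(\Gamma)=\gamma_1^2$, and $\Var(\Theta)$, $\Cov(\Gamma,\Theta)$ as stated. The Hill statistic is a functional of the marginal $Y$-tail process at level $k_1$, the within-sample MES a functional of the joint tail process at level $k$, and their covariance is read off the tail copula $R(\cdot,\cdot)$ at the level ratio $q=\lim_{n\to\infty}k_1/k$; this is where $R(1,q)$ and $\int_0^qR(1,s)\cdots\,\D s$ enter $\Cov(\Gamma,\Theta)$. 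Because $q\in(0,\infty)$, the sequences $\sqrt k$ and $\sqrt{k_1}$ are of the same order ($\sqrt{k_1}/\sqrt k\to\sqrt q$), so the display is a genuine joint statement.

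It then remains to recombine. Substituting into the decomposition and multiplying by $c_n$: when $r\le1$ one has $c_n=\sqrt k$, so that $c_n\log(d_n)(\widehat\gamma_1-\gamma_1)=\{\sqrt k\log(d_n)/\sqrt{k_1}\}\sqrt{k_1}(\widehat\gamma_1-\gamma_1)\overset{d}{\longrightarrow}r\Gamma$ and $c_n\log(\widehat\theta_{k/n}/\theta_{k/n})\overset{d}{\longrightarrow}\Theta$, giving $\Theta+r\Gamma$; when $r>1$ one has $c_n=\sqrt{k_1}/\log(d_n)$, so the Hill term converges to $\Gamma$ and the within-sample term to $(1/r)\Theta$, giving $(1/r)\Theta+\Gamma$. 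An application of Slutsky's theorem using $c_nB_n\to0$ and the joint convergence then delivers both cases. The main obstacle is the residual-replacement step: tail statistics are delicate under the multiplicative perturbation of their most extreme order statistics, and one must verify that the borderline indicators and the estimated $X$-threshold leave the $\sqrt k$- and $\sqrt{k_1}$-rate fluctuations undisturbed.
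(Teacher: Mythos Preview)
Your proposal is correct and follows essentially the same route as the paper. The paper uses the multiplicative decomposition $\widehat{\theta}_p/\theta_p=B_1B_2B_3B_4$ with $B_1=d_n^{\widehat{\gamma}_1-\gamma_1}$, $B_2=\widehat{\theta}_{k/n}/\theta_{k/n}^{+}$, $B_3=d_n^{\gamma_1}\theta_{k/n}^{+}/\theta_p^{+}$, $B_4=\theta_p^{+}/\theta_p$, which is equivalent to your logarithmic decomposition via $\log x\sim x-1$; it handles the residual-replacement step through two separate lemmas (one for the Hill estimator, one for $\widehat{\theta}_{k/n}-\widetilde{\theta}_{k/n}$), both using exactly the sandwiching argument you describe. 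One point worth sharpening: the joint convergence of the two components is obtained not as a bivariate CLT but via a Skorohod construction (Lemma~\ref{lem:Lemma 1}, from \citet{EHL06}) that places the weighted tail empirical process and its a.s.\ limit $W_R$ on a common probability space, so that $\sqrt{k_1}(\widehat{\gamma}_1-\gamma_1)\overset{\p}{\to}\Gamma$ and $\sqrt{k}(\widehat{\theta}_{k/n}/\theta_{k/n}^{+}-1)\overset{\p}{\to}\Theta$ hold in probability with $\Gamma,\Theta$ explicit functionals of the same $W_R$; this is what makes the covariance formula transparent and the recombination rigorous.
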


\begin{proof}
See Appendix~\ref{Proof of Proposition 2}.
\end{proof}

\begin{proof}[\textbf{Proof of Theorem~\ref{thm:main result}}]
Use \eqref{eq:MES true} and \eqref{eq:MES forc} to write
\begin{align}
\min&\big\{\sqrt{k}, \sqrt{k_1}/\log(d_n)\big\}\log\Bigg(\frac{\widehat{\theta}_{n,p}}{\theta_{n,p}}\Bigg)\notag\\
&=\min\big\{\sqrt{k}, \sqrt{k_1}/\log(d_n)\big\}\log\Bigg(\frac{\widehat{\sigma}_{n+1,Y}}{\sigma_{n+1,Y}}\cdot \frac{\widehat{\theta}_p}{\theta_p}\Bigg)\notag\\
&=\min\big\{\sqrt{k}, \sqrt{k_1}/\log(d_n)\big\}\log\Bigg(\frac{\widehat{\sigma}_{n+1,Y}}{\sigma_{n+1,Y}}\Bigg)+\min\big\{\sqrt{k}, \sqrt{k_1}/\log(d_n)\big\}\log\Bigg(\frac{\widehat{\theta}_{p}}{\theta_{p}}\Bigg).\label{eq:decomp}
\end{align}
From Proposition~\ref{prop:approx} and Assumption~\ref{ass:k} it follows that for $\iota<\xi-\widetilde{\alpha}$,
\begin{align}
	\min\big\{\sqrt{k}, \sqrt{k_1}/\log(d_n)\big\}\log\Bigg(\frac{\widehat{\sigma}_{n+1,Y}}{\sigma_{n+1,Y}}\Bigg)&=\min\big\{\sqrt{k}, \sqrt{k_1}/\log(d_n)\big\}\log\Big(1+o_{\p}(n^{\iota-\xi})\Big)\notag\\
	&=\min\big\{\sqrt{k}, \sqrt{k_1}/\log(d_n)\big\}o_{\p}(n^{\iota-\xi})\notag\\
	&=O(n^{\widetilde{\alpha}})o_{\p}(n^{\iota-\xi})\notag\\
	&=o_{\p}(n^{\widetilde{\alpha}+\iota-\xi})\notag\\
	&=o_{\p}(1),\label{eq:vola negl}
\end{align}
because $\log x\sim x-1$, $x\to1$.
For the second right-hand side term in \eqref{eq:decomp} Proposition~\ref{prop:MES} applies. Overall, the conclusion follows.
\end{proof}

\section{Proof of Proposition~\ref{prop:approx}}\label{Proof of Proposition 1}

\renewcommand{\theequation}{B.\arabic{equation}}	
\setcounter{equation}{0}

\begin{proof}[\textbf{Proof of Proposition~\ref{prop:approx}}]
Fix $\iota>0$. Choose $M>0$ from Assumption~\ref{ass:vola est} sufficiently large, such that $1/M<\iota$. Recall the definition of $\mathcal{N}(\vtheta^\circ)$ from Assumption~\ref{ass:vola est}. For $\vtheta\in\mathcal{N}(\vtheta^\circ)$, write
\begin{align}
	\widehat{\vvarepsilon}_t(\vtheta) &= \widehat{\mSigma}_t^{-1}(\vtheta)(X_t,Y_t)^\prime = \widehat{\mSigma}_t^{-1}(\vtheta)\mSigma_t(\vtheta^\circ)\vvarepsilon_t\notag\\
	&= \big[ \mSigma_t^{-1}(\vtheta) \widehat{\mSigma}_t(\vtheta)\big]^{-1}\big[\mSigma_t^{-1}(\vtheta^\circ)\mSigma_t(\vtheta)\big]^{-1}\vvarepsilon_t\notag\\
	&= \Big\{\mI + \mSigma_t^{-1}(\vtheta) \big[\widehat{\mSigma}_t(\vtheta) - \mSigma_t(\vtheta)\big]\Big\}^{-1}\Big\{\mI + \mSigma_t^{-1}(\vtheta^\circ)\big[\mSigma_t(\vtheta) - \mSigma_t(\vtheta^\circ)\big]\Big\}^{-1} \vvarepsilon_t.\label{eq:eps decomp}
\end{align}

Use Assumption~\ref{ass:vola init} (where $C_0$ is random) to conclude that
\begin{align}
\max_{t=1,\ldots,n+1}\norm{\mSigma_t^{-1}(\vtheta) \big[\widehat{\mSigma}_t(\vtheta) - \mSigma_t(\vtheta)\big]} &\leq \max_{t=1,\ldots,n+1}\norm{\mSigma_t^{-1}(\vtheta)} \max_{t=1,\ldots,n+1}\big\Vert\widehat{\mSigma}_t(\vtheta) - \mSigma_t(\vtheta)\big\Vert \notag\\
&\leq C C_0 \rho^{\ell_n}=o_{\p}(n^{-K})\label{eq:fi}
\end{align}
for any $K>0$, because
\begin{align*}
	\rho^{\ell_n} &= \exp\Big\{\log\big(\rho^{(\ell_n/\log n)\log n}\big)\Big\}\\
	&= \exp\Big\{\log (n) \log\big(\rho^{\ell_n/\log n}\big)\Big\}=\big(\exp\{\log n\}\big)^{\log\big(\rho^{\ell_n/\log n}\big)}\\
	&=k^{\log\big(\rho^{\ell_n/\log n}\big)}=o\big(n^{-K}\big),
\end{align*}
since $\ell_n/\log n\to\infty$ by assumption, such that $\rho^{\ell_n/\log n}\to0$ and, hence, $\log\big(\rho^{\ell_n/\log n}\big)\to-\infty$.

For the second right-hand side term in \eqref{eq:eps decomp}, we obtain that 
\begin{multline}\label{eq:param est decomp}
	\max_{t=1,\ldots,n+1}\Big\Vert\mSigma_t^{-1}(\vtheta^\circ)\big[\mSigma_t(\vtheta) - \mSigma_t(\vtheta^\circ)\big]\Big\Vert \leq \max_{t=1,\ldots,n+1}\bigg|\frac{\sigma_{t,X}(\vtheta) - \sigma_{t,X}(\vtheta^\circ)}{\sigma_{t,X}(\vtheta^\circ)}\bigg|\\
	+ \max_{t=1,\ldots,n+1}\bigg|\frac{\sigma_{t,Y}(\vtheta) - \sigma_{t,Y}(\vtheta^\circ)}{\sigma_{t,Y}(\vtheta^\circ)}\bigg|.
\end{multline}
Consider the first term on the right-hand side of \eqref{eq:param est decomp}. Define $\dot{\sigma}_{t,X}(\vtheta)=\partial \sigma_{t,X}(\vtheta) / \partial \vtheta$. Use the mean value theorem to deduce that for some $\vtheta^\ast$ on the line connecting $\vtheta$ and $\vtheta^\circ$,
\begin{align}
\max_{t=1,\ldots,n+1}\bigg|\frac{\sigma_{t,X}(\vtheta) - \sigma_{t,X}(\vtheta^\circ)}{\sigma_{t,X}(\vtheta^\circ)}\bigg| &= \max_{t=1,\ldots,n+1}\bigg|\frac{\dot{\sigma}_{t,X}(\vtheta^\ast)(\vtheta - \vtheta^\circ)}{\sigma_{t,X}(\vtheta^\circ)}\bigg|\notag\\
&= \max_{t=1,\ldots,n+1}\bigg|\frac{\dot{\sigma}_{t,X}(\vtheta^\ast)}{\sigma_{t,X}(\vtheta^\ast)}\cdot\frac{\sigma_{t,X}(\vtheta^\ast)}{\sigma_{t,X}(\vtheta^\circ)}\cdot(\vtheta - \vtheta^\circ)\bigg|\notag\\
&\leq \max_{t=1,\ldots,n+1}\bigg\{\sup_{\vtheta\in\mathcal{N}(\vtheta^\circ)}\norm{\frac{\dot{\sigma}_{t,X}(\vtheta)}{\sigma_{t,X}(\vtheta)}}\cdot\sup_{\vtheta\in\mathcal{N}(\vtheta^\circ)}\bigg|\frac{\sigma_{t,X}(\vtheta)}{\sigma_{t,X}(\vtheta^\circ)}\bigg|\bigg\}\norm{\vtheta - \vtheta^\circ}.\label{eq:(3.1ti)}
\end{align}
We obtain that
\begin{align*}
	\p&\bigg\{\max_{t=1,\ldots,n+1}\bigg\{\sup_{\vtheta\in\mathcal{N}(\vtheta^\circ)}\norm{\frac{\dot{\sigma}_{t,X}(\vtheta)}{\sigma_{t,X}(\vtheta)}}\cdot\sup_{\vtheta\in\mathcal{N}(\vtheta^\circ)}\bigg|\frac{\sigma_{t,X}(\vtheta)}{\sigma_{t,X}(\vtheta^\circ)}\bigg|\bigg\}>\varepsilon n^\iota\bigg\} \\
	& \leq\sum_{t=1}^{n+1}\p\bigg\{\sup_{\vtheta\in\mathcal{N}(\vtheta^\circ)}\norm{\frac{\dot{\sigma}_{t,X}(\vtheta)}{\sigma_{t,X}(\vtheta)}}\cdot\sup_{\vtheta\in\mathcal{N}(\vtheta^\circ)}\bigg|\frac{\sigma_{t,X}(\vtheta)}{\sigma_{t,X}(\vtheta^\circ)}\bigg|>\varepsilon n^{\iota}\bigg\}\\
	&\leq \sum_{t=1}^{n+1}\frac{1}{\varepsilon^M n^{M\iota}}\E\bigg[\sup_{\vtheta\in\mathcal{N}(\vtheta^\circ)}\norm{\frac{\dot{\sigma}_{t,X}(\vtheta)}{\sigma_{t,X}(\vtheta)}}^M\cdot\sup_{\vtheta\in\mathcal{N}(\vtheta^\circ)}\bigg|\frac{\sigma_{t,X}(\vtheta)}{\sigma_{t,X}(\vtheta^\circ)}\bigg|^M\bigg]\\
	&\leq \sum_{t=1}^{n+1}\frac{1}{\varepsilon^M n^{M\iota}}\Bigg\{\E\bigg[\sup_{\vtheta\in\mathcal{N}(\vtheta^\circ)}\norm{\frac{\dot{\sigma}_{t,X}(\vtheta)}{\sigma_{t,X}(\vtheta)}}^{Mp_{\ast}}\Bigg\}^{1/p_{\ast}}\cdot\Bigg\{\E\bigg[\sup_{\vtheta\in\mathcal{N}(\vtheta^\circ)}\bigg|\frac{\sigma_{t,X}(\vtheta)}{\sigma_{t,X}(\vtheta^\circ)}\bigg|^{Mq_{\ast}}\bigg]\Bigg\}^{1/q_{\ast}}\\
	&\leq K n^{1-M\iota},
\end{align*}
where we used subadditivity in the first step, Markov's inequality in the second step, H\"{o}lder's inequality in the third step, and Assumption~\ref{ass:vola est} for the final inequality. Therefore, since $1/M<\iota$,
\[
	\max_{t=1,\ldots,n+1}\bigg\{\sup_{\vtheta\in\mathcal{N}(\vtheta^\circ)}\norm{\frac{\dot{\sigma}_{t,X}(\vtheta)}{\sigma_{t,X}(\vtheta)}}\cdot\sup_{\vtheta\in\mathcal{N}(\vtheta^\circ)}\bigg|\frac{\sigma_{t,X}(\vtheta)}{\sigma_{t,X}(\vtheta^\circ)}\bigg|\bigg\}=o_{\p}(n^{\iota}),
\]
whence from \eqref{eq:(3.1ti)} and Assumption~\ref{ass:param est},
\begin{equation}\label{eq:(B.4)}
	\max_{t=1,\ldots,n+1}\bigg|\frac{\sigma_{t,X}(\vtheta) - \sigma_{t,X}(\vtheta^\circ)}{\sigma_{t,X}(\vtheta^\circ)}\bigg|=o_{\p}(n^{\iota})n^{-\xi}n^{\xi}\norm{\vtheta-\vtheta^\circ} =o_{\p}(n^{\iota-\xi})O_{\p}(1)=o_{\p}(n^{\iota-\xi}).
\end{equation}
Using identical arguments, we may also show that
\[
	\max_{t=1,\ldots,n+1}\bigg|\frac{\sigma_{t,Y}(\vtheta) - \sigma_{t,Y}(\vtheta^\circ)}{\sigma_{t,Y}(\vtheta^\circ)}\bigg|=o_{\p}(n^{\iota-\xi}).
\]
Thus, from \eqref{eq:param est decomp},
\begin{equation}\label{eq:se}
	\max_{t=1,\ldots,n+1}\norm{\mSigma_t^{-1}(\vtheta^\circ)\big[\mSigma_t(\vtheta) - \mSigma_t(\vtheta^\circ)\big]} = o_{\p}(n^{\iota-\xi}).
\end{equation}
Plugging \eqref{eq:fi} and \eqref{eq:se} into \eqref{eq:eps decomp}, we get that for all $\vtheta\in\mathcal{N}(\vtheta^\circ)$
\[
	\widehat{\vvarepsilon}_t(\vtheta) = \vvarepsilon_t\big\{1+ o_{\p}(n^{\iota-\xi})\big\},
\]
where we recall that the matrices $\mSigma(\cdot)$ and $\widehat{\mSigma}(\cdot)$ in \eqref{eq:eps decomp} are diagonal.
Since, by Assumption~\ref{ass:param est}, $\widehat{\vtheta}$ is an element of $\mathcal{N}(\vtheta^\circ)$ with probability approaching 1, as $n\to\infty$, the conclusion for the residuals follows.

The uniform approximability of the volatilities follows from
\begin{align*}
	\bigg|\frac{\widehat{\sigma}_{t,X}(\vtheta)}{\sigma_{t,X}(\vtheta^\circ)}-1\bigg|& =\bigg|\frac{\widehat{\sigma}_{t,X}(\vtheta) - \sigma_{t,X}(\vtheta)}{\sigma_{t,X}(\vtheta^\circ)} + \frac{\sigma_{t,X}(\vtheta)}{\sigma_{t,X}(\vtheta^\circ)}-1\bigg|\\
	& \leq \bigg|\frac{\widehat{\sigma}_{t,X}(\vtheta) - \sigma_{t,X}(\vtheta)}{\sigma_{t,X}(\vtheta^\circ)}\bigg| + \bigg|\frac{\sigma_{t,X}(\vtheta)}{\sigma_{t,X}(\vtheta^\circ)}-1\bigg|\\
	&= o_{\p}(n^{-K}) + o_{\p}(n^{\iota-\xi})
\end{align*}
(from \eqref{eq:fi} and \eqref{eq:(B.4)}) together with the $n^{\xi}$-consistency of $\widehat{\vtheta}$ (from Assumption~\ref{ass:param est}). A similar result obtains for $\widehat{\sigma}_{t,Y}$.
\end{proof}

\section{Proof of Proposition~\ref{prop:MES}}\label{Proof of Proposition 2}

\renewcommand{\theequation}{C.\arabic{equation}}	
\setcounter{equation}{0}	

Let $F_{+}(\cdot)$ denote the d.f.~of $\varepsilon_{t,Y}^{+}$, and set $U_{+}=[1/(1-F_{+})]^{\leftarrow}$. Then, the proof of Theorem~2 in \citet{Cea15} shows that Assumptions~\ref{ass:R} and~\ref{ass:U} (phrased in terms of $(\varepsilon_{t,X}, \varepsilon_{t,Y})^\prime$) continue to hold for $(\varepsilon_{t,X}, \varepsilon_{t,Y}^{+})^\prime$ with the same constants and functions $A_i(\cdot)$. This will be exploited in some of the following proofs without further mention.

For $(x,y)^\prime\in[0,\infty]^2\setminus\{(\infty,\infty)^\prime\}$, we define
\begin{align*}
	R_n(x,y) &= \frac{n}{k}\p\big\{F_{0}(\varepsilon_{t,X})>1-kx/n,\ F_+(\varepsilon_{t,Y}^{+})>1-ky/n\big\},\\
	T_n(x,y) &= \frac{1}{k}\sum_{t=1}^{n}I_{\big\{F_{0}(\varepsilon_{t,X})>1-kx/n,\ F_+(\varepsilon_{t,Y}^{+})>1-ky/n\big\}},\\
	\widehat{T}_n(x,y) &= \frac{1}{k}\sum_{t=1}^{n}I_{\big\{F_{0}(\widehat{\varepsilon}_{t,X})>1-kx/n,\ F_+(\widehat{\varepsilon}_{t,Y}^{+})>1-ky/n\big\}}.
\end{align*}

The limit distribution of $\widehat{\theta}_{p}$ is characterized by the zero-mean Gaussian process
\[
	\big\{W_R(x,y)\big\}_{(x,y)^\prime\in[0,\infty]^2\setminus\{(\infty,\infty)^\prime\}}
\]
with covariance structure given by
\[
	\E\big[W_R(x_1,y_1)W_R(x_2,y_2)\big] = R(x_1\wedge x_2, y_1\wedge y_2).
\]
Then, 
\begin{align*}
	\Theta &= (\gamma_1-1)W_R(1, \infty) + \Big\{\int_{0}^{\infty}R(1, s)\D s^{-\gamma_1}\Big\}^{-1}\int_{0}^{\infty}W_R(1,s)\D s^{-\gamma_1},\\
	\Gamma &= \frac{\gamma_1}{\sqrt{q}}\Big\{-W_R(\infty,q) + \int_{0}^{q}s^{-1}W_R(\infty,s)\D s\Big\}
\end{align*}
are the zero-mean Gaussian random variables from Theorem~\ref{thm:main result}.

The proof of Proposition~\ref{prop:MES} requires Lemmas~\ref{lem:Lemma 1}--\ref{lem:MES final}. These lemmas build on Proposition~3.1 in \citet{EHL06}. Invoking a Skorohod construction, the limit processes involved in that proposition may be assumed to be defined on the same probability space. This leads to an easier presentation of some of the subsequent results. We state the version of Proposition~3.1 as given in \citet[Lemma~1]{Cea15}:

\begin{lem}\label{lem:Lemma 1}
Suppose that \eqref{eq:R} holds. Then, for any $\eta\in[0,1/2)$ and $T>0$, it holds that, as $n\to\infty$,
\begin{align}
\sup_{x,y\in(0,T]}&y^{-\eta}\Big|\sqrt{k}\big\{T_n(x,y)-R_n(x,y)\big\}-W_R(x,y)\Big|\overset{a.s.}{\longrightarrow}0,\notag\\
\sup_{x\in(0,T]}&x^{-\eta}\Big|\sqrt{k}\big\{T_n(x,\infty)-x\big\}-W_R(x,\infty)\Big|\overset{a.s.}{\longrightarrow}0,\label{eq:help e n 1}\\
\sup_{y\in(0,T]}&y^{-\eta}\Big|\sqrt{k}\big\{T_n(\infty,y)-y\big\}-W_R(\infty,y)\Big|\overset{a.s.}{\longrightarrow}0.\notag
\end{align}
\end{lem}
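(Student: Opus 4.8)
The plan is to reduce the claim to the weighted tail--empirical--process approximation of \citet{EHL06} (its Proposition~3.1) and then to transfer the resulting weak convergence into the stated almost-sure convergence by the Skorohod construction announced before the lemma. First I would pass to standard uniform margins. Since $F_0$ and $F_+$ are continuous, the transformed variables $U_{t,X}:=1-F_0(\varepsilon_{t,X})$ and $U_{t,Y}:=1-F_+(\varepsilon_{t,Y}^+)$ are uniform on $(0,1)$, and I may rewrite
\[
T_n(x,y) = \frac{1}{k}\sum_{t=1}^{n} I\{U_{t,X}\le kx/n,\ U_{t,Y}\le ky/n\},\qquad R_n(x,y) = \frac{n}{k}\,\p\{U_{t,X}\le kx/n,\ U_{t,Y}\le ky/n\}.
\]
In these coordinates $\sqrt{k}\{T_n-R_n\}$ is precisely the bivariate tail empirical process of \citet{EHL06}, and $W_R$, the zero-mean Gaussian process with covariance $R(x_1\wedge x_2, y_1\wedge y_2)$, is its weak limit.

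Next I would check that condition \eqref{eq:R} (existence of the tail copula $R$), together with the intermediate-sequence requirements $k\to\infty$ and $k/n\to0$, supplies the hypotheses of that proposition. It yields, for every $\eta\in[0,1/2)$ and $T>0$, weak convergence of the weighted process, so that in particular $\sup_{x,y\in(0,T]} y^{-\eta}\bigl|\sqrt{k}\{T_n(x,y)-R_n(x,y)\}-W_R(x,y)\bigr|\pto 0$, the weight $y^{-\eta}$ controlling the fluctuations near the origin where $W_R$ is small. The two marginal statements are the special cases $y=\infty$ and $x=\infty$: by uniformity $R_n(x,\infty)=\tfrac{n}{k}\p\{U_{t,X}\le kx/n\}=x$ and $R_n(\infty,y)=y$ exactly, so the centerings $T_n(x,\infty)-x$ and $T_n(\infty,y)-y$ reduce to the one-dimensional weighted tail empirical processes covered by the same result.

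Finally, the convergence supplied by \citet{EHL06} is weak convergence in a space of bounded, weighted functions. Invoking the Skorohod representation theorem I may realize all the processes on a common probability space on which the weak limits hold almost surely; since the three limiting objects are the deterministic constant $0$, this upgrades the three displays from convergence in probability to the asserted $\overset{a.s.}{\longrightarrow}0$.

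The main obstacle---and the substance that \citet{EHL06} provide---is the weighted tightness: controlling the modulus of continuity of $\sqrt{k}\{T_n-R_n\}$ uniformly in the weighted sup-norm $y^{-\eta}|\cdot|$ all the way down to the origin. This requires a delicate chaining/maximal-inequality argument for the indicator-based empirical process, because the weight $y^{-\eta}$ magnifies exactly the region $y\downarrow 0$ where the increments are most erratic, and it is here that the restriction $\eta<1/2$ is essential. By contrast, the finite-dimensional convergence is a routine Lindeberg central limit theorem, with the covariances read off from $R$.
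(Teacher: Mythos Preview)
Your proposal is correct and coincides with the paper's approach: the paper does not give an independent proof of this lemma but simply records it as the version of Proposition~3.1 of \citet{EHL06} stated in \citet[Lemma~1]{Cea15}, invoking a Skorohod construction to obtain the almost-sure formulation. One small point of presentation: your middle paragraph writes $\sup\,y^{-\eta}|\sqrt{k}\{T_n-R_n\}-W_R|\pto 0$ \emph{before} the Skorohod step, but this statement only makes sense once $W_R$ lives on the same probability space, so the Skorohod construction should precede it rather than follow it.
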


\begin{lem}\label{lem:gamma}
Under the conditions of Theorem~\ref{thm:main result}, we have that, as $n\to\infty$,
\begin{equation*}
	\sqrt{k_1}\big( \widehat{\gamma}_{1} - \gamma_1\big)\overset{\p}{\longrightarrow}\Gamma.
\end{equation*}
\end{lem}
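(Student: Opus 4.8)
The plan is to strip out the residual estimation error first and then treat the resulting object as a classical, residual-free Hill estimator, linearised through the tail empirical process of Lemma~\ref{lem:Lemma 1}. Since the $k_1$ largest residuals are eventually positive, I work throughout with $\varepsilon_{t,Y}^{+}$ and the associated $U_{+}$; this is legitimate because, as recorded at the start of this appendix, Assumptions~\ref{ass:R} and~\ref{ass:U} transfer to $(\varepsilon_{t,X},\varepsilon_{t,Y}^{+})^\prime$. By Proposition~\ref{prop:approx}, $\widehat\varepsilon_{t,Y}=\varepsilon_{t,Y}\{1+o_{\p}(n^{\iota-\xi})\}$ uniformly in $t$. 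As $\widehat\gamma_1$ depends on the data only through the logarithmic ratios $\log(\widehat\varepsilon_{(t),Y}/\widehat\varepsilon_{(k_1+1),Y})$, a uniform multiplicative perturbation of size $o_{\p}(n^{\iota-\xi})$ propagates into an additive error of the same order (up to the reordering of the top $k_1$ residuals, addressed below). Multiplying by the rate $\sqrt{k_1}=O(n^{\widetilde\alpha})$ from Assumption~\ref{ass:k} and choosing $\iota<\xi-\widetilde\alpha$, which is possible since $\widetilde\alpha<\xi$, renders this contribution $o_{\p}(1)$. Hence it suffices to analyse the Hill estimator built from the true innovations $\varepsilon_{t,Y}^{+}$.

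For the residual-free statistic I use the integral representation
\[
\widehat\gamma_1=\frac{1}{k_1}\int_{0}^{\infty}\#\big\{t:\varepsilon_{t,Y}^{+}>\varepsilon_{(k_1+1),Y}^{+}\eu^{v}\big\}\D v,
\]
obtained by writing $\log(\varepsilon_{(i),Y}^{+}/\varepsilon_{(k_1+1),Y}^{+})=\int_0^\infty I_{\{\varepsilon_{(i),Y}^{+}>\varepsilon_{(k_1+1),Y}^{+}\eu^{v}\}}\D v$ and summing over $i=1,\ldots,k_1$. Writing $q_n:=k_1/k$ and the exceedance count as $k\,T_n(\infty,\cdot)$, the regular variation of $U_{+}$ (Assumption~\ref{ass:U}) lets me substitute $y=q_n\eu^{-v/\gamma_1}$, turning the representation into
\[
\widehat\gamma_1=\frac{\gamma_1 k}{k_1}\int_{0}^{\widehat y_0}\frac{T_n(\infty,y)}{y}\D y\;+\;(\text{bias}),
\]
where the random upper limit $\widehat y_0$ is characterised by $T_n(\infty,\widehat y_0)=q_n$, i.e.\ it is the level at which exactly $k_1$ exceedances occur. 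The bias arises solely from replacing $U_{+}(n/(ky))$ by its Pareto proxy $u_0(q_n/y)^{\gamma_1}$; after scaling by $\sqrt{k_1}$ it is controlled by $\sqrt{k_1}A_1(n/k_1)\to0$ from Assumption~\ref{ass:k} and is asymptotically negligible (note that $R_n(\infty,y)=y$ exactly, so no further bias enters through the empirical process).

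The final step is the linearisation. By Lemma~\ref{lem:Lemma 1}, $T_n(\infty,y)=y+W_R(\infty,y)/\sqrt k+o_{\p}(1/\sqrt k)$, uniformly in the weighted sense that keeps the $y^{-1}$ singularity integrable near the origin. Inserting this into both the integral and the defining relation $T_n(\infty,\widehat y_0)=q_n$ gives $\widehat y_0=q_n-W_R(\infty,q_n)/\sqrt k+o_{\p}(1/\sqrt k)$ and $\int_0^{\widehat y_0}y^{-1}W_R(\infty,y)\D y=\int_0^{q_n}y^{-1}W_R(\infty,y)\D y+o_{\p}(1)$. Since $(k/k_1)q_n=1$, the deterministic parts cancel, and using $\sqrt k/\sqrt{k_1}=1/\sqrt{q_n}\to1/\sqrt q$ one obtains
\[
\sqrt{k_1}\big(\widehat\gamma_1-\gamma_1\big)=\frac{\gamma_1}{\sqrt{q_n}}\Big\{-W_R(\infty,q_n)+\int_0^{q_n}\frac{W_R(\infty,y)}{y}\D y\Big\}+o_{\p}(1).
\]
As $n\to\infty$, $q_n\to q$ and the continuity of the sample paths of $W_R(\infty,\cdot)$ force the right-hand side to converge in probability to $\Gamma$, which is the claim. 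The main obstacle is the residual-replacement step of the first paragraph: one must verify that perturbing the order statistics, and the attendant reordering of the top $k_1$ residuals, leaves $\widehat\gamma_1$ unchanged up to $o_{\p}(1/\sqrt{k_1})$. This is precisely where the uniform multiplicative control of Proposition~\ref{prop:approx} combined with the rate restriction $\widetilde\alpha<\xi$ of Assumption~\ref{ass:k} is indispensable, and where the weighted (rather than uniform) version of Lemma~\ref{lem:Lemma 1} is needed to tame the integrand at $y=0$.
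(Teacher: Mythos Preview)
Your overall strategy is sound and close in spirit to the paper's, but the two proofs are organized differently and your write-up contains one genuine inconsistency.

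\textbf{Residual replacement.} Your claim in the first paragraph is actually easier to justify than you suggest. Since $\log\widehat\varepsilon_{t,Y}=\log\varepsilon_{t,Y}+\log(1+a_{t,n})$ with $\max_t|a_{t,n}|=o_{\p}(n^{\iota-\xi})$, and since the $i$-th order statistic of a sequence shifted entrywise by at most $\delta$ differs from the original $i$-th order statistic by at most $\delta$, one gets $|\widehat\gamma_1-\widetilde\gamma_1|\leq 4\delta_n$ directly, where $\widetilde\gamma_1$ is the Hill estimator from the true $\varepsilon_{t,Y}$. This handles the reordering you flag as the ``main obstacle'' without any appeal to Lemma~\ref{lem:Lemma 1}. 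The paper instead performs the replacement at the level of the tail empirical process: it sandwiches $I_{\{\widehat\varepsilon_{t,Y}>yU_1(n/k_1)\}}$ between $I_{\{\varepsilon_{t,Y}>(1\pm\delta_n)yU_1(n/k_1)\}}$ and shows $\sqrt{k_1}[T_n\{(1\pm\delta_n)y\}-T_n(y)]\to0$ uniformly in weighted norm; this then feeds into a Vervaat-type step (via \citet{Hog14}) that simultaneously absorbs the random threshold $\widehat\varepsilon_{(k_1+1),Y}$.

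\textbf{The integral step.} Here your presentation is internally inconsistent. The substitution you state, $y=q_n\eu^{-v/\gamma_1}$, is deterministic and yields the upper limit $q_n$ at $v=0$, not the random $\widehat y_0$ defined by $T_n(\infty,\widehat y_0)=q_n$. What you need is the exact change of variables $y=(n/k)\{1-F_{+}(\varepsilon_{(k_1+1),Y}^{+}\eu^{v})\}$, which does give upper limit $\widehat y_0$; your Pareto-proxy substitution is then an approximation of this, and the discrepancy between $q_n$ and $\widehat y_0$ is an additional random term of order $W_R(\infty,q)/\sqrt{k}$ that must be tracked separately from the second-order ``bias'' you attribute to $A_1$. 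Your linearisation in the third paragraph does recover this term correctly, so the final answer is right, but the derivation as written conflates two distinct approximations. The paper sidesteps this by first proving the full weighted tail-process expansion for $\frac{1}{k_1}\sum_t I_{\{\widehat\varepsilon_{t,Y}>y\widehat\varepsilon_{(k_1+1),Y}\}}$ (with the random threshold already in place) and only then integrating against $\D y/y$ over $[1,\infty)$.
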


\begin{proof}
See Appendix~\ref{Proof of lemmas}.
\end{proof}

For the next lemma, we introduce the following (with the exception of $\widehat{\theta}_{k/n}$) infeasible estimators of $\theta_{k/n}$:
\begin{align}
\widetilde{\theta}_{k/n} &= \frac{1}{k}\sum_{t=1}^{n}\varepsilon_{t,Y}^{+}I_{\big\{\varepsilon_{t,X}>\varepsilon_{(k+1),X}\big\}},\notag\\
\widetilde{\theta}_{k/n}^{\ast} &= \frac{1}{k}\sum_{t=1}^{n}\varepsilon_{t,Y}^{+}I_{\big\{\varepsilon_{t,X}>U_0(n/k)\big\}},\notag\\
\widehat{\theta}_{k/n} &= \frac{1}{k}\sum_{t=1}^{n}\widehat{\varepsilon}_{t,Y}^{+}I_{\big\{\widehat{\varepsilon}_{t,X}>\widehat{\varepsilon}_{(k+1),X}\big\}},\notag\\
\widehat{\theta}_{k/n}^{\ast} &= \frac{1}{k}\sum_{t=1}^{n}\widehat{\varepsilon}_{t,Y}^{+}I_{\big\{\widehat{\varepsilon}_{t,X}>U_0(n/k)\big\}}.\notag\\
\intertext{Moreover, define}
e_n & =(n/k)\big\{1-F_0(\varepsilon_{(k+1),X})\big\},\label{eq:(e.n)}\\
\widehat{e}_n & =(n/k)\big\{1-F_0(\widehat{\varepsilon}_{(k+1),X})\big\},\label{eq:(e.hat.n)}
\end{align}
such that $\widehat{e}_n\widehat{\theta}_{k\widehat{e}_n/n}^{\ast}=\widehat{\theta}_{k/n}$ and $e_n\widetilde{\theta}^{\ast}_{ke_n/n}=\widetilde{\theta}_{k/n}$.

\begin{lem}\label{lem:MES final}
Under the conditions of Theorem~\ref{thm:main result}, we have that, as $n\to\infty$,
\begin{equation*}
	\frac{\sqrt{k}}{U_1(n/k)}\big( \widehat{\theta}_{k/n} - \widetilde{\theta}_{k/n} \big)=o_{\p}(1).
\end{equation*}
\end{lem}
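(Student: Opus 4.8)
The plan is to transfer the uniform multiplicative approximation $\widehat{\vvarepsilon}_t=\vvarepsilon_t\{1+o_{\p}(n^{\iota-\xi})\}$ of Proposition~\ref{prop:approx} to the within-sample estimator. Write $\widehat{\varepsilon}_{t,Z}=\varepsilon_{t,Z}(1+\Delta_{t,Z,n})$ for $Z\in\{X,Y\}$ and set $\bar{\Delta}_n=\max_{t\leq n,\,Z}|\Delta_{t,Z,n}|$, so $\bar{\Delta}_n=o_{\p}(n^{\iota-\xi})$ for every $\iota>0$. Since $q=\lim k_1/k\in(0,\infty)$ forces $k\asymp k_1$, Assumption~\ref{ass:k} gives $\sqrt{k}=O(n^{\widetilde{\alpha}})$ with $\widetilde{\alpha}<\xi$; choosing $\iota<\xi-\widetilde{\alpha}$ yields the key rate $\sqrt{k}\,\bar{\Delta}_n=o_{\p}(1)$. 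All estimates are carried out on the event of probability tending to one where $1-\bar{\Delta}_n>0$ and $\varepsilon_{(k+1),X}>0$. On this event, for $\varepsilon_{t,X}>0$ one has $(1-\bar{\Delta}_n)\varepsilon_{t,X}\leq\widehat{\varepsilon}_{t,X}\leq(1+\bar{\Delta}_n)\varepsilon_{t,X}$, while $\widehat{\varepsilon}_{t,X}\leq0\leq(1+\bar{\Delta}_n)\varepsilon_{(j),X}$ when $\varepsilon_{t,X}\leq0$; comparing the top-$j$ values of both samples then gives the multiplicative sandwich $(1-\bar{\Delta}_n)\varepsilon_{(j),X}\leq\widehat{\varepsilon}_{(j),X}\leq(1+\bar{\Delta}_n)\varepsilon_{(j),X}$ for all $j\leq k+1$, in particular for the threshold $\widehat{\varepsilon}_{(k+1),X}$. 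With $S=\{t:\varepsilon_{t,X}>\varepsilon_{(k+1),X}\}$ and $\widehat{S}=\{t:\widehat{\varepsilon}_{t,X}>\widehat{\varepsilon}_{(k+1),X}\}$ (both of cardinality $k$), the sandwich shows that $t\in\widehat{S}$ implies $\varepsilon_{t,X}>c_n^{-1}\varepsilon_{(k+1),X}$ and that every $t\in S\triangle\widehat{S}$ satisfies $c_n^{-1}\varepsilon_{(k+1),X}<\varepsilon_{t,X}\leq c_n\varepsilon_{(k+1),X}$, where $c_n=(1+\bar{\Delta}_n)/(1-\bar{\Delta}_n)=1+O_{\p}(\bar{\Delta}_n)$.

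Next I decompose $\widehat{\theta}_{k/n}-\widetilde{\theta}_{k/n}=A_n+B_n$, where, using $\widehat{\varepsilon}_{t,Y}^{+}=(1+\Delta_{t,Y,n})\varepsilon_{t,Y}^{+}$ on the event above,
\[
	A_n=\frac{1}{k}\sum_{t=1}^{n}\Delta_{t,Y,n}\,\varepsilon_{t,Y}^{+}I_{\{t\in\widehat{S}\}},\qquad B_n=\frac{1}{k}\sum_{t=1}^{n}\varepsilon_{t,Y}^{+}\big(I_{\{t\in\widehat{S}\}}-I_{\{t\in S\}}\big).
\]
Both terms reduce to controlling the band sums $g_n(c)=\frac{1}{k}\sum_{t=1}^{n}\varepsilon_{t,Y}^{+}I_{\{\varepsilon_{t,X}>c\,\varepsilon_{(k+1),X}\}}$ for $c$ near $1$. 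Indeed, the inclusion $\widehat{S}\subseteq\{\varepsilon_{t,X}>c_n^{-1}\varepsilon_{(k+1),X}\}$ gives $|A_n|\leq\bar{\Delta}_n\,g_n(c_n^{-1})$, and the band characterization of $S\triangle\widehat{S}$ together with $\varepsilon_{t,Y}^{+}\geq0$ gives $|B_n|\leq g_n(c_n^{-1})-g_n(c_n)$.

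It then suffices to establish that (i) $g_n(c_n^{-1})/U_1(n/k)=O_{\p}(1)$ and (ii) $\sqrt{k}\,[g_n(c_n^{-1})-g_n(c_n)]/U_1(n/k)=o_{\p}(1)$; combined with $\sqrt{k}\,\bar{\Delta}_n=o_{\p}(1)$ these yield $\tfrac{\sqrt{k}}{U_1(n/k)}(A_n+B_n)=o_{\p}(1)$, which is the claim. Both (i) and (ii) follow from the representation
\[
	g_n(c)=\int_{0}^{\infty}\frac{1}{k}\sum_{t=1}^{n}I_{\{\varepsilon_{t,X}>c\,\varepsilon_{(k+1),X},\ \varepsilon_{t,Y}^{+}>y\}}\D y,
\]
where, after the change of variables $y=U_{+}(n/(ks))$ and the substitution of the random threshold level $\tfrac{n}{k}\{1-F_0(c\,\varepsilon_{(k+1),X})\}=e_n\,c^{-1/\gamma_0}(1+o_{\p}(1))$ (using $e_n\overset{\p}{\longrightarrow}1$ and the regular variation in Assumption~\ref{ass:U}), $g_n(c)/U_1(n/k)$ becomes an integral of the empirical tail process $T_n$. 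Lemma~\ref{lem:Lemma 1} replaces $T_n(x,s)$ by $R_n(x,s)+k^{-1/2}W_R(x,s)$ uniformly; since the two threshold levels entering $g_n(c_n^{-1})$ and $g_n(c_n)$ differ only by the factor $c_n^{2/\gamma_0}=1+O_{\p}(\bar{\Delta}_n)$, the homogeneity of $R$ makes the leading deterministic contribution to the difference $O_{\p}(\bar{\Delta}_n)$ and the Gaussian fluctuation $O_{\p}(\bar{\Delta}_n/\sqrt{k})$, so that multiplication by $\sqrt{k}$ produces (ii); taking $c\to1$ and the convergence $\widetilde{\theta}_{k/n}/U_1(n/k)\overset{\p}{\longrightarrow}\int_{0}^{\infty}R(1,s)\D s^{-\gamma_1}$ gives (i).

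The main obstacle is the rigorous control of the band-sum difference in (ii). The difficulty is that $\varepsilon_{t,Y}^{+}$ is heavy-tailed, so a crude bound by (number of swapped indices)$\times$(maximal $\varepsilon_{t,Y}^{+}$) is far too lossy. One must instead exploit cancellation through the empirical-process representation and the almost sure uniform approximation of Lemma~\ref{lem:Lemma 1}, together with the shrinking ($O_{\p}(\bar{\Delta}_n)$-wide) multiplicative band, in order to extract the extra factor $\bar{\Delta}_n$ that, tested against $\sqrt{k}$, delivers the required negligibility.
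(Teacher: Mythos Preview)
Your route is genuinely different from the paper's. The paper passes through the ``starred'' estimators $\widetilde{\theta}^{\ast}_{kx/n}$, $\widehat{\theta}^{\ast}_{kx/n}$ with deterministic threshold $U_0(n/[kx])$: it first proves (Lemma~\ref{lem:MES y}) that $\sup_{x\in[1/2,2]}\tfrac{\sqrt{k}}{U_1(n/k)}|x\widehat{\theta}^{\ast}_{kx/n}-x\widetilde{\theta}^{\ast}_{kx/n}|=o_{\p}(1)$ via a sandwich $T_n\{r_n^{+}(x),s_n^{+}(y)\}\leq\widehat{T}_n\{x,s_n(y)\}\leq T_n\{r_n^{-}(x),s_n^{-}(y)\}$, then shows (Lemma~\ref{lem:help e n}) that $\sqrt{k}(e_n-1)$ and $\sqrt{k}(\widehat{e}_n-1)$ share the same limit $-W_R(1,\infty)$, and finally plugs in the random levels $e_n,\widehat{e}_n$. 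Your decomposition $\widehat{\theta}_{k/n}-\widetilde{\theta}_{k/n}=A_n+B_n$ with the band sums $g_n(c)$ is more direct and nicely isolates the two sources of error; your order-statistic sandwich and the characterization of $S\triangle\widehat{S}$ are correct.

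The gap is in your justification of (ii). You write that ``Lemma~\ref{lem:Lemma 1} replaces $T_n(x,s)$ by $R_n(x,s)+k^{-1/2}W_R(x,s)$ uniformly'' and then integrate in $y$ over $(0,\infty)$. But Lemma~\ref{lem:Lemma 1} only yields uniformity over $(x,y)\in(0,T]^2$; the integral $\int_0^{\infty}\cdots\,\D y^{-\gamma_1}$ is unbounded in $y$, and the heavy-tailedness of $\varepsilon_{t,Y}^{+}$ that you flag is precisely what makes this extension nontrivial. The paper closes this gap by invoking Proposition~2 of \citet{Cea15} (see the treatment of $C_2$ in the proof of Lemma~\ref{lem:MES y}), which is the device that controls the full integral $\int_0^{\infty}[T_n\{x,s_n(y)\}-R_n\{x,s_n(y)\}]\,\D y^{-\gamma_1}$ uniformly in $x\in[1/2,2]$; it relies on Assumption~\ref{ass:R} (the second-order rate $\tau$, in particular the weight $y^{\beta}\wedge1$) and not merely on \eqref{eq:R}. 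A second, smaller imprecision: you appeal to ``homogeneity of $R$'' for the deterministic part, but the integrand is $R_n$, not $R$; one must first pass from $R_n$ to $R$ (this is $C_4$ and part of $C_3$ in the paper's decomposition), which again uses Assumption~\ref{ass:R}. Finally, the claim that the Gaussian fluctuation is $O_{\p}(\bar{\Delta}_n/\sqrt{k})$ overstates the available regularity; you only get $o_{\p}(1/\sqrt{k})$ from uniform continuity of $x\mapsto\int_0^{\infty}W_R(x,y)\,\D y^{-\gamma_1}$ (cf.\ the paper's \eqref{eq:C1}), which is still enough. In short: your band-sum strategy is sound, but to make (ii) rigorous you need Proposition~2 of \citet{Cea15} (or an equivalent unbounded-integral bound), not Lemma~\ref{lem:Lemma 1} alone.
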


\begin{proof}
See Appendix~\ref{Proof of lemmas}.
\end{proof}

Now, we can prove Proposition~\ref{prop:MES}.

\begin{proof}[\textbf{Proof of Proposition~\ref{prop:MES}}]
Write 
\begin{equation*}
\frac{\widehat{\theta}_{p}}{\theta_{p}} = \frac{d_n^{\widehat{\gamma}_1}}{d_n^{\gamma_1}}\cdot\frac{\widehat{\theta}_{k/n}}{\theta_{k/n}^{+}}\cdot\frac{d_n^{\gamma_1}\theta_{k/n}^{+}}{\theta_{p}^{+}}\cdot\frac{\theta_{p}^{+}}{\theta_{p}}=:B_1\cdot B_2\cdot B_3\cdot B_4.
\end{equation*}
First consider $B_1$. By the mean value theorem and $(\partial/\partial x)d^x=d^x\log d$, there exists $\vartheta\in(0,1)$, such that
\begin{align}
	\frac{\sqrt{k_1}}{\log d_n}(B_1-1) &= \frac{\sqrt{k_1}}{\log d_n}(d_n^{\widehat{\gamma}_1-\gamma_1}-d_n^0)\notag\\
	&=\frac{\sqrt{k_1}}{\log d_n}d_n^{\vartheta(\widehat{\gamma}_1-\gamma_1)}\log(d_n)(\widehat{\gamma}_1-\gamma_1)\notag\\
	&=\sqrt{k_1}(\widehat{\gamma}_1-\gamma_1)\exp\big\{\log\big(d_n^{\vartheta(\widehat{\gamma}_1-\gamma_1)}\big)\big\}\notag\\
	&=\sqrt{k_1}(\widehat{\gamma}_1-\gamma_1)\exp\big\{\vartheta(\widehat{\gamma}_1-\gamma_1)\log d_n\big\}\notag\\
	&=\sqrt{k_1}(\widehat{\gamma}_1-\gamma_1)\big\{1+o_{\p}(1)\big\}\notag\\
	&=\Gamma+o_{\p}(1),\label{eq:first Prop 2}
\end{align}
where the second-to-last line follows from $\vartheta(\widehat{\gamma}_1-\gamma_1)\log d_n=O_{\p}(\log d_n/\sqrt{k_1})=o_{\p}(1)$ (from Lemma~\ref{lem:gamma} and our assumption that $\log d_n/\sqrt{k_1}=o(1)$), and the final line follows from Lemma~\ref{lem:gamma}. 

Combine our Lemma~\ref{lem:MES final} with Proposition~3 in \citet{Cea15} to get that
\[
	\sqrt{k}(B_2-1)=\sqrt{k}\Bigg(\frac{\widehat{\theta}_{k/n}}{\theta_{k/n}^{+}}-1\Bigg)\overset{\p}{\longrightarrow}\Theta.
\]

Equation~(32) of \citet{Cea15} yields that
\[
	B_3=1+o\big(1/\sqrt{k}\big).
\]

Finally, the proof of Theorem~2 in \citet{Cea15} shows that
\[
	B_4=1+o\big(1/\sqrt{k}\big).
\]

The rest of the proof follows as that of Theorem~1 in \citet{Cea15}. We give it here for the sake of completeness. Combining the above displays leads to
\begin{align}
&\frac{\widehat{\theta}_{p}}{\theta_{p}}-1 = B_1\cdot B_2\cdot B_3 \cdot B_4 - 1\notag\\
&=\bigg[1+\frac{\log d_n}{\sqrt{k_1}}\Gamma+o_{\p}\Big(\frac{\log d_n}{\sqrt{k_1}}\Big)\bigg]\bigg[1+\frac{\Theta}{\sqrt{k}}+o_{\p}\Big(\frac{1}{\sqrt{k}}\Big)\bigg]\bigg[1+o\Big(\frac{1}{\sqrt{k}}\Big)\bigg]\bigg[1+o\Big(\frac{1}{\sqrt{k}}\Big)\bigg]-1\notag\\
&=\frac{\log d_n}{\sqrt{k_1}}\Gamma+\frac{\Theta}{\sqrt{k}}+o_{\p}\Big(\frac{\log d_n}{\sqrt{k_1}}\Big)+o_{\p}\Big(\frac{1}{\sqrt{k}}\Big).\label{eq:final Prop 2}
\end{align}
Since $\log x\sim x-1$ as $x\to1$, the claimed convergence follows. The variances and covariances of $\Gamma$ and $\Theta$ can be computed by using their definition and exploiting the covariance structure of $W_R(\cdot,\cdot)$ together with Fubini's theorem.
\end{proof}

\section{Proofs of Lemmas~\ref{lem:gamma}--\ref{lem:MES final}}\label{Proof of lemmas}

\renewcommand{\theequation}{D.\arabic{equation}}	
\setcounter{equation}{0}	

Define
\begin{equation*}
	s_n(y) = s_{n,k_1}(y)= \frac{n}{k_1}\Big[1-F_{+}\big(y^{-\gamma_1}U_{+}(n/k_1)\big)\Big],\qquad y>0.
\end{equation*}

\begin{proof}[\textbf{Proof of Lemma~\ref{lem:gamma}}]
As a first step, we show that for any $T>0$
\begin{equation}\label{eq:sn}
\sup_{y\in(0,T]}\Big|\frac{s_n(y)}{y}-1\Big|=o\big(1/\sqrt{k_1}\big).
\end{equation}
Since $U_{1}(s)=U_{+}(s)$ for $s>1/\{1-F_{1}(0)\}$, we have from Assumption~\ref{ass:U} that for any $y_0>0$,
\[
	\sup_{y\geq y_0}\Big|y^{-\gamma_1}\frac{U_{+}(sy)}{U_{+}(s)}-1\Big|=O\{A_1(s)\},\qquad s\to\infty.
\]
Insert $s=n/k_1$ and $y=1/s_n(y)$ in that relation to obtain that
\[
	\sup_{y\in(0,T]}\bigg|\Big(\frac{s_n(y)}{y}\Big)^{\gamma_1}-1\bigg|=O\{A_1(n/k_1)\},\qquad n\to\infty.
\]
From $A_1(n/k_1)=o\big(1/\sqrt{k_1}\big)$ by Assumption~\ref{ass:k} and from a Taylor expansion, \eqref{eq:sn} follows.

Lemma~\ref{lem:Lemma 1} implies for any $\eta\in[0,1/2)$ and any $T>0$ that
\begin{equation}\label{eq:(p.30)}
	\sup_{y\in(0,T]}y^{-\eta}\Big|\sqrt{k}\big\{T_n(\infty,y)-y\big\} - W_R(\infty,y)\Big|\overset{a.s.}{\longrightarrow}0.
\end{equation}
Since $k_1/k\to q\in(0,\infty)$, we may replace $y$ by $y(k_1/k)$ in that relation to obtain
\begin{equation*}
	\sup_{y\in(0,T]}\Big(y\frac{k_1}{k}\Big)^{-\eta}\bigg|\sqrt{k}\Big\{\frac{1}{k}\sum_{t=1}^{n}I_{\big\{F_{+}(\varepsilon_{t,Y}^{+})>1-k_1y/n\big\}}-y\frac{k_1}{k}\Big\} - W_R\Big(\infty,y\frac{k_1}{k}\Big)\bigg|\overset{a.s.}{\longrightarrow}0
\end{equation*}
or, multiplying through with $\sqrt{k/k_1}$,
\begin{equation*}
	\sup_{y\in(0,T]}y^{-\eta}\Big(\frac{k_1}{k}\Big)^{-\eta}\bigg|\sqrt{k_1}\Big\{\frac{1}{k_1}\sum_{t=1}^{n}I_{\big\{F_{+}(\varepsilon_{t,Y}^{+})>1-k_1y/n\big\}}-y\Big\} - \sqrt{\frac{k_1}{k}}W_R\Big(\infty,y\frac{k_1}{k}\Big)\bigg|\overset{a.s.}{\longrightarrow}0.
\end{equation*}
Again, since $k_1/k\to q\in(0,\infty)$, we may drop the term $(k_1/k)^{-\eta}$ to get that
\begin{equation*}
	\sup_{y\in(0,T]}y^{-\eta}\bigg|\sqrt{k_1}\Big\{\frac{1}{k_1}\sum_{t=1}^{n}I_{\big\{F_{+}(\varepsilon_{t,Y}^{+})>1-k_1y/n\big\}}-y\Big\} - \sqrt{\frac{k_1}{k}}W_R\Big(\infty,y\frac{k_1}{k}\Big)\bigg|\overset{a.s.}{\longrightarrow}0.
\end{equation*}
Due to the uniform continuity of the weighted Wiener process, 
\begin{equation*}
	\sup_{y\in(0,T]}y^{-\eta}\bigg|\sqrt{\frac{k}{k_1}}W_R\Big(\infty,y\frac{k_1}{k}\Big) - \frac{1}{\sqrt{q}}W_R(\infty,yq)\bigg|\overset{a.s.}{\longrightarrow}0.
\end{equation*}
The last two displays imply that
\begin{equation}\label{eq:(p.30(2))}
	\sup_{y\in(0,T]}y^{-\eta}\bigg|\sqrt{k_1}\Big\{\frac{1}{k_1}\sum_{t=1}^{n}I_{\big\{F_{+}(\varepsilon_{t,Y}^{+})>1-k_1y/n\big\}}-y\Big\} - q^{-1/2}W_R(\infty,qy)\bigg|\overset{a.s.}{\longrightarrow}0.
\end{equation}
With this and $s_n(y)\to y$, as $n\to\infty$, uniformly in $y\in(0,T]$ from \eqref{eq:sn}, it follows for any $0<T_1<T$ that
\begin{equation}\label{eq:(1.2)}
	\sup_{y\in(0,T_1]}s_n^{-\eta}(y)\bigg|\sqrt{k_1}\Big\{\frac{1}{k_1}\sum_{t=1}^{n}I_{\big\{\varepsilon_{t,Y}^{+}>y^{-\gamma_1}U_{+}(n/k_1)\big\}}-s_n(y)\Big\} - q^{-1/2}W_R\{\infty,q s_n(y)\}\bigg|\overset{a.s.}{\longrightarrow}0.
\end{equation}
Our next goal is to show that $s_n(y)$ can be replaced by $y$ at all three appearances in \eqref{eq:(1.2)}. From \eqref{eq:sn} it follows that, without changing the limit, $s_n(y)$ may be replaced by $y$ in \eqref{eq:(1.2)} at the first two appearances. Finally, the uniform continuity of the weighted Wiener process ensures that, as $n\to\infty$,
\[
	\sup_{y\in(0,T_1]}y^{-\eta}q^{-1/2}\big|W_R\{\infty,qs_n(y)\}-W_R(\infty,qy)\big|\overset{a.s.}{\longrightarrow}0.
\]
Thus, since $T$ and $T_1<T$ can be chosen arbitrarily large, we get for any $y_0>0$ that
\begin{equation*}
\sup_{y\geq y_0}y^{\eta/\gamma_1}\bigg|\sqrt{k_1}\Big\{\frac{1}{k_1}\sum_{t=1}^{n}I_{\big\{\varepsilon_{t,Y}^{+}>y U_{+}(n/k_1)\big\}}-y^{-1/\gamma_1}\Big\}-q^{-1/2}W_R\big(\infty,qy^{-1/\gamma_1}\big)\bigg|\overset{a.s.}{\longrightarrow}0.
\end{equation*}
Because $U_{+}(n/k_1)=U_{1}(n/k_1)>0$ for sufficiently large $n$, we get from this that
\begin{equation}\label{eq:(2.1)}
\sup_{y\geq y_0}y^{\eta/\gamma_1}\Big|\sqrt{k_1}\big\{T_n(y)-y^{-1/\gamma_1}\big\}-q^{-1/2}W_R\big(\infty,qy^{-1/\gamma_1}\big)\Big|\overset{a.s.}{\longrightarrow}0,
\end{equation}
where $T_n(y)=\frac{1}{k_1}\sum_{t=1}^{n}I_{\big\{\varepsilon_{t,Y}>yU_1(n/k_1)\big\}}$. 

Our next goal is to show that $T_n(y)$ in \eqref{eq:(2.1)} can be replaced by $\widehat{T}_n(y)=\frac{1}{k_1}\sum_{t=1}^{n}I_{\big\{\widehat{\varepsilon}_{t,Y}>yU_1(n/k_1)\big\}}$. To task this, let $\delta_n=n^{\iota-\xi}$ with $\iota>0$ chosen sufficiently small to ensure that $\sqrt{k_1}\delta_n=o(1)$ (which is possible due to the Assumption~\ref{ass:k} requirement that $\sqrt{k_1}=O(n^{\widetilde{\alpha}})$). By Proposition~\ref{prop:approx}, $\widehat{\varepsilon}_{t,Y}=\varepsilon_{t,Y}\big\{1+o_{\p}\big(n^{\iota-\xi}\big)\big\}$ uniformly in $t=1,\ldots,n$. Thus, since $U_1(n/k_1)>0$ for sufficiently large $n$ (by Assumption~\ref{ass:U}),
\[
	I_{\big\{\varepsilon_{t,Y}>(1+\delta_n)yU_1(n/k_1)\big\}} \leq I_{\big\{\widehat{\varepsilon}_{t,Y}>yU_1(n/k_1)\big\}} \leq I_{\big\{\varepsilon_{t,Y}>(1-\delta_n)yU_1(n/k_1)\big\}}
\]
holds for all $t=1,\ldots,n$ with probability approaching 1 (w.p.a.~1), as $n\to\infty$. Hence, for any $\varepsilon>0$ we can ensure that $\p\{\mathcal{W}_1\}>1-\varepsilon/3$ for sufficiently large $n$, where
\begin{equation*}
	\mathcal{W}_1:=\Big\{ T_n\big\{(1+\delta_n)y\big\}-T_n(y) \leq \widehat{T}_n(y)-T_n(y) \leq T_n\big\{(1-\delta_n)y\big\}-T_n(y)\Big\}.
\end{equation*}
We now show that
\begin{equation}\label{eq:(3.1)}
	\sup_{y\geq y_0}y^{\eta/\gamma_1}\Big|\sqrt{k_1}\Big[T_n\big\{(1\pm\delta_n)y\big\}-T_n(y)\Big]\Big|\overset{a.s.}{\longrightarrow}0.
\end{equation}
We only show \eqref{eq:(3.1)} for $T_n\big\{(1+\delta_n)y\big\}$, as the proof for $T_n\big\{(1-\delta_n)y\big\}$ is similar. Bound
\begin{align*}
	\sup_{y\geq y_0}&\,y^{\eta/\gamma_1}\Big|\sqrt{k_1}\Big[T_n\big\{(1+\delta_n)y\big\}-T_n(y)\Big]\Big|\\
	&\leq \sup_{y\geq y_0}y^{\eta/\gamma_1}\Big|\sqrt{k_1}\Big[T_n\big\{(1+\delta_n)y\big\}-(1+\delta_n)^{-1/\gamma_1}y^{-1/\gamma_1}\Big] \\
	&\hspace{7cm} - q^{-1/2}W_R\big(\infty,q(1+\delta_n)^{-1/\gamma_1}y^{-1/\gamma_1}\big)\Big|\\
	&\hspace{1cm}+  \sup_{y\geq y_0}y^{\eta/\gamma_1}\Big|\sqrt{k_1}\big[T_n(y)-y^{-1/\gamma_1}\big] - q^{-1/2}W_R\big(\infty, qy^{-1/\gamma_1}\big)\Big|\\
	&\hspace{1cm}+\sup_{y\geq y_0}y^{\eta/\gamma_1}q^{-1/2}\Big|W_R\big(\infty,q(1+\delta_n)^{-1/\gamma_1}y^{-1/\gamma_1}\big) - W_R\big(\infty, qy^{-1/\gamma_1}\big)\Big|\\
	&\hspace{1cm}+\sqrt{k_1}\sup_{y\geq y_0}y^{\eta/\gamma_1}\gamma_1^{-1}\big\{\delta_n+o(\delta_n)\big\}y^{-1/\gamma_1}\\
	&= o(1),
\end{align*}
where we used \eqref{eq:(2.1)}, the uniform continuity of the weighted Wiener process, and the fact that $\sqrt{k_1}\delta_n=o(1)$ by our choice of $\delta_n$. This proves \eqref{eq:(3.1)}.

Now, we may use \eqref{eq:(3.1)} to conclude that for any $\delta>0$ it holds for sufficiently large $n$ that
\begin{align*}
\p&\Big\{\sup_{y\geq y_0}y^{\eta/\gamma_1}\Big|\sqrt{k_1}\big[\widehat{T}_n(y)-T_n(y)\big]\Big|\geq\delta\Big\}\\
&\leq 
\p\Big\{\sup_{y\geq y_0}y^{\eta/\gamma_1}\Big|\sqrt{k_1}\big[\widehat{T}_n(y)-T_n(y)\big]\Big|\geq\delta,\ \mathcal{W}_1\Big\} + \p\big\{\mathcal{W}_1^{C}\big\}\\
&\leq \p\Big\{\sup_{y\geq y_0}y^{\eta/\gamma_1}\Big|\sqrt{k_1}\big[\widehat{T}_n\big\{(1+\delta_n)y\big\}-T_n(y)\big]\Big|\geq\delta,\ \mathcal{W}_1\Big\} \\
&\hspace{1cm}+ \p\Big\{\sup_{y\geq y_0}y^{\eta/\gamma_1}\Big|\sqrt{k_1}\big[\widehat{T}_n\big\{(1-\delta_n)y\big\}-T_n(y)\big]\Big|\geq\delta,\ \mathcal{W}_1\Big\} + \p\big\{\mathcal{W}_1^{C}\big\}\\
&\leq \frac{\varepsilon}{3} + \frac{\varepsilon}{3} + \frac{\varepsilon}{3} = \varepsilon.
\end{align*}
Combine this with \eqref{eq:(2.1)} to obtain that
\begin{equation*}
\sup_{y\geq y_0}y^{\eta/\gamma_1}\Big|\sqrt{k_1}\big[\widehat{T}_n(y)-y^{-1/\gamma_1}\big]-q^{-1/2}W_R\big(\infty,qy^{-1/\gamma_1}\big)\Big|=o_{\p}(1).
\end{equation*}
From this convergence it follows as in the proof of Corollary~1 in \citet{Hog14} that
\begin{multline*}
\sup_{y\geq y_0}y^{\eta/\gamma_1}\bigg|\sqrt{k_1}\bigg[\frac{1}{k_1}\sum_{t=1}^{n}I_{\big\{\widehat{\varepsilon}_{t,Y}>y\widehat{\varepsilon}_{(k_1+1),Y}\big\}}-y^{-1/\gamma_1}\bigg]\\
-q^{-1/2}\Big[W_R\big(\infty,qy^{-1/\gamma_1}\big)-y^{-1/\gamma_1}W_R(\infty,q)\Big]\bigg|=o_{\p}(1).
\end{multline*}
As in \citet[Example~4]{Hog14}, we then obtain that
\begin{align*}
	\sqrt{k_1}(\widehat{\gamma}_1-\gamma_1) &= \sqrt{k_1}\int_{1}^{\infty}\Big[\frac{1}{k_1}\sum_{t=1}^{n}I_{\big\{\widehat{\varepsilon}_{t,Y}>y\widehat{\varepsilon}_{(k_1+1),Y}\big\}}-y^{-1/\gamma_1}\Big]\frac{\D y}{y}\\
	&\overset{\p}{\underset{(n\to\infty)}{\longrightarrow}} q^{-1/2}\int_{1}^{\infty}W_R(\infty,qy^{-1/\gamma_1})\frac{\D y}{y} - q^{-1/2}\int_{1}^{\infty}y^{-1/\gamma_1-1}\D y\cdot W_R(\infty,q)\\
	&=q^{-1/2}\gamma_1\int_{0}^{q}s^{-1}W_R(\infty,s)\D s - q^{-1/2}\gamma_1 W_R(\infty,q)\\
	&=\Gamma,
\end{align*}
where we used the substitution $s=qy^{-1/\gamma_1}$ in the third step.
This ends the proof.
\end{proof}

The proof of Lemma~\ref{lem:MES final} builds on the preliminary Lemmas~\ref{lem:r n pm}--\ref{lem:help e n}. These require the following additional notation:
\begin{align}
	s_n^{\pm}(y)&=s_{n,k}\big(y\{1\pm\delta_n\}^{-1/\gamma_1}\big),\notag\\
	r_n^{\pm}(x)&=\frac{n}{k}\Big[1-F_0\big\{(1\pm\delta_n)U_0(n/[kx])\big\}\Big],\label{eq:rnpm}
\end{align}
where $\delta_n=n^{\iota-\xi}$ as in the above proof of Lemma~\ref{lem:gamma}. Also let $s_n(y)=s_{n,k}(y)$ for brevity from now on.

\begin{lem}\label{lem:r n pm}
Under Assumptions~\ref{ass:U} and~\ref{ass:k}, it holds that, as $n\to\infty$,
\[
	\sup_{x\in[1/2,2]}\big|r_n^{\pm}(x)-x\big|=o\big(1/\sqrt{k}\big).
\]
\end{lem}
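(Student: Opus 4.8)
The plan is to reduce the claim to a statement about the tail quantile function $U_0$ and then invert the second-order condition of Assumption~\ref{ass:U}. Since $F_0$ is continuous and $U_0=(1/[1-F_0])^{\leftarrow}$, we have $1-F_0(U_0(s))=1/s$ for all large $s$. Writing $s=n/[kx]$, so that $n/k=sx$, the definition \eqref{eq:rnpm} gives $r_n^{\pm}(x)=x\,\psi_n(s)$ with
\[
	\psi_n(s)=\frac{1-F_0\big((1\pm\delta_n)U_0(s)\big)}{1-F_0(U_0(s))}.
\]
As $x$ ranges over $[1/2,2]$ the argument $s$ ranges over $[n/(2k),2n/k]$, so it suffices to show that $\sup_{s\in[n/(2k),2n/k]}|\psi_n(s)-1|=o(1/\sqrt{k})$, since then $\sup_{x\in[1/2,2]}|r_n^{\pm}(x)-x|\le 2\sup_s|\psi_n(s)-1|$.

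Next I would establish the core estimate $\psi_n(s)=(1\pm\delta_n)^{-1/\gamma_0}\{1+O(A_0(s))\}$, uniformly over $s\in[n/(2k),2n/k]$. Set $\sigma=\sigma_n(s):=1/[1-F_0((1\pm\delta_n)U_0(s))]$, so that $\psi_n(s)=s/\sigma$ and, applying $U_0=(1/[1-F_0])^{\leftarrow}$ once more, $(1\pm\delta_n)U_0(s)=U_0(\sigma)$, i.e.\ $U_0(\sigma)/U_0(s)=1\pm\delta_n$. Because $\delta_n\to0$, the ratio $\sigma/s$ stays in a fixed compact subset of $(0,\infty)$ (it tends to $(1\pm\delta_n)^{1/\gamma_0}\to1$), so the uniform bound of Assumption~\ref{ass:U} for $i=0$ applies with base point $s$ and ratio $\sigma/s$, yielding $U_0(\sigma)/U_0(s)=(\sigma/s)^{\gamma_0}\{1+O(A_0(s))\}$. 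Equating this to $1\pm\delta_n$ and solving gives $\sigma/s=(1\pm\delta_n)^{1/\gamma_0}\{1+O(A_0(s))\}$, whence $\psi_n(s)=s/\sigma=(1\pm\delta_n)^{-1/\gamma_0}\{1+O(A_0(s))\}$.

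Finally I would combine the two sources of error. From the core estimate, $\psi_n(s)-1=\big[(1\pm\delta_n)^{-1/\gamma_0}-1\big]+O(A_0(s))=O(\delta_n)+O(A_0(s))$, using $(1\pm\delta_n)^{-1/\gamma_0}-1=\mp\delta_n/\gamma_0+O(\delta_n^2)$. Since $A_0$ is regularly varying of index $\rho_0$, Potter bounds give $\sup_{s\in[n/(2k),2n/k]}|A_0(s)|=O(|A_0(n/k)|)$, and Assumption~\ref{ass:k} supplies $\sqrt{k}\,A_0(n/k)\to0$, so the $A_0$-term is $o(1/\sqrt{k})$. For the $\delta_n$-term, choosing $\iota>0$ small enough (exactly as in the proof of Lemma~\ref{lem:gamma}) makes $\sqrt{k}\,\delta_n=\sqrt{k}\,n^{\iota-\xi}\to0$ --- immediate in the leading case $\xi=1/2$ because $k=O(n^{\alpha})$ with $\alpha<1$, and otherwise secured by the growth restrictions on $k$ in Assumption~\ref{ass:k}. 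Hence $\sup_s|\psi_n(s)-1|=o(1/\sqrt{k})$ and the lemma follows.

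The main obstacle I anticipate is the uniform inversion in the second step: one must ensure that the $O(A_0(s))$ remainder from Assumption~\ref{ass:U} is genuinely uniform both over $s\in[n/(2k),2n/k]$ and as the ratio $\sigma/s$ varies in a neighborhood of $1$, so that passing from $U_0(\sigma)/U_0(s)=1\pm\delta_n$ to an estimate for $\sigma/s$ does not inflate the error. The rest is routine bookkeeping, chiefly the verification that $\sqrt{k}\,\delta_n=o(1)$ is compatible with the rate conditions imposed on $k$.
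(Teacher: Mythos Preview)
Your proposal is correct and follows essentially the same route as the paper's proof: both invert the second-order condition of Assumption~\ref{ass:U} by setting $t=n/(kx)$ and $u=x/r_n^{\pm}(x)$ (your $s$ and $\sigma/s$), use the preliminary regular-variation limit to ensure $u$ stays bounded away from zero, and then split the error into an $O(A_0(n/k))$ piece handled by Potter bounds plus $\sqrt{k}A_0(n/k)\to0$, and an $O(\delta_n)$ piece handled by $\sqrt{k}\delta_n=o(1)$. The obstacle you flag---uniformity of the $O(A_0(s))$ remainder when inverting---is exactly what the paper addresses via the preliminary convergence \eqref{eq:(SA1)} before applying the uniform bound of Assumption~\ref{ass:U}.
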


\begin{proof}
By the regular variation condition \eqref{eq:U} and \citet[Theorem~1.2.1]{HF06},
\begin{equation}\label{eq:(SA1)}
	\frac{r_{n}^{\pm}(x)}{x}=\frac{\frac{n}{k}\big[1-F_0\{(1\pm\delta_n)U_0(n/[kx])\}\big]}{\frac{n}{k}\big[1-F_0\{U_0(n/[kx])\}\big]}\underset{(n\to\infty)}{\longrightarrow}(1\pm\delta_n)^{-1/\gamma_0}.
\end{equation}
Moreover, Assumption~\ref{ass:U} implies that there exists $t_0>0$ such that for $t>t_0$ and $u>1/2$,
\[
	\Bigg|\frac{u^{-\gamma_0}U_0(tu)/U_0(t)-1}{A_0(t)}\Bigg|<K.
\]
Inserting $t=\frac{n}{kx}$ and $u=x/r_n^{\pm}(x)$ in that relation it follows that
\[
	\Bigg|\frac{\Big(\frac{r_n^{\pm}(x)}{x}\Big)^{\gamma_0}(1\pm\delta_n)-1}{A_0\big(\frac{n}{kx}\big)}\Bigg|<K.
\]
Multiplying through with $|A_0\big(\frac{n}{kx}\big)|$ gives
\[
	\bigg|\Big(\frac{r_n^{\pm}(x)}{x}\Big)^{\gamma_0}-1 \pm \delta_n\Big(\frac{r_n^{\pm}(x)}{x}\Big)^{\gamma_0} \Big|<K\Big|A_0\Big(\frac{n}{kx}\Big)\Big|.
\]
By \eqref{eq:(SA1)}, this implies
\[
	\bigg|\Big(\frac{r_n^{\pm}(x)}{x}\Big)^{\gamma_0}-1\bigg|<K\bigg[\Big|A_0\Big(\frac{n}{kx}\Big)\Big|+\delta_n\bigg].
\]
Use the Taylor expansion
\[
	\Big(\frac{r_n^{\pm}(x)}{x}\Big)^{\gamma_0}-1 = \gamma_0\Big[\frac{r_n^{\pm}(x)}{x}-1\Big]+o\Big(\frac{r_n^{\pm}(x)}{x}-1\Big)
\]
to deduce that
\[
	\Big|\frac{r_n^{\pm}(x)}{x}-1\Big|<K\Bigg[\Big|A_0\Big(\frac{n}{kx}\Big)\Big|+\delta_n\Bigg].
\]
By Assumption~\ref{ass:k} and the Potter bounds \citep[Proposition~B.1.9(5)]{HF06}, it holds for any $\delta>0$ and $n$ sufficiently large,
\begin{align*}
	\sqrt{k}\Big|A_0\Big(\frac{n}{kx}\Big)\Big|&=\sqrt{k}\big|A_0(n/k)\big|\frac{A_0\big(n/[kx]\big)}{A_0(n/k)}\\
	&= o(1)x^{-\rho_0}\max\{x^{\delta},\ x^{-\delta}\}=o(1)
\end{align*}
uniformly in $x\in(0,T]$. Since also $\sqrt{k}\delta_n=o(1)$ by Assumption~\ref{ass:k}, the conclusion follows.
\end{proof}

\begin{lem}\label{lem:MES y}
Under the conditions of Theorem~\ref{thm:main result}, we have that, as $n\to\infty$,
\begin{equation}\label{eq:lem1}
	\sup_{x\in[1/2, 2]}\bigg|\frac{\sqrt{k}}{U_1(n/k)}\Big[x\widehat{\theta}^{\ast}_{kx/n}-x\widetilde{\theta}^{\ast}_{kx/n}\Big]\bigg|=o_{\p}(1).
\end{equation}
\end{lem}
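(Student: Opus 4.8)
The plan is to separate the two sources of discrepancy between $\widehat{\theta}^{\ast}_{kx/n}$ and $\widetilde{\theta}^{\ast}_{kx/n}$: the replacement of the magnitudes $\varepsilon_{t,Y}^{+}$ by $\widehat{\varepsilon}_{t,Y}^{+}$, and the perturbation of the conditioning event $\{\varepsilon_{t,X}>U_0(n/[kx])\}$. Writing $c=c(x)=U_0(n/[kx])$ and using $x\widehat{\theta}^{\ast}_{kx/n}=\frac{1}{k}\sum_{t=1}^{n}\widehat{\varepsilon}_{t,Y}^{+}I_{\{\widehat{\varepsilon}_{t,X}>c\}}$ together with the analogous expression for $x\widetilde{\theta}^{\ast}_{kx/n}$, I would start from the split
\[
\widehat{\varepsilon}_{t,Y}^{+}I_{\{\widehat{\varepsilon}_{t,X}>c\}}-\varepsilon_{t,Y}^{+}I_{\{\varepsilon_{t,X}>c\}}=\big(\widehat{\varepsilon}_{t,Y}^{+}-\varepsilon_{t,Y}^{+}\big)I_{\{\widehat{\varepsilon}_{t,X}>c\}}+\varepsilon_{t,Y}^{+}\big(I_{\{\widehat{\varepsilon}_{t,X}>c\}}-I_{\{\varepsilon_{t,X}>c\}}\big)
\]
and bound the two resulting sums separately, uniformly over $x\in[1/2,2]$.

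For the magnitude term, Proposition~\ref{prop:approx} gives $\widehat{\varepsilon}_{t,Y}^{+}=\varepsilon_{t,Y}^{+}\{1+o_{\p}(n^{\iota-\xi})\}$ uniformly in $t$, so the corresponding sum equals $o_{\p}(n^{\iota-\xi})\cdot\frac{1}{k}\sum_{t=1}^{n}\varepsilon_{t,Y}^{+}I_{\{\widehat{\varepsilon}_{t,X}>c\}}$. Sandwiching the random indicator between $I_{\{\varepsilon_{t,X}>(1+\delta_n)c\}}$ and $I_{\{\varepsilon_{t,X}>(1-\delta_n)c\}}$ (valid with probability approaching one for $\delta_n=n^{\iota-\xi}$) shows that $\frac{1}{k}\sum_{t=1}^{n}\varepsilon_{t,Y}^{+}I_{\{\widehat{\varepsilon}_{t,X}>c\}}=O_{\p}(U_1(n/k))$ uniformly in $x$, by regular variation of $U_1$ and $x\in[1/2,2]$. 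Hence, after multiplying by $\sqrt{k}/U_1(n/k)$, this contribution is $o_{\p}(\sqrt{k}\,n^{\iota-\xi})$, which is $o_{\p}(1)$ once $\iota<\xi-\widetilde{\alpha}$ is chosen and Assumption~\ref{ass:k} ($\sqrt{k}=O(n^{\widetilde{\alpha}})$) is invoked, exactly as in the negligibility argument already used in \eqref{eq:vola negl}.

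The indicator term is the crux. Using the uniform approximation $\widehat{\varepsilon}_{t,X}=\varepsilon_{t,X}\{1+o_{\p}(\delta_n)\}$ once more, on a high-probability event one has $\big|I_{\{\widehat{\varepsilon}_{t,X}>c\}}-I_{\{\varepsilon_{t,X}>c\}}\big|\le I_{\{(1-\delta_n)c<\varepsilon_{t,X}\le(1+\delta_n)c\}}$ for all $t$ simultaneously, so it suffices to show that $\frac{\sqrt{k}}{U_1(n/k)}\cdot\frac{1}{k}\sum_{t=1}^{n}\varepsilon_{t,Y}^{+}I_{\{(1-\delta_n)c<\varepsilon_{t,X}\le(1+\delta_n)c\}}=o_{\p}(1)$ uniformly in $x$. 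I would first treat the mean: by $\E[Z]=\int_{0}^{\infty}\p\{Z>y\}\D y$ for $Z\ge0$ and the second-order condition in Assumption~\ref{ass:R}, this expectation factorizes (after the scaling $y\mapsto U_1(n/k)w$) as
\[
U_1(n/k)\int_{0}^{\infty}\big\{R\big(r_n^{-}(x),w^{-1/\gamma_1}\big)-R\big(r_n^{+}(x),w^{-1/\gamma_1}\big)\big\}\D w+o\big(U_1(n/k)/\sqrt{k}\big),
\]
with $r_n^{\pm}$ as in \eqref{eq:rnpm}. Since $R(\cdot,v)$ is $1$-Lipschitz in its first argument and $R(\cdot,w^{-1/\gamma_1})\le w^{-1/\gamma_1}$ is integrable for $\gamma_1<1$, the integral is at most $K\,\big|r_n^{-}(x)-r_n^{+}(x)\big|$, and Lemma~\ref{lem:r n pm} gives $\sup_{x\in[1/2,2]}\big|r_n^{-}(x)-r_n^{+}(x)\big|=o(1/\sqrt{k})$. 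Therefore $\frac{\sqrt{k}}{U_1(n/k)}\E[\,\cdot\,]=o(1)$ uniformly in $x$.

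It then remains to control the fluctuation of the band sum about its mean, uniformly in $x$. Expressing the band sum through the tail empirical measure $T_n(\cdot,\cdot)$ evaluated at the thresholds $r_n^{\pm}(x)$ and invoking Lemma~\ref{lem:Lemma 1} (under the attendant Skorohod construction), the centred band sum is approximated by $\tfrac{1}{\sqrt{k}}\{W_R(r_n^{-}(x),\cdot)-W_R(r_n^{+}(x),\cdot)\}+o_{\p}(1/\sqrt{k})$; because $r_n^{-}(x)-r_n^{+}(x)\to0$ uniformly, the uniform continuity of the weighted Wiener process renders this $o_{\p}(1/\sqrt{k})$, which is $o_{\p}(1)$ after multiplication by $\sqrt{k}$. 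The integral over $y$ is split at a large constant $T$, with the bulk $(0,T]$ handled as above and the tail $\{y>T\}$ bounded by a moment argument exploiting $\gamma_1<1/2$ to ensure uniform integrability of $\varepsilon_{t,Y}^{+}$ in the conditional tail. I expect the main obstacle to be exactly this final step, namely establishing the negligibility of the band sum \emph{uniformly} over $x\in[1/2,2]$ (and jointly over the integration variable $y$); the decisive quantitative input is the rate $r_n^{-}(x)-r_n^{+}(x)=o(1/\sqrt{k})$ furnished by Lemma~\ref{lem:r n pm}, which is precisely why that lemma is proved beforehand.
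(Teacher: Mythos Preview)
Your approach is correct and uses the same essential ingredients as the paper (Lemma~\ref{lem:r n pm} for the rate $r_n^{-}(x)-r_n^{+}(x)=o(1/\sqrt{k})$, the weighted Gaussian approximation of Lemma~\ref{lem:Lemma 1}, and uniform continuity of $W_R$), but the organisation differs. The paper first passes to the integral representation $x\widehat{\theta}^{\ast}_{kx/n}=-U_1(n/k)\int_0^\infty\widehat{T}_n\{x,s_n(y)\}\D y^{-\gamma_1}$ and then sandwiches $\widehat{T}_n\{x,s_n(y)\}$ by $T_n\{r_n^{\pm}(x),s_n^{\pm}(y)\}$, perturbing \emph{both} thresholds simultaneously; this yields a $C_1+C_2+C_3+C_4$ decomposition whose empirical pieces are handled by a direct appeal to Proposition~2 of \citet{Cea15}. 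Your magnitude/indicator split instead disposes of the $Y$-perturbation by the elementary bound $|\widehat{\varepsilon}_{t,Y}^{+}-\varepsilon_{t,Y}^{+}|\le o_{\p}(\delta_n)\varepsilon_{t,Y}^{+}$, so that only the $x$-threshold needs to be shifted and the $s_n^{\pm}$ machinery is avoided; the indicator term then reduces to the difference $r_n^{-}(x)\widetilde{\theta}^{\ast}_{kr_n^{-}(x)/n}-r_n^{+}(x)\widetilde{\theta}^{\ast}_{kr_n^{+}(x)/n}$, which is exactly what the paper treats in its $C$-terms.

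The one place where your sketch is thin is the fluctuation part of the band sum: you invoke ``Lemma~\ref{lem:Lemma 1} plus uniform continuity of $W_R$'', but that lemma only gives the weighted approximation on $(0,T]^2$, and passing to the full integral $\int_0^\infty\ldots\D y^{-\gamma_1}$ uniformly in $x$ (including the regime $s_n(y)>T$) is precisely the nontrivial content of \citet[Proposition~2]{Cea15}, which is where the restriction $\gamma_1<1/2$ enters. Your ``moment argument'' hint is in the right direction, but in a finished proof you should either cite that proposition---as the paper does for $C_1$ and $C_2$---or reproduce its tail estimate. With that amendment your route is a legitimate and, for the $Y$-side, slightly more economical alternative to the paper's proof.
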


\begin{proof}
Using the substitution $s=U_1(n/k)y^{-\gamma_1}$, we get that
\begin{align*}
	-U_1&(n/k)\int_{0}^{\infty}\widehat{T}_n\{x,s_n(y)\}\D y^{-\gamma_1} \\
	&=-U_1(n/k)\int_{0}^{\infty}\frac{1}{k}\sum_{t=1}^{n}I_{\big\{F_0(\widehat{\varepsilon}_{t,X})>1-kx/n,\ F_+(\widehat{\varepsilon}_{t,Y}^{+})>F_+(y^{-\gamma_1}U_+(n/k))\big\}}\D y^{-\gamma_1} \\
	&=-U_1(n/k)\frac{1}{k}\sum_{t=1}^{n}\int_{0}^{\infty}I_{\big\{\widehat{\varepsilon}_{t,X}>U_0(n/[kx]),\ \widehat{\varepsilon}_{t,Y}^{+}>y^{-\gamma_1}U_+(n/k)\big\}}\D y^{-\gamma_1} \\
	&= \frac{1}{k}\sum_{t=1}^{n}\int_{0}^{\infty}I_{\big\{\widehat{\varepsilon}_{t,X}>U_0(n/[kx]),\ \widehat{\varepsilon}_{t,Y}^{+}>s\big\}}\D s \\
	&= \frac{1}{k}\sum_{t=1}^{n}\int_{0}^{\widehat{\varepsilon}_{t,Y}^{+}}I_{\big\{\widehat{\varepsilon}_{t,X}>U_0(n/[kx])\big\}}\D s \\
	&= \frac{1}{k}\sum_{t=1}^{n}\widehat{\varepsilon}_{t,Y}^{+}I_{\big\{\widehat{\varepsilon}_{t,X}>U_0(n/[kx])\big\}} \\
	&=x\widehat{\theta}^{\ast}_{kx/n}.
\end{align*}
By similar arguments,
\begin{equation}\label{eq:decomp theta ast}
	x\widetilde{\theta}^{\ast}_{kx/n} = -U_1(n/k)\int_{0}^{\infty}T_n\big\{x,s_n(y)\big\}\D y^{-\gamma_1}.
\end{equation}
Thus, in view of \eqref{eq:lem1}, we only have to show that
\begin{equation}\label{eq:lem1 equiv}
	\sup_{x\in[1/2, 2]}\Big|\sqrt{k}\int_{0}^{\infty}\Big[\widehat{T}_n\big\{x,s_n(y)\big\}-T_n\big\{x,s_n(y)\big\}\Big]\D y^{-\gamma_1}\Big|=o_{\p}(1).
\end{equation}
By our choice $\delta_n=n^{\iota-\xi}$ and Proposition~\ref{prop:approx}, the following inequality holds w.p.a.~1, as $n\to\infty$,
\begin{align*}
	\widehat{T}_n\{x, s_n(y)\} &= \frac{1}{k}\sum_{t=1}^{n}I_{\Big\{\widehat{\varepsilon}_{t,X}>U_0\big(\frac{n}{kx}\big),\ \widehat{\varepsilon}_{t,Y}^{+}>U_+\big(\frac{n}{ks_n(y)}\big)\Big\}}\\
	&\leq \frac{1}{k}\sum_{t=1}^{n}I_{\Big\{\varepsilon_{t,X}>(1-\delta_n)U_0\big(\frac{n}{kx}\big),\ \varepsilon_{t,Y}^{+}>(1-\delta_n)U_{+}\big(\frac{n}{ks_n(y)}\big)\Big\}}\\
	&=\frac{1}{k}\sum_{t=1}^{n}I_{\Big\{\varepsilon_{t,X}>U_0\big(\frac{n}{k r_n^{-}(x)}\big),\ \varepsilon_{t,Y}^{+}>U_{+}\big(\frac{n}{ks_n^{-}(y)}\big)\Big\}}\\
	&=T_n\{r_n^{-}(x),\ s_n^{-}(y)\}.
\end{align*}
Similarly, 
\[
	\widehat{T}_n\{x, s_n(y)\}\geq T_n\{r_n^{+}(x), s_n^{+}(y)\}
\]
w.p.a.~1, as $n\to\infty$. Fix some arbitrary $\varepsilon>0$ and define
\begin{multline*}
	\mathcal{W}_2:=\Big\{T_n\big\{r_n^{+}(x), s_n^{+}(y)\big\}-T_n\big\{x, s_n(y)\big\}\leq \widehat{T}_n\big\{x, s_n(y)\big\}-T_n\big\{x, s_n(y)\big\}\\
	\leq T_n\big\{r_n^{-}(x), s_n^{-}(y)\big\}-T_n\big\{x, s_n(y)\big\}\Big\}.
\end{multline*}
Then, by the above, $\p\{\mathcal{W}_2\}>1-\varepsilon$ for sufficiently large $n$. Thus, to prove \eqref{eq:lem1 equiv} it suffices to show that
\[
	\sup_{x\in[1/2, 2]}\Big|\sqrt{k}\int_{0}^{\infty}\Big[T_n\big\{r_n^{\pm}(x), s_n^{\pm}(y)\big\}-T_n\big\{x, s_n(y)\big\}\Big]\D y^{-\gamma_1}\Big|=o_{\p}(1).
\]
We only do so for $T_n\{r_n^{+}(x), s_n^{+}(y)\}$, as the claim for $T_n\{r_n^{-}(x), s_n^{-}(y)\}$ can be established analogously. Decompose
\begin{align*}
	\sup_{x\in[1/2, 2]}&\Big|\sqrt{k}\int_{0}^{\infty}\Big[T_n\big\{r_n^{+}(x), s_n^{+}(y)\big\}-T_n\big\{x, s_n(y)\big\}\Big]\D y^{-\gamma_1}\Big|\\
	&\leq \sup_{x\in[1/2, 2]}\Big|\int_{0}^{\infty}\sqrt{k}\Big[T_n\big\{r_n^{+}(x), s_n^{+}(y)\big\}-R_n\big\{r_n^{+}(x), s_n^{+}(y)\big\}\Big]-W_R(x,y)\D y^{-\gamma_1}\Big|\\
	&\hspace{1cm}+ \sup_{x\in[1/2, 2]}\Big|\int_{0}^{\infty}\sqrt{k}\Big[T_n\big\{x,s_n(y)\big\}-R_n\big\{x,s_n(y)\big\}\Big]-W_R(x,y)\D y^{-\gamma_1}\Big|\\
	&\hspace{1cm}+ \sup_{x\in[1/2, 2]}\Big|\int_{0}^{\infty}\sqrt{k}\Big[R_n\big\{r_n^{+}(x),s_n^{+}(y)\big\}-R(x,y)\Big]\D y^{-\gamma_1}\Big|\\
	&\hspace{1cm}+ \sup_{x\in[1/2, 2]}\Big|\int_{0}^{\infty}\sqrt{k}\Big[R_n\big\{x, s_n(y)\big\}-R(x,y)\Big]\D y^{-\gamma_1}\Big|\\
	&=C_1+C_2+C_3+C_4.
\end{align*}
We show in turn that $C_1,\ldots,C_4$ are asymptotically negligible.

By \eqref{eq:decomp theta ast} and the fact that similarly $x\theta_{kx/n}=-U_1(n/k)\int_{0}^{\infty}R_n\{x, s_n(y)\}\D y^{-\gamma_1}$,
\begin{equation}\label{eq:C2}
	C_2=\sup_{x\in[1/2, 2]}\bigg|\frac{\sqrt{k}}{U_1(n/k)}\big[x\widetilde{\theta}_{kx/n}^{\ast}-x\theta_{kx/n}\big]-\int_{0}^{\infty}W_R(x,y)\D y^{-\gamma_1}\bigg|.
\end{equation}
Thus, $C_2=o_{\p}(1)$ follows from Proposition~2 in \citet{Cea15}. 

Carefully reading the proof of that proposition reveals that $s_n(y)$ in \eqref{eq:C2} (appearing in both $x\widetilde{\theta}_{kx/n}^{\ast}$ and $x\theta_{kx/n}$) can be replaced by $s_n^{+}(y)$ without changing the conclusion that the term is $o_{\p}(1)$. This exploits the fact that $\sup_{y\in(0,1]}s_n(y)/y^{(\gamma_1+\eta_1)}\to0$ for $\eta_1>\gamma_1$ and $\sup_{y\in(0,T]}|s_n(y)-y|\to0$ continue to hold for $s_n^{+}(y)$. Additionally using Lemma~\ref{lem:r n pm} we see that $x$ in \eqref{eq:C2} can be replaced by $r_n^{+}(x)$, such that
\[
	\sup_{x\in[1/2, 2]}\Big|\int_{0}^{\infty}\sqrt{k}\big[T_n\{r_n^{+}(x),s_n^{+}(y)\}-R_n\{r_n^{+}(x), s_n^{+}(y)\}\big]-W_R\{r_n^{+}(x), y\}\D y^{-\gamma_1}\Big|=o_{\p}(1).
\]
Hence, $C_1=o_{\p}(1)$ follows if we can show that
\begin{equation}\label{eq:C1}
	\sup_{x\in[1/2, 2]}\Big|\int_{0}^{\infty}W_R(x,y)-W_R\big\{r_n^{+}(x), y\big\}\D y^{-\gamma_1}\Big|=o_{\p}(1).
\end{equation}
By Corollary~1.11 of \citet{Adl90}, $(x,y)\mapsto W_R(x,y)$ is continuous on $[1/2, 2]\times(0,\infty)$. This implies that 
\[
	[1/2, 2]\ni x\mapsto\int_{0}^{\infty}W_R(x,y)\D y^{-\gamma_1}
\]
is continuous and, hence, uniformly continuous on the bounded interval $[1/2, 2]$. Thus, \eqref{eq:C1} follows from Lemma~\ref{lem:r n pm}, showing that $C_1=o_{\p}(1)$.

The fact that $C_4=o(1)$ follows directly from \citet[p.~439]{Cea15}. 

To prove $C_3=o(1)$, consider the bound
\begin{align*}
C_3 &\leq \sup_{x\in[1/2, 2]}\Big|\int_{0}^{\infty}\sqrt{k}\Big[R_n\big\{r_n^{+}(x), s_n^{+}(y)\big\}-R\big\{r_n^{+}(x), y(1+\delta_n)^{-1/\gamma_1}\big\}\Big]\D y^{-\gamma_1}\Big|\\
&\hspace{1cm} + \sup_{x\in[1/2, 2]}\Big|\int_{0}^{\infty}\sqrt{k}\Big[R\big\{r_n^{+}(x), y(1+\delta_n)^{-1/\gamma_1}\big\}-R\big\{r_n^{+}(x),y\big\}\Big]\D y^{-\gamma_1}\Big|\\
&\hspace{1cm} + \sup_{x\in[1/2, 2]}\Big|\int_{0}^{\infty}\sqrt{k}\Big[R\big\{r_n^{+}(x), y\big\}-R(x,y)\Big]\D y^{-\gamma_1}\Big|\\
&=C_{31}+C_{32}+C_{33}.
\end{align*}
Recalling that $s_n^{+}(y)=s_n\{y(1+\delta_n)^{-1/\gamma_1}\}$, we obtain by a change of variables that
\begin{align*}
	C_{31} &= \sup_{x\in[1/2, 2]}\Big|\int_{0}^{\infty}\sqrt{k}\Big[R_n\big\{r_n^{+}(x), s_n^{+}(y)\big\}-R\big\{r_n^{+}(x), y(1+\delta_n)^{-1/\gamma_1}\big\}\Big]\D y^{-\gamma_1}\Big|\\
	&=(1+\delta_n)^{-1}\sup_{x\in[1/2, 2]}\Big|\int_{0}^{\infty}\sqrt{k}\Big[R_n\big\{r_n^{+}(x), s_n(y)\big\}-R\big\{r_n^{+}(x), y\big\}\Big]\D y^{-\gamma_1}\Big|\\
	&=o(1),
\end{align*}
where we used the fact that $C_4=o(1)$ together with Lemma~\ref{lem:r n pm} in the final step. For $C_{32}$ we again use a change of variables to obtain
\[
	\int_{0}^{\infty}R\big\{r_n^{+}(x), y(1+\delta_n)^{-1/\gamma_1}\big\}\D y^{-\gamma_1}=(1+\delta_n)^{-1}\int_{0}^{\infty}R\big\{r_n^{+}(x), y\big\}\D y^{-\gamma_1}.
\]
Thus, since $(1+\delta_n)^{-1}=1-\delta_n+o(\delta_n)$ from a Taylor expansion,
\[
	C_{32}\leq K\sqrt{k}\delta_n\sup_{x\in[1/2, 2]}\int_{0}^{\infty}R\{r_n^{+}(x), y\}\D y^{-\gamma_1}=o(1).
\]
Finally, \citet[Theorem~1~(ii)]{SS06} establish homogeneity of the $R$-function, i.e., $R(sx, sy)=sR(x,y)$ for all $s>0$ and $x,y\geq0$. Using this and a change of variables,
\begin{align*}
\int_{0}^{\infty}\sqrt{k}&\Big[R\big\{r_n^{+}(x), y\big\}-R(x,y)\Big]\D y^{-\gamma_1}\\
&= \sqrt{k}\bigg[\int_{0}^{\infty}R\big\{r_n^{+}(x), y\big\}\D y^{-\gamma_1}-\int_{0}^{\infty}R(x,y)\D y^{-\gamma_1}\bigg]\\
&= \sqrt{k}\bigg[\int_{0}^{\infty}r_n^{+}(x)R\big\{1, y/r_n^{+}(x)\big\}\D y^{-\gamma_1}-\int_{0}^{\infty}xR(1,y/x)\D y^{-\gamma_1}\bigg]\\
&=\sqrt{k}\big[r_n^{+}(x)^{1-\gamma_1} - x^{1-\gamma_1}\big]\int_{0}^{\infty}R(1,y)\D y^{-\gamma_1}\\
&=o(1)
\end{align*}
uniformly in $x\in[1/2, 2]$ by Lemma~\ref{lem:r n pm}. Thus, $C_{33}=o(1)$, and the conclusion follows.
\end{proof}

\begin{lem}\label{lem:help e n}
Under the conditions of Theorem~\ref{thm:main result}, we have that, as $n\to\infty$,
\begin{align}
	\sqrt{k}(e_n-1)&\overset{\p}{\longrightarrow}-W_R(1,\infty),\label{eq:help e n}\\
	\sqrt{k}(\widehat{e}_n-1)&\overset{\p}{\longrightarrow}-W_R(1,\infty),\label{eq:help hat e n}
\end{align}
where $e_n$ and $\widehat{e}_n$ are defined in \eqref{eq:(e.n)} and \eqref{eq:(e.hat.n)}, respectively.
\end{lem}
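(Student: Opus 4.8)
The plan is to exploit the fact that $e_n$ and $\widehat e_n$ are precisely the arguments at which the empirical tail processes $T_n(\cdot,\infty)$ and $\widehat T_n(\cdot,\infty)$ equal one, and then to linearise through the Gaussian approximation of Lemma~\ref{lem:Lemma 1}. The key observation is that, because $F_0$ is continuous and strictly increasing, the event $\{\varepsilon_{t,X}>\varepsilon_{(k+1),X}\}$ coincides with $\{F_0(\varepsilon_{t,X})>F_0(\varepsilon_{(k+1),X})\}=\{F_0(\varepsilon_{t,X})>1-ke_n/n\}$, and exactly $k$ indices satisfy it. Hence $T_n(e_n,\infty)\equiv1$, and the same reasoning applied to the residuals gives $\widehat T_n(\widehat e_n,\infty)\equiv1$.

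For \eqref{eq:help e n}, I would first establish $e_n\overset{\p}{\longrightarrow}1$: Lemma~\ref{lem:Lemma 1} (with $\eta=0$) gives $T_n(x,\infty)\to x$ uniformly on $(0,T]$, while $x\mapsto T_n(x,\infty)$ is nondecreasing and $T_n(e_n,\infty)=1$, so $e_n$ is squeezed into every neighbourhood of $1$ eventually. I would then evaluate the uniform approximation $\sqrt{k}\{T_n(x,\infty)-x\}=W_R(x,\infty)+o(1)$ from Lemma~\ref{lem:Lemma 1} at the random point $x=e_n$, which is legitimate since $e_n$ lies in a fixed compact neighbourhood of $1$ with probability approaching one, where the convergence holds uniformly. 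Using $T_n(e_n,\infty)=1$ yields $\sqrt{k}(1-e_n)=W_R(e_n,\infty)+o_{\p}(1)$, and the sample-path continuity of $x\mapsto W_R(x,\infty)$ (Corollary~1.11 of \citet{Adl90}) together with $e_n\to1$ gives $W_R(e_n,\infty)\overset{\p}{\longrightarrow}W_R(1,\infty)$. Rearranging delivers $\sqrt{k}(e_n-1)\overset{\p}{\longrightarrow}-W_R(1,\infty)$.

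For \eqref{eq:help hat e n} the only extra ingredient is that replacing the latent $\varepsilon_{t,X}$ by the residuals $\widehat\varepsilon_{t,X}$ leaves the limit unchanged, i.e.
\[
	\sup_{x\in[1/2,2]}\big|\sqrt{k}\{\widehat T_n(x,\infty)-x\}-W_R(x,\infty)\big|=o_{\p}(1).
\]
I would obtain this by the sandwiching device already used in the proofs of Lemmas~\ref{lem:gamma} and \ref{lem:MES y}. Setting $\delta_n=n^{\iota-\xi}$, Proposition~\ref{prop:approx} gives $\widehat\varepsilon_{t,X}=\varepsilon_{t,X}\{1+o_{\p}(\delta_n)\}$ uniformly in $t$, so on an event of probability at least $1-\varepsilon$ one has $T_n\{r_n^{+}(x),\infty\}\leq\widehat T_n(x,\infty)\leq T_n\{r_n^{-}(x),\infty\}$ with $r_n^{\pm}(x)$ as in \eqref{eq:rnpm}. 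Lemma~\ref{lem:r n pm} yields $\sup_{x\in[1/2,2]}|r_n^{\pm}(x)-x|=o(1/\sqrt{k})$, whence $\sqrt{k}\{T_n(r_n^{\pm}(x),\infty)-r_n^{\pm}(x)\}=W_R(r_n^{\pm}(x),\infty)+o(1)$ by Lemma~\ref{lem:Lemma 1}, $\sqrt{k}\{r_n^{\pm}(x)-x\}=o(1)$, and $W_R(r_n^{\pm}(x),\infty)\to W_R(x,\infty)$ by uniform continuity of $W_R(\cdot,\infty)$. The two sandwich bounds therefore collapse onto the common limit $W_R(x,\infty)$, proving the display. As before, $\widehat T_n(\widehat e_n,\infty)=1$ and $\widehat e_n\overset{\p}{\longrightarrow}1$ then give $\sqrt{k}(\widehat e_n-1)\overset{\p}{\longrightarrow}-W_R(1,\infty)$.

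The main obstacle is this last replacement step: $\widehat T_n$ must be controlled uniformly on a neighbourhood of $x=1$ and then evaluated at the data-dependent random argument $\widehat e_n$. The care required is to ensure the sandwich inequalities hold simultaneously for all $t$ on a single high-probability event, so that they survive the maximisation over $x$, and to verify that the $o(1/\sqrt{k})$ precision of Lemma~\ref{lem:r n pm} is genuinely enough to absorb $\sqrt{k}\{r_n^{\pm}(x)-x\}$; both rest on the rate $\sqrt{k}\,\delta_n=o(1)$ guaranteed by Assumption~\ref{ass:k}.
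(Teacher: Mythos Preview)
Your argument is correct and, in substance, coincides with the paper's proof. The only cosmetic difference is in the inversion step for \eqref{eq:help e n}: the paper packages the passage from the approximation of $T_n(\cdot,\infty)$ to that of its ``inverse'' $x\mapsto(n/k)\{1-F_0(\varepsilon_{(\lfloor kx\rfloor+1),X})\}$ as an application of \citeauthor{Ver72}'s \citeyearpar{Ver72} lemma (together with the $1/k$-closeness of $T_{n,1}^{\leftsquigarrow}$ to the genuine inverse), whereas you perform the same inversion by hand---evaluating the uniform expansion at the random point $e_n$ and invoking sample-path continuity of $W_R(\cdot,\infty)$. These are equivalent; Vervaat's lemma is precisely the abstract statement of your plug-in-and-continuity manoeuvre. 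For \eqref{eq:help hat e n} you in fact give more detail than the paper, which simply refers back to the sandwiching device of Lemma~\ref{lem:gamma}; your explicit use of $r_n^{\pm}(x)$ and Lemma~\ref{lem:r n pm} is exactly what that reference unpacks to.
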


\begin{proof}
From \eqref{eq:help e n 1},
\begin{equation}\label{eq:(p.37.0)}
\sup_{x\in(0,T]}\bigg|\sqrt{k}\Big[\frac{1}{k}\sum_{t=1}^{n}I_{\big\{\varepsilon_{t,X}> U(n/[kx])\big\}}-x\Big]-W_R(x,\infty)\bigg|\overset{a.s.}{\longrightarrow}0.
\end{equation}
Put $T_{n,1}(x)=\frac{1}{k}\sum_{t=1}^{n}I_{\big\{\varepsilon_{t,X}> U(n/[kx])\big\}}$ for short. The proof draws heavily on Example~A.0.3 in \citet{HF06}. Define
\[
	T_{n,1}^{\leftsquigarrow}(x):=\frac{n}{k}\Big[1-F_0\big(\varepsilon_{(\lfloor kx\rfloor+1),X}\big)\Big].
\]
Then, 
\[
	\sup_{x\in(0,T]}\Big|T_{n,1}\big\{T_{n,1}^{\leftsquigarrow}(x)\big\} - x\Big|=\sup_{x\in(0,T]}\Big|\frac{\lfloor kx\rfloor}{k}-x\Big|\leq\frac{1}{k},
\]
which implies
\begin{equation}\label{eq:(p.37.1)}
\sup_{x\in(0,T]}\big|T_{n,1}^{\leftsquigarrow}(x) - T_{n,1}(x)\big|=o\big(1/\sqrt{k}\big).
\end{equation}
From \eqref{eq:(p.37.0)} we obtain via \citeauthor{Ver72}'s \citeyearpar{Ver72} lemma that
\[
	\sup_{x\in[1/2,2]}\Big|\sqrt{k}\big[T_{n,1}^{\leftarrow}(x)-x\big]+W_R(x,\infty)\Big|\overset{a.s.}{=}o(1).
\]
Due to \eqref{eq:(p.37.1)} this implies
\[
	\sup_{x\in[1/2,2]}\Big|\sqrt{k}\big[T_{n,1}^{\leftsquigarrow}(x)-x\big]+W_R(x,\infty)\Big|\overset{a.s.}{=}o(1).
\]
Putting $x=1$ in that expression, \eqref{eq:help e n} follows.

Arguing similarly as in the proof of Lemma~\ref{lem:gamma}, we may show that \eqref{eq:help e n 1} remains valid if the $\varepsilon_{t,X}$ in $T_n(x,\infty)$ are replaced by $\widehat{\varepsilon}_{t,X}$. Hence, \eqref{eq:help hat e n} follows as before.
\end{proof}

Now, we are in a position to prove Lemma~\ref{lem:MES final}:

\begin{proof}[\textbf{Proof of Lemma~\ref{lem:MES final}}]
Recall that $\widehat{e}_n\widehat{\theta}_{k\widehat{e}_n/n}^{\ast}=\widehat{\theta}_{k/n}$ and $e_n\widetilde{\theta}^{\ast}_{ke_n/n}=\widetilde{\theta}_{k/n}$. With this, write
\begin{align*}
 \frac{\sqrt{k}}{U_1(n/k)}\Big[\widehat{\theta}_{k/n} - \widetilde{\theta}_{k/n}\Big] &=\frac{\sqrt{k}}{U_1(n/k)}\Big[\widehat{e}_n\widehat{\theta}_{k\widehat{e}_n/n}^{\ast} - e_n\widetilde{\theta}^{\ast}_{ke_n/n}\Big]\\
	&= \frac{\sqrt{k}}{U_1(n/k)}\Big[\widehat{e}_n\widehat{\theta}_{k\widehat{e}_n/n}^{\ast} - \widehat{e}_n\widetilde{\theta}_{k\widehat{e}_n/n}^{\ast} + \widehat{e}_n\widetilde{\theta}_{k\widehat{e}_n/n}^{\ast}	- e_n\widetilde{\theta}^{\ast}_{ke_n/n}\Big]\\
	&= o_{\p}(1)+\frac{\sqrt{k}}{U_1(n/k)}\Big[\widehat{e}_n\widetilde{\theta}_{k\widehat{e}_n/n}^{\ast}	- e_n\widetilde{\theta}^{\ast}_{ke_n/n}\Big],
\end{align*}
where the final line follows from Lemmas~\ref{lem:MES y} and~\ref{lem:help e n}. Decompose the remaining term as follows:
\begin{align*}
\frac{\sqrt{k}}{U_1(n/k)}\Big[\widehat{e}_n\widetilde{\theta}_{k\widehat{e}_n/n}^{\ast}	- e_n\widetilde{\theta}^{\ast}_{ke_n/n}\Big]&= \frac{\sqrt{k}}{U_1(n/k)}e_n\Big[\widetilde{\theta}_{k\widehat{e}_n/n}^{\ast}	- \widetilde{\theta}^{\ast}_{ke_n/n}\Big] + \frac{\sqrt{k}}{U_1(n/k)}\big[\widehat{e}_n-e_n\big]\widetilde{\theta}^{\ast}_{k\widehat{e}_n/n}\\
&=:D_1 + D_2.
\end{align*}
Consider $D_1$ and $D_2$ separately. For $D_1$, we get that
\begin{align}
D_1 &= e_n\bigg\{\frac{\sqrt{k}}{U_1(n/k)}\big[\widetilde{\theta}_{k\widehat{e}_n/n}^{\ast}-\theta_{k\widehat{e}_n/n}\big]+\int_{0}^{\infty}W_R(1, s)\D s^{-\gamma_1}\bigg\}\notag\\
&\hspace{1cm}- e_n\bigg\{\frac{\sqrt{k}}{U_1(n/k)}\big[\widetilde{\theta}_{ke_n/n}^{\ast}-\theta_{ke_n/n}\big]+\int_{0}^{\infty}W_R(1,s)\D s^{-\gamma_1}\bigg\}\notag\\
&\hspace{1cm}+e_n\frac{\sqrt{k}}{U_1(n/k)}\big[\theta_{k\widehat{e}_n/n}- \theta_{ke_n/n} \big]\notag\\
&=e_n\frac{\sqrt{k}}{U_1(n/k)}\big[\theta_{k\widehat{e}_n/n}- \theta_{ke_n/n} \big] + o_{\p}(1),\label{eq:An}
\end{align}
where we have used that $e_n=1+o_{\p}(1)$ from Lemma~\ref{lem:help e n} and $\frac{\sqrt{k}}{U_1(n/k)}\big[\widetilde{\theta}_{ke_n/n}^{\ast}-\theta_{ke_n/n}\big]\overset{\p}{\longrightarrow}-\int_{0}^{\infty}W_R(1,s)\D s^{-\gamma_1}$ from the proof of Proposition~3 in \citet{Cea15}. The fact that this convergence also holds when replacing $e_n$ with $\widehat{e}_n$ follows similarly, since $e_n$ and $\widehat{e}_n$ are asymptotically equivalent by Lemma~\ref{lem:help e n}. For the remaining term, we need the following result from \citet[p.~439]{Cea15}:
\[
	e_n\theta_{ke_n/n}=e_n^{1-\gamma_1}\theta_{k/n} + o_{\p}\big\{U_1(n/k)/\sqrt{k}\big\}.
\]
Again from Lemma~\ref{lem:help e n}, it follows that this continues to hold upon replacing $e_n$ with $\widehat{e}_n$, such that
\[
	\widehat{e}_n\theta_{k\widehat{e}_n/n}=\widehat{e}_n^{1-\gamma_1}\theta_{k/n} + o_{\p}\big\{U_1(n/k)/\sqrt{k}\big\}.
\]
Using these two results, we deduce that
\begin{align}
	e_n	\frac{\sqrt{k}}{U_1(n/k)}\big[\theta_{k\widehat{e}_n/n}- \theta_{ke_n/n} \big] 
			&=\frac{\sqrt{k}}{U_1(n/k)}\Big[\frac{e_n}{\widehat{e}_n}\big\{\widehat{e}_n\theta_{k\widehat{e}_n/n}\big\}- \theta_{k/n} \Big] \notag\\
			&\hspace{1cm}-  \frac{\sqrt{k}}{U_1(n/k)}\Big[e_n\theta_{ke_n/n}- \theta_{k/n} \Big]\notag\\
			&= \frac{\sqrt{k}}{U_1(n/k)}\Big[\frac{e_n}{\widehat{e}_n}\Big\{\widehat{e}_n^{1-\gamma_1}\theta_{k/n} + o_{\p}\big(U_1(n/k)/\sqrt{k}\big)\Big\} - \theta_{k/n} \Big] \notag\\
			&\hspace{1cm}-  \frac{\sqrt{k}}{U_1(n/k)}\Big[e_n^{1-\gamma_1}\theta_{k/n} + o_{\p}\big(U_1(n/k)/\sqrt{k}\big) - \theta_{k/n} \Big]\notag\\
			&= \frac{\sqrt{k}}{U_1(n/k)}\Big[\frac{e_n}{\widehat{e}_n}\big(\widehat{e}_n^{1-\gamma_1}-1\big)\theta_{k/n}\Big]  - \frac{\sqrt{k}}{U_1(n/k)}\Big[\frac{e_n}{\widehat{e}_n}-1\Big]\theta_{k/n}\notag\\
			&\hspace{1cm}-  \frac{\sqrt{k}}{U_1(n/k)}\Big[\big(e_n^{1-\gamma_1}-1\big)\theta_{k/n} \Big]+o_{\p}(1)\notag\\
			&= \frac{e_n}{\widehat{e}_n}\cdot\sqrt{k}\big(\widehat{e}_n^{1-\gamma_1}-1\big)\cdot\frac{\theta_{k/n}}{U_1(n/k)}-\frac{1}{\widehat{e}_n}\cdot\sqrt{k}\big[e_n-\widehat{e}_n\big]\cdot\frac{\theta_{k/n}}{U_1(n/k)}\notag\\
			&\hspace{1cm}-\sqrt{k}\big(e_n^{1-\gamma_1}-1\big)\cdot\frac{\theta_{k/n}}{U_1(n/k)}+o_{\p}(1)\notag\\
			&=o_{\p}(1),\label{eq:An help}
\end{align}
where the final step follows from 
\begin{align*}
\sqrt{k}\big(\widehat{e}_n^{1-\gamma_1}-1\big)&\overset{\p}{\longrightarrow}(\gamma_1-1)W_R(1,\infty),\\
\sqrt{k}\big(e_n^{1-\gamma_1}-1\big)&\overset{\p}{\longrightarrow}(\gamma_1-1)W_R(1,\infty)
\end{align*}
(as a consequence of the delta method applied to Lemma~\ref{lem:help e n}) and $\theta_{k/n}/U_1(n/k)\to\int_{0}^{\infty}R(1,s^{-1/\gamma_1})\D s$ \citep[from Proposition~1 of][]{Cea15}. From \eqref{eq:An} and \eqref{eq:An help}, $D_1=o_{\p}(1)$. 

For $D_2$, write
\begin{align*}
	D_2 &= \big[\sqrt{k}(e_n-1)-\sqrt{k}(\widehat{e}_n-1)\big]\cdot\frac{\theta_{k/n}}{U_1(n/k)}\cdot\frac{1}{\widehat{e}_n}\cdot\frac{\widehat{e}_n\widetilde{\theta}^{\ast}_{k\widehat{e}_n/n}}{\theta_{k/n}}\\
	&=\big[-W_R(1,\infty)+W_R(1,\infty)+o_{\p}(1)\big]\cdot\Big[\int_{0}^{\infty}R(1,s^{-1/\gamma_1})\D s+o(1)\Big]\cdot\big[1+o_{\p}(1)\big]\\
	&\hspace{3cm}\cdot\Big[\frac{1}{\sqrt{k}}\frac{U_1(n/k)}{\theta_{k/n}}\frac{\sqrt{k}}{U_1(n/k)}\big(\widehat{e}_n\widetilde{\theta}^{\ast}_{k\widehat{e}_n/n}-\theta_{k/n}\big)+1\Big]\\
	&=o_{\p}(1)\cdot\Bigg[\frac{1}{\sqrt{k}}\Big\{\int_{0}^{\infty}R(1,s^{-1/\gamma_1})\D s+o(1)\Big\}^{-1}\Theta\int_{0}^{\infty}R(1,s^{-1/\gamma_1})\D s + 1\Bigg]\\
	&=o_{\p}(1),
\end{align*}
where we used in the second-to-last step that \eqref{eq:MES} holds and that
\begin{align*}
	\frac{\sqrt{k}}{U_1(n/k)}\big(\widehat{e}_n\widetilde{\theta}^{\ast}_{k\widehat{e}_n/n}-\theta_{k/n}\big) \overset{\p}{\longrightarrow}\Theta\int_{0}^{\infty}R(1,s^{-1/\gamma_1})\D s,
\end{align*}
which follows from the proof of Proposition~3 in \citet{Cea15} (together with Lemma~\ref{lem:help e n}). The conclusion of the lemma follows.
\end{proof}

\section{Proof of Theorem~\ref{thm:main result2}}
\label{sec:Proof of Theorem 2}

\renewcommand{\theequation}{E.\arabic{equation}}	
\setcounter{equation}{0}

\begin{proof}[\textbf{Proof of Theorem~\ref{thm:main result2}}]
It follows similarly as in the proof of Theorem~\ref{thm:main result} that 
\begin{equation*}
\min\big\{\sqrt{k}, \sqrt{k_d}/\log(d_n)\big\}\log\Bigg(\frac{\widehat{\theta}_{n,p,d}}{\theta_{n,p,d}}\Bigg)
=\min\big\{\sqrt{k}, \sqrt{k_d}/\log(d_n)\big\}\log\Bigg(\frac{\widehat{\theta}_{p,d}}{\theta_{p,d}}\Bigg) + o_{\p}(1).
\end{equation*}
From analogous arguments used in the proof of Proposition~\ref{prop:MES} (see in particular \eqref{eq:first Prop 2} and \eqref{eq:final Prop 2}), we obtain for all $d=1,\ldots,D$ that
\[
	\min\big\{\sqrt{k}, \sqrt{k_d}/\log(d_n)\big\}\log\Bigg(\frac{\widehat{\theta}_{p,d}}{\theta_{p,d}}\Bigg) = \sqrt{k_d}(\widehat{\gamma}_d-\gamma_d)+o_{\p}(1)= \Gamma_d + o_{\p}(1),
\]
where $\Gamma_d\sim N(0,\gamma_d^2)$.
Next, we show that, as $n\to\infty$,
\[
	\big(\sqrt{k_1}\{\widehat{\gamma}_1-\gamma_1\},\ldots,\sqrt{k_D}\{\widehat{\gamma}_D-\gamma_D\}\big)^\prime\overset{d}{\longrightarrow}(\Gamma_1, \ldots, \Gamma_D)^\prime,
\]
where $\Cov(\Gamma_i,\Gamma_j) = \sigma_{i,j}$ for $i,j=1,\ldots,D$. To that end, we apply similar arguments as used in the proof of Proposition~3 in \citet{Hog17a+}. Therefore, we have to verify his conditions (M1)--(M4). Note that we cannot directly apply his Proposition~3, because it derives the joint limit of the Hill estimates only for a common intermediate sequence, whereas we allow for possibly distinct $k_1,\ldots,k_D$.

Since the $\vvarepsilon_t$ are i.i.d., the $\beta$-mixing condition (M1) is immediate for any $r_n\to\infty$ in the notation of \citet{Hog17a+}. In the following, we let $r_n=n$. 

For (M2), note by independence of the $\vvarepsilon_t$ that
\begin{align*}
	\frac{n}{r_n k}\Cov&\bigg(\sum_{t=1}^{r_n}I_{\big\{\varepsilon_{t,Y_i}>U_{i}(\frac{n}{k x})\big\}}, \sum_{t=1}^{r_n}I_{\big\{\varepsilon_{t,Y_j}>U_{j}(\frac{n}{k y})\big\}}\bigg)\\
	&=\frac{n}{r_n k}\sum_{t=1}^{r_n}\Cov\Big(I_{\big\{\varepsilon_{t,Y_i}>U_{i}(\frac{n}{k x})\big\}}, I_{\big\{\varepsilon_{t,Y_j}>U_{j}(\frac{n}{k y})\big\}}\Big)\\
	&= \frac{n}{r_n k}\sum_{t=1}^{r_n}\bigg[\p\Big\{\varepsilon_{t,Y_i}>U_{i}\Big(\frac{n}{k x}\Big),\ \varepsilon_{t,Y_j}>U_{j}\Big(\frac{n}{k y}\Big)\Big\} - \frac{kx}{n}\frac{ky}{n}\bigg]\\
	&= \frac{n}{k}\p\Big\{\varepsilon_{t,Y_i}>U_{i}\Big(\frac{n}{k x}\Big),\ \varepsilon_{t,Y_j}>U_{j}\Big(\frac{n}{k y}\Big)\Big\} + \frac{n}{k}\frac{k^2 xy}{n^2}\\
	&= R_{i,j}(x,y) + o(1) + o(k/n)\underset{(n\to\infty)}{\longrightarrow} R_{i,j}(x,y),
\end{align*}
where the final line uses Assumption~\ref{ass:R*}*. This establishes (M2).

For (M3) note that since the $\vvarepsilon_t$ are i.i.d., their $\rho$-mixing coefficients are trivially zero, such that by Lemma~2.3 of \citet{Sha93} (for $q=4$ in his notation)
\begin{align*}
	&\frac{n}{r_n k}\E\bigg[\sum_{t=1}^{r_n}I_{\big\{U_{d}(\frac{n}{k y})<\varepsilon_{t,Y_d}\leq U_{d}(\frac{n}{k x})\big\}}\bigg]^4\\
	&\leq \frac{n}{r_n k}K\bigg\{r_n^2 \E^2\Big[I_{\big\{U_{d}(\frac{n}{k y})<\varepsilon_{t,Y_d}\leq U_{d}(\frac{n}{k x})\big\}}^2\Big] + r_n \E\Big[I_{\big\{U_{d}(\frac{n}{k y})<\varepsilon_{t,Y_d}\leq U_{d}(\frac{n}{k x})\big\}}^4\Big]\bigg\}\\
	&\leq \frac{n}{r_n k}K\bigg\{r_n^2 \p^2\Big\{U_{d}\Big(\frac{n}{k y}\Big)<\varepsilon_{t,Y_d}\leq U_{d}\Big(\frac{n}{k x}\Big)\Big\} + r_n\p\Big\{U_{d}\Big(\frac{n}{k y}\Big)<\varepsilon_{t,Y_d}\leq U_{d}\Big(\frac{n}{k x}\Big)\Big\} \bigg\}\\
	&=\frac{n}{r_n k}K\bigg\{r_n^2 \frac{k^2}{n^2}(y-x)^2 + r_n\frac{k}{n}(y-x) \bigg\}\\
	&\leq K\frac{n}{r_n k}(y-x)^2 + K(y-x)\underset{(n\to\infty)}{\longrightarrow}K(y-x),
\end{align*}
since $r_n=n$ and $k\to\infty$. This implies (M3). 

Finally, condition (M4) follows from Assumption~\ref{ass:U*}*.

Having verified the conditions of Proposition~3 of \citet{Hog17a+}, we may follow the steps in that proof to derive that for any $\eta\in[0,1/2)$,
\begin{equation}\label{eq:(p.40)}
	\sup_{y\in(0,T]}y^{-\eta}\left\Vert\sqrt{k}\begin{pmatrix}T_{n,1}(y)-y\\ \vdots \\ T_{n,D}(y)-y\end{pmatrix} - \begin{pmatrix}W_1(y)\\ \vdots \\ W_D(y)\end{pmatrix}\right\Vert\overset{a.s.}{\longrightarrow}0,
\end{equation}
where $T_{n,d}(y):=\frac{1}{k}\sum_{t=1}^{n}I_{\big\{\varepsilon_{t,Y_d}^{+}>U_{d,+}(n/[ky])\big\}}$ with $U_{d,+}(\cdot)$ defined in the obvious way, and $\mW(y)=\big(W_1(y),\ldots, W_D(y)\big)^\prime$ is a $D$-variate continuous, zero-mean Gaussian process with covariance function
\[
	\Cov\big(\mW(y_1), \mW(y_2)\big) = \bigg[\frac{R_{i,j}(y_1,y_2) + R_{i,j}(y_2,y_1)}{2}\bigg]_{i,j=1,\ldots,D};
\]
see in particular Lemma~2 of \citet{Hog17a+}. Equation \eqref{eq:(p.40)} is the analog of \eqref{eq:(p.30)} in the proof of Lemma~\ref{lem:gamma}.

Applying the steps in the proof of Lemma~\ref{lem:gamma} to each of the components in \eqref{eq:(p.40)} gives, as $n\to\infty$,
\[
	\sqrt{k_d}(\widehat{\gamma}_d-\gamma_d) \overset{\p}{\longrightarrow}\gamma_d q_d^{-1/2}\bigg[\int_{0}^{1}u^{-1}W_d(q_d u )\D u - W_d(q_d)\bigg]=\Gamma_d
\]
for each $d=1,\ldots,D$. Since $\E[\Gamma_d]=0$, we obtain for $i,j=1,\ldots,D$ using Fubini's theorem that
\begin{align*}
	\Cov&(\Gamma_i, \Gamma_j)= \E[\Gamma_i\Gamma_j]\\
	&= \frac{\gamma_i\gamma_j}{\sqrt{q_i q_j}}\E\bigg[\Big\{\int_{0}^{1}u^{-1}W_i(q_i u )-W_i(q_i)\D u\Big\}\Big\{\int_{0}^{1}v^{-1}W_j(q_j v )- W_j(q_j)\D v \Big\}\bigg]\\
	&= \frac{\gamma_i\gamma_j}{\sqrt{q_i q_j}}\E\Big[\int_{0}^{1}\int_{0}^{1}\big\{u^{-1}W_i(q_i u ) - W_i(q_i)\big\}\big\{v^{-1}W_j(q_j v ) - W_j(q_j)\big\}\D u \D v\Big]\\
	&= \frac{\gamma_i\gamma_j}{\sqrt{q_i q_j}}\int_{0}^{1}\int_{0}^{1}\E\Big[\big\{u^{-1}W_i(q_i u ) - W_i(q_i)\big\}\big\{v^{-1}W_j(q_j v ) - W_j(q_j)\big\}\Big]\D u \D v\\
	&=\frac{\gamma_i\gamma_j}{\sqrt{q_i q_j}}\int_{0}^{1}\int_{0}^{1}\frac{\E\big[W_i(q_i u )W_j(q_j v )\big]}{uv} - \frac{\E\big[W_i(q_i u )W_j(q_j)\big]}{u} \\
	&\hspace{4cm}- \frac{\E\big[W_i(q_i)W_j(q_j v)\big]}{v} +\E\big[W_i(q_i)W_j(q_j)\big] \D u \D v\\
	&= \frac{\gamma_i\gamma_j}{\sqrt{q_i q_j}}\int_{0}^{1}\int_{0}^{1}\frac{R_{i,j}(q_i u, q_j v) + R_{i,j}(q_j v, q_i u)}{2uv} - \frac{R_{i,j}(q_i u, q_j) + R_{i,j}(q_j, q_i u)}{2u}\\
	&\hspace{4cm}- \frac{R_{i,j}(q_i, q_j v) + R_{i,j}(q_j v, q_i)}{2v} + \frac{R_{i,j}(q_i, q_j) + R_{i,j}(q_j, q_i)}{2}\D u \D v.
\end{align*}
Due to Theorem~1~(ii) of \citet{SS06}, the $R_{i,j}(\cdot,\cdot)$-function is homogeneous (i.e., $R_{i,j}(sx,sy)=sR_{i,j}(x,y)$ for all $s>0,x,y\geq0$). Hence,
\begin{align*}
\int_{0}^{1}&\int_{0}^{1}\frac{R_{i,j}(q_i u, q_j v)}{2uv}\D u\D v\\
 &= \int_{0}^{1}\Bigg[\int_{0}^{v}\frac{R_{i,j}(q_i u,q_j v)}{2uv}\D u\Bigg]\D v + \int_{0}^{1}\Bigg[\int_{0}^{u}\frac{R_{i,j}(q_i u,q_j v)}{2uv}\D v\Bigg]\D u\\
&=\int_{0}^{1}\Bigg[\int_{0}^{v}\frac{R_{i,j}(q_i u/v,q_j )}{2u}\D u\Bigg]\D v + \int_{0}^{1}\Bigg[\int_{0}^{u}\frac{R_{i,j}(q_i,q_j v/u)}{2v}\D v\Bigg]\D u\\
&=\int_{0}^{1}\Bigg[\int_{0}^{1}\frac{R_{i,j}(q_i u,q_j )}{2u}\D u\Bigg]\D v + \int_{0}^{1}\Bigg[\int_{0}^{1}\frac{R_{i,j}(q_i,q_j v)}{2v}\D v\Bigg]\D u\\
&=\int_{0}^{1}\frac{R_{i,j}(q_i u,q_j )}{2u}\D u + \int_{0}^{1}\frac{R_{i,j}(q_i,q_j v)}{2v}\D v.
\end{align*}
Similar arguments yield that
\[
	\int_{0}^{1}\int_{0}^{1}\frac{R_{i,j}(q_j v, q_i u)}{2uv}\D u\D v=\int_{0}^{1}\frac{R_{i,j}(q_j u,q_i)}{2u}\D u + \int_{0}^{1}\frac{R_{i,j}(q_j,q_i v )}{2v}\D v.
\]
Therefore, 
\[
	\Cov(\Gamma_i, \Gamma_j) = \frac{\gamma_i\gamma_j}{\sqrt{q_i q_j}}\frac{R_{i,j}(q_i,q_j) + R_{i,j}(q_j,q_i)}{2},
\]
as claimed. This finishes the proof.
\end{proof}

\section{Proof of Proposition~\ref{prop:cons}}
\label{sec:Proof of Proposition 1}

\renewcommand{\theequation}{F.\arabic{equation}}	
\setcounter{equation}{0}

Here, we only show that $\widehat{R}_{i,j}(q_i,q_j)\overset{\p}{\longrightarrow}R_{i,j}(q_i,q_j)$, since the other claim of the proposition can be established analogously. The proof requires the preliminary Lemmas~\ref{lem:Lem 1}--\ref{lem:3 tilde} for which we have to introduce some additional notation.

Fix $i,j\in\{1,\ldots,D\}$, and define 
\begin{align*}
	\widetilde{\varphi}_{k/n}(x,y)&=\frac{1}{k}\sum_{t=1}^{n}I_{\big\{\varepsilon_{t,Y_i}>\varepsilon_{(\lfloor kx\rfloor+1),Y_i},\ \varepsilon_{t,Y_j}>\varepsilon_{(\lfloor ky\rfloor+1),Y_j}\big\}},\\
	\widetilde{\varphi}_{k/n}^{\ast}(x,y)&=\frac{1}{k}\sum_{t=1}^{n}I_{\big\{\varepsilon_{t,Y_i}>U_i(n/[kx]),\ \varepsilon_{t,Y_j}>U_j(n/[ky])\big\}},\\	
	\widehat{\varphi}_{k/n}(x,y)&=\frac{1}{k}\sum_{t=1}^{n}I_{\big\{\widehat{\varepsilon}_{t,Y_i}>\widehat{\varepsilon}_{(\lfloor kx\rfloor+1),Y_i},\ \widehat{\varepsilon}_{t,Y_j}>\widehat{\varepsilon}_{(\lfloor ky\rfloor+1),Y_j}\big\}},\\
	\widehat{\varphi}_{k/n}^{\ast}(x,y)&=\frac{1}{k}\sum_{t=1}^{n}I_{\big\{\widehat{\varepsilon}_{t,Y_i}>U_i(n/[kx]),\ \widehat{\varepsilon}_{t,Y_j}>U_j(n/[ky])\big\}}
\end{align*}
and
\begin{align*}
		e_{n,i} 				&= \frac{n}{k}\Big[1-F_i\big(\varepsilon_{(k_i+1),Y_i}\big)\Big],\\
	\widehat{e}_{n,i} &= \frac{n}{k}\Big[1-F_i\big(\widehat{\varepsilon}_{(k_i+1),Y_i}\big)\Big].
\end{align*}
In analogy to $R_n(x,y)$ in Appendix~\ref{Proof of Proposition 2}, we also define
\[
	R_{n,i,j}(x,y)=\frac{n}{k}\p\big\{F_{i}(\varepsilon_{t,Y_i})>1-kx/n,\ F_j(\varepsilon_{t,Y_j})>1-ky/n\big\}.
\]

We have the following analog to Lemma~\ref{lem:Lemma 1}, which again follows from Proposition~3.1 of \citet{EHL06}.

\begin{lem}\label{lem:Lem 1}
Suppose that \eqref{eq:Rij} holds. Then, for any $\eta\in[0,1/2)$ and $T>0$, it holds that, as $n\to\infty$,
\begin{align}
	\sup_{x,y\in(0,T]}&y^{-\eta}\Big|\sqrt{k}\big\{\widetilde{\varphi}_{k/n}^{\ast}(x,y) - R_{n,i,j}(x,y)\big\} - W_{i,j}(x,y)\Big|\overset{a.s.}{\longrightarrow}0,\notag\\
		\sup_{x\in(0,T]}&x^{-\eta}\Big|\sqrt{k}\big\{\widetilde{\varphi}_{k/n}^{\ast}(x,\infty) - x\big\} - W_{i,j}(x,\infty)\Big|\overset{a.s.}{\longrightarrow}0,\label{eq:(C.1tilde)}
\end{align}
where $W_{i,j}(\cdot,\cdot)$ is a zero-mean Gaussian process with covariance structure given by
\[
	\E\big[W_{i,j}(x_1, y_1) W_{i,j}(x_2, y_2)\big]=R_{i,j}(x_1\wedge x_2, y_1\wedge y_2).
\]
\end{lem}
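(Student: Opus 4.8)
The plan is to obtain both displays as a direct application of Proposition~3.1 in \citet{EHL06} to the bivariate i.i.d.\ sequence $\big\{(\varepsilon_{t,Y_i}, \varepsilon_{t,Y_j})^\prime\big\}_{t=1,\ldots,n}$, exactly paralleling how Lemma~\ref{lem:Lemma 1} (taken from \citet[Lemma~1]{Cea15}) is obtained for the pair $(\varepsilon_{t,X}, \varepsilon_{t,Y}^{+})^\prime$. First I would rewrite the statistic in the form required by the cited result. Since $U_i=(1/[1-F_i])^{\leftarrow}$ and $F_i$ is continuous, the event $\{\varepsilon_{t,Y_i}>U_i(n/[kx])\}$ coincides with $\{F_i(\varepsilon_{t,Y_i})>1-kx/n\}$, so that
\[
	\widetilde{\varphi}_{k/n}^{\ast}(x,y)=\frac{1}{k}\sum_{t=1}^{n}I_{\big\{F_{i}(\varepsilon_{t,Y_i})>1-kx/n,\ F_{j}(\varepsilon_{t,Y_j})>1-ky/n\big\}}
\]
is the tail empirical process of the probability-integral transforms $\big(F_i(\varepsilon_{t,Y_i}), F_j(\varepsilon_{t,Y_j})\big)^\prime$, whose joint law is the copula of $(\varepsilon_{t,Y_i}, \varepsilon_{t,Y_j})^\prime$ with uniform marginals. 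The centering $R_{n,i,j}(x,y)$ is, by construction, the exact mean $\E\big[\widetilde{\varphi}_{k/n}^{\ast}(x,y)\big]$, and \eqref{eq:Rij} states precisely that $R_{n,i,j}(x,y)\to R_{i,j}(x,y)$, i.e., that the first-order tail copula condition underlying \citet{EHL06} is in force.

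Next I would verify the hypotheses of their Proposition~3.1: the $\vvarepsilon_t$ are i.i.d.\ with continuous d.f., $k=k_n\to\infty$ with $k/n\to0$ is an intermediate sequence, and \eqref{eq:Rij} supplies the required tail copula limit. Their result then delivers a weighted weak approximation of $\sqrt{k}\big\{\widetilde{\varphi}_{k/n}^{\ast}(\cdot,\cdot) - R_{n,i,j}(\cdot,\cdot)\big\}$ by a zero-mean Gaussian process, with weight $y^{-\eta}$ (resp.\ $x^{-\eta}$) admissible for any $\eta\in[0,1/2)$. Invoking a Skorohod construction, exactly as for Lemma~\ref{lem:Lemma 1}, places all processes on a common probability space and upgrades the weighted weak convergence to the almost-sure statements asserted in the lemma. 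The covariance structure $\E\big[W_{i,j}(x_1,y_1)W_{i,j}(x_2,y_2)\big]=R_{i,j}(x_1\wedge x_2, y_1\wedge y_2)$ is the standard tail-copula covariance: the intersection of the two tail regions at $(x_1,y_1)$ and $(x_2,y_2)$ is the region cut out by the coordinatewise minima, so that the covariance of the scaled indicator counts equals $(n/k)\p\{F_i(\varepsilon_{t,Y_i})>1-k(x_1\wedge x_2)/n,\ F_j(\varepsilon_{t,Y_j})>1-k(y_1\wedge y_2)/n\}$ minus an asymptotically negligible mean-product term, which converges to $R_{i,j}(x_1\wedge x_2, y_1\wedge y_2)$. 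The second display \eqref{eq:(C.1tilde)} is then the marginal $y=\infty$ specialization: there $R_{n,i,j}(x,\infty)=(n/k)\p\{F_i(\varepsilon_{t,Y_i})>1-kx/n\}=x$ holds exactly (by uniformity of $F_i(\varepsilon_{t,Y_i})$), so the one-dimensional tail empirical process of the $i$-th margin is approximated by $W_{i,j}(\cdot,\infty)$.

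Because the lemma is essentially a transcription of a known weighted-approximation result to the pair $(\varepsilon_{t,Y_i},\varepsilon_{t,Y_j})^\prime$, no genuinely new difficulty arises. The only point requiring care is the bookkeeping that matches the present notation---$\widetilde{\varphi}_{k/n}^{\ast}$, the centering $R_{n,i,j}$, and the thresholds $U_i(n/[kx])$---to the tail-copula-process formulation of \citet{EHL06}, together with the observation that the admissible weight-exponent range $\eta\in[0,1/2)$ here is exactly the one their proposition permits. I expect this notational matching, rather than any analytic estimate, to be the main (and only modest) obstacle.
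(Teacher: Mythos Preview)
Your proposal is correct and matches the paper's approach exactly: the paper simply states that the lemma ``again follows from Proposition~3.1 of \citet{EHL06}'' as an analog of Lemma~\ref{lem:Lemma 1}, without giving any further details. Your verification that \eqref{eq:Rij} supplies the tail-copula condition, that $R_{n,i,j}(x,\infty)=x$ exactly, and that the Skorohod construction yields the a.s.\ statement is precisely the bookkeeping the paper leaves implicit.
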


The first step is to prove the following analog of Lemma~\ref{lem:help e n}:

\begin{lem}\label{lem:6tilde}
Under the conditions of Theorem~\ref{thm:main result2}, we have that, as $n\to\infty$,
\begin{align*}
	\sqrt{k}\Big(e_{n,i} - \frac{k_i}{k}\Big) &\overset{\p}{\longrightarrow}-W_{i,j}(q_i,\infty),\\
		\sqrt{k}\Big(\widehat{e}_{n,i} - \frac{k_i}{k}\Big) &\overset{\p}{\longrightarrow}-W_{i,j}(q_i,\infty).
\end{align*}
\end{lem}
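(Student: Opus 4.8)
The plan is to adapt the proof of Lemma~\ref{lem:help e n} almost verbatim, feeding in the marginal convergence \eqref{eq:(C.1tilde)} of Lemma~\ref{lem:Lem 1} in place of \eqref{eq:help e n 1}. Abbreviate $T_{n,i}(x):=\widetilde{\varphi}_{k/n}^{\ast}(x,\infty)=\frac{1}{k}\sum_{t=1}^{n}I_{\{\varepsilon_{t,Y_i}>U_i(n/[kx])\}}$, so that \eqref{eq:(C.1tilde)} reads $\sup_{x\in(0,T]}x^{-\eta}\big|\sqrt{k}\{T_{n,i}(x)-x\}-W_{i,j}(x,\infty)\big|\overset{a.s.}{\to}0$ for any $\eta\in[0,1/2)$ and $T>0$. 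As in the proof of Lemma~\ref{lem:help e n}, I would introduce $T_{n,i}^{\leftsquigarrow}(x):=\frac{n}{k}\big[1-F_i(\varepsilon_{(\lfloor kx\rfloor+1),Y_i})\big]$ and note that, since $\lfloor k\cdot(k_i/k)\rfloor+1=k_i+1$, the quantity of interest satisfies $e_{n,i}=T_{n,i}^{\leftsquigarrow}(k_i/k)$.

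Then, exactly as in Lemma~\ref{lem:help e n}, the continuity of $F_i$ (which rules out ties among the $\varepsilon_{t,Y_i}$ a.s.) gives $T_{n,i}\{T_{n,i}^{\leftsquigarrow}(x)\}=\lfloor kx\rfloor/k$, so that $\sup_{x\in(0,T]}|T_{n,i}\{T_{n,i}^{\leftsquigarrow}(x)\}-x|\leq 1/k$; this identifies $T_{n,i}^{\leftsquigarrow}$ with the left-continuous inverse $T_{n,i}^{\leftarrow}$ up to $o(1/\sqrt{k})$ uniformly. Applying \citeauthor{Ver72}'s \citeyearpar{Ver72} lemma to the a.s.\ convergence of $T_{n,i}$ yields $\sup_{x\in[a,b]}\big|\sqrt{k}\{T_{n,i}^{\leftarrow}(x)-x\}+W_{i,j}(x,\infty)\big|\overset{a.s.}{=}o(1)$ on any compact $[a,b]\subset(0,\infty)$, and the preceding remark transfers this to $T_{n,i}^{\leftsquigarrow}$. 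Choosing $[a,b]$ to be a neighborhood of $q_i$ (which contains $k_i/k$ for all large $n$, because $k_i/k\to q_i$), evaluating at $x=k_i/k$, and using the sample-path continuity of $x\mapsto W_{i,j}(x,\infty)$ together with $W_{i,j}(k_i/k,\infty)\to W_{i,j}(q_i,\infty)$, I obtain $\sqrt{k}(e_{n,i}-k_i/k)\overset{\p}{\to}-W_{i,j}(q_i,\infty)$, which is the first claim.

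For the feasible statement, I would argue as at the end of the proof of Lemma~\ref{lem:help e n}: it suffices to show that the convergence of $T_{n,i}$ persists when the $\varepsilon_{t,Y_i}$ are replaced by the residuals $\widehat{\varepsilon}_{t,Y_i}$. Proposition~\ref{prop:approx} supplies $\widehat{\varepsilon}_{t,Y_i}=\varepsilon_{t,Y_i}\{1+o_{\p}(n^{\iota-\xi})\}$ uniformly in $t$, and, with $\delta_n=n^{\iota-\xi}$ chosen so that $\sqrt{k}\delta_n=o(1)$, the sandwiching argument around \eqref{eq:(3.1)} in the proof of Lemma~\ref{lem:gamma} squeezes the residual-based process between $T_{n,i}\{(1\pm\delta_n)x\}$ and shows the perturbation is negligible at rate $\sqrt{k}$; the reasoning of the previous paragraph then gives $\sqrt{k}(\widehat{e}_{n,i}-k_i/k)\overset{\p}{\to}-W_{i,j}(q_i,\infty)$. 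The one genuinely new point relative to Lemma~\ref{lem:help e n}---and the step I expect to demand the most care---is the evaluation at the \emph{moving} argument $x=k_i/k$ rather than at a fixed point: this forces me to secure uniform convergence of the inverse process on a whole neighborhood of $q_i$ and to invoke continuity of $W_{i,j}(\cdot,\infty)$ to pass from $W_{i,j}(k_i/k,\infty)$ to the stated limit $W_{i,j}(q_i,\infty)$. Everything else is a transcription of arguments already established in the paper.
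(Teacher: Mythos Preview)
Your argument is correct, but it differs from the paper's in one respect worth noting. The paper does not evaluate at the moving point $x=k_i/k$; instead it first rescales the tail empirical process from scale $k$ to scale $k_i$ by following the steps leading up to \eqref{eq:(p.30(2))} (i.e., substituting $x\mapsto x\,k_i/k$ and multiplying through by $\sqrt{k/k_i}$), which produces
\[
\sup_{x\in(0,T]}x^{-\eta}\bigg|\sqrt{k_i}\Big\{\tfrac{1}{k_i}\sum_{t=1}^{n}I_{\{\varepsilon_{t,Y_i}>U_i(n/[k_ix])\}}-x\Big\}-q_i^{-1/2}W_{i,j}(q_ix,\infty)\bigg|\overset{a.s.}{\longrightarrow}0.
\]
From this, the proof of Lemma~\ref{lem:help e n} applies verbatim with evaluation at the \emph{fixed} point $x=1$, yielding $\sqrt{k_i}\big(\tfrac{k}{k_i}e_{n,i}-1\big)\overset{\p}{\to}-q_i^{-1/2}W_{i,j}(q_i,\infty)$, and a final multiplication by $\sqrt{k_i/k}\to\sqrt{q_i}$ gives the claim. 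Your route---working at scale $k$ throughout, evaluating Vervaat's lemma at $x=k_i/k$, and invoking sample-path continuity of $W_{i,j}(\cdot,\infty)$ to pass from $W_{i,j}(k_i/k,\infty)$ to $W_{i,j}(q_i,\infty)$---is equally valid and arguably more direct; the paper's rescaling simply sidesteps the moving-argument issue you flag, at the cost of one additional algebraic manipulation. For the residual-based statement your reference to the sandwiching around \eqref{eq:(3.1)} is exactly what the paper intends by ``follows similarly.''
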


\begin{proof}
From \eqref{eq:(C.1tilde)},
\[
	\sup_{x\in(0,T]}x^{-\eta}\bigg|\sqrt{k}\Big\{\frac{1}{k}\sum_{t=1}^{n}I_{\big\{\varepsilon_{t,Y_i}>U_i(n/[kx])\big\}} - x\Big\} - W_{i,j}(x,\infty)\bigg|\overset{a.s.}{\longrightarrow}0.
\]
Following the steps leading up to \eqref{eq:(p.30(2))} this implies
\[
	\sup_{x\in(0,T]}x^{-\eta}\bigg|\sqrt{k_i}\Big\{\frac{1}{k_i}\sum_{t=1}^{n}I_{\big\{\varepsilon_{t,Y_i}>U_i(n/[k_ix])\big\}} - x\Big\} - q^{-1/2}W_{i,j}(q_ix,\infty)\bigg|\overset{a.s.}{\longrightarrow}0.
\]
From this relation we may argue as in the proof of Lemma~\ref{lem:help e n} to obtain that
\[
	\sqrt{k_i}\Big(\frac{k}{k_i}e_{n,i}-1\Big)\overset{\p}{\longrightarrow}-q_i^{-1/2}W_{i,j}(q_i,\infty)
\]
and, because $k_i/k\to q_i$,
\[
	\sqrt{k}\Big(e_{n,i}-\frac{k_i}{k}\Big)\overset{\p}{\longrightarrow}-W_{i,j}(q_i,\infty).
\]
The claim for $\widehat{e}_{n,i}$ follows similarly.
\end{proof}

\begin{lem}\label{lem:5}
Under the conditions of Theorem~\ref{thm:main result2}, it holds that for any $\varepsilon\in(0,1)$
\[
	\sup_{x,y\in[\varepsilon,\varepsilon^{-1}]}\big|\widehat{\varphi}_{k/n}^{\ast}(x,y) - \widetilde{\varphi}_{k/n}^{\ast}(x,y)\big|=o_{\p}(1).
\]
\end{lem}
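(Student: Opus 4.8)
The plan is to transfer the sandwiching technique from the proofs of Lemmas~\ref{lem:gamma} and~\ref{lem:MES y} to the two-dimensional empirical quantity $\widehat{\varphi}_{k/n}^{\ast}$. First I would invoke Proposition~\ref{prop:approx} to write $\widehat{\varepsilon}_{t,Y_d}=\varepsilon_{t,Y_d}\{1+o_{\p}(n^{\iota-\xi})\}$ uniformly in $t=1,\ldots,n$, and set $\delta_n=n^{\iota-\xi}$ with $\iota>0$ small enough that $\sqrt{k}\delta_n=o(1)$; this is possible because $\sqrt{k}=O(n^{\widetilde{\alpha}})$ with $\widetilde{\alpha}<\xi$ by Assumption~\ref{ass:k*}*. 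Since $U_d(n/[kx])>0$ for large $n$ by Assumption~\ref{ass:U*}*, the multiplicative approximation yields, for all $t$ and with probability approaching one (w.p.a.~1), the elementwise bounds
\[
I_{\{\varepsilon_{t,Y_d}>(1+\delta_n)U_d(n/[kx])\}}\leq I_{\{\widehat{\varepsilon}_{t,Y_d}>U_d(n/[kx])\}}\leq I_{\{\varepsilon_{t,Y_d}>(1-\delta_n)U_d(n/[kx])\}}
\]
for $d\in\{i,j\}$.

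Next I would rewrite the perturbed thresholds through the functions $r_{n,d}^{\pm}(\cdot)$. Defining, in analogy to \eqref{eq:rnpm}, $r_{n,d}^{\pm}(x)=\frac{n}{k}[1-F_d\{(1\pm\delta_n)U_d(n/[kx])\}]$, one has $(1\pm\delta_n)U_d(n/[kx])=U_d(n/[k r_{n,d}^{\pm}(x)])$, so that multiplying the two elementwise bounds (indicators being nonnegative) and summing over $t$ gives, w.p.a.~1,
\[
\widetilde{\varphi}_{k/n}^{\ast}\big(r_{n,i}^{+}(x),r_{n,j}^{+}(y)\big)\leq\widehat{\varphi}_{k/n}^{\ast}(x,y)\leq\widetilde{\varphi}_{k/n}^{\ast}\big(r_{n,i}^{-}(x),r_{n,j}^{-}(y)\big),
\]
uniformly in $x,y\in[\varepsilon,\varepsilon^{-1}]$. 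By the marginal analog of Lemma~\ref{lem:r n pm} — whose proof in fact establishes uniformity on any $(0,T]$ and carries over to each $d$ under Assumptions~\ref{ass:U*}* and~\ref{ass:k*}* — one has $\sup_{x\in[\varepsilon,\varepsilon^{-1}]}|r_{n,d}^{\pm}(x)-x|=o(1/\sqrt{k})\to0$, so for large $n$ the perturbed arguments remain in the fixed compact set $[\varepsilon/2,2\varepsilon^{-1}]$.

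It then remains to show $\sup_{x,y\in[\varepsilon,\varepsilon^{-1}]}|\widetilde{\varphi}_{k/n}^{\ast}(r_{n,i}^{\pm}(x),r_{n,j}^{\pm}(y))-\widetilde{\varphi}_{k/n}^{\ast}(x,y)|=o_{\p}(1)$, after which the sandwich delivers the claim. For this I would decompose the difference through $R_{n,i,j}$ and bound it by
\[
2\sup_{x,y\in(0,2\varepsilon^{-1}]}\big|\widetilde{\varphi}_{k/n}^{\ast}(x,y)-R_{n,i,j}(x,y)\big|+\sup_{x,y\in[\varepsilon,\varepsilon^{-1}]}\big|R_{n,i,j}(r_{n,i}^{\pm}(x),r_{n,j}^{\pm}(y))-R_{n,i,j}(x,y)\big|.
\]
The first supremum is $O_{\p}(1/\sqrt{k})=o_{\p}(1)$ by Lemma~\ref{lem:Lem 1} (with $\eta=0$, $T=2\varepsilon^{-1}$). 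For the second, $R_{n,i,j}\to R_{i,j}$ uniformly on compacts by \eqref{eq:Rij}, and since $R_{i,j}$ is a tail copula it is (uniformly) continuous on the compact set; combined with $r_{n,d}^{\pm}(x)\to x$ uniformly, this forces the second term to vanish. The main obstacle is the bookkeeping of the two-dimensional sandwich — ensuring the elementwise indicator bounds translate correctly into simultaneous perturbations of both arguments of $\widetilde{\varphi}_{k/n}^{\ast}$ while preserving uniformity in $(x,y)$ — but once the marginal analog of Lemma~\ref{lem:r n pm} and the uniform expansion of Lemma~\ref{lem:Lem 1} are in hand, the remaining steps are routine.
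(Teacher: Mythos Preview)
Your proposal is correct and follows essentially the same approach as the paper: the sandwich via Proposition~\ref{prop:approx} and the perturbed thresholds $r_{n,d}^{\pm}$, followed by a decomposition through $R_{n,i,j}$ handled by Lemma~\ref{lem:Lem 1}, uniform convergence $R_{n,i,j}\to R_{i,j}$, and continuity of $R_{i,j}$. The paper writes out a five-term decomposition $E_1+\ldots+E_5$ where you use a more compact two-term bound, but the content is identical; the only minor point is that the uniform convergence in \eqref{eq:Rij} is not automatic from the pointwise statement and is justified in the paper via \citet[Theorem~1~(v)]{SS06}.
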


\begin{proof}
In analogy to $r_n^{\pm}(x)$ from \eqref{eq:rnpm}, we define $r_{n,i}^{\pm}(x)=\frac{n}{k}\Big[1-F_i\big\{(1\pm\delta_n)U_i(n/[kx])\big\}\Big]$. Set $\delta_n=n^{\iota-\xi}$ for sufficiently small $\iota>0$ to ensure that $\sqrt{k}\delta_n=o(1)$. Then, (a straightforward analog of) Proposition~\ref{prop:approx} allows us to deduce that w.p.a.~1, as $n\to\infty$,
\begin{align*}
	\widehat{\varphi}_{k/n}^{\ast}(x,y)&\leq  \frac{1}{k}\sum_{t=1}^{n}I_{\big\{\varepsilon_{t,Y_i}>(1-\delta_n)U_i(n/[kx]),\ \varepsilon_{t,Y_j}>(1-\delta_n)U_j(n/[ky])\big\}}\\
	&= \frac{1}{k}\sum_{t=1}^{n}I_{\big\{\varepsilon_{t,Y_i}>U_i\big(n/[kr_{n,i}^{-}(x)]\big),\ \varepsilon_{t,Y_j}>(1-\delta_n)U_j\big(n/[kr_{n,j}^{-}(y)]\big)\big\}}\\
	&= \widetilde{\varphi}_{k/n}^{\ast}\big(r_{n,i}^{-}(x),r_{n,j}^{-}(y)\big).
\end{align*}
Similarly, $\widetilde{\varphi}_{k/n}^{\ast}\big(r_{n,i}^{+}(x),r_{n,j}^{+}(y)\big)\leq \widehat{\varphi}_{k/n}^{\ast}(x,y)$. Thus, the event
\begin{multline*}
	\mathcal{W}_3:=\Big\{\widetilde{\varphi}_{k/n}^{\ast}\big(r_{n,i}^{+}(x),r_{n,j}^{+}(y)\big) - \widetilde{\varphi}_{k/n}^{\ast}(x,y)\leq \widehat{\varphi}_{k/n}^{\ast}(x,y)-\widetilde{\varphi}_{k/n}^{\ast}(x,y)\\
	\leq \widetilde{\varphi}_{k/n}^{\ast}\big(r_{n,i}^{-}(x),r_{n,j}^{-}(y)\big)-\widetilde{\varphi}_{k/n}^{\ast}(x,y)\Big\}
\end{multline*}
occurs w.p.a.~1, as $n\to\infty$. Therefore it suffices to show that
\[
	\sup_{x,y\in[\varepsilon,\varepsilon^{-1}]}\Big|\widetilde{\varphi}_{k/n}^{\ast}\big(r_{n,i}^{\pm}(x),r_{n,j}^{\pm}(y)\big)-\widetilde{\varphi}_{k/n}^{\ast}(x,y)\Big|=o_{\p}(1).
\]
To task this, write
\begin{align*}
\sup_{x,y\in[\varepsilon,\varepsilon^{-1}]}&\Big|\widetilde{\varphi}_{k/n}^{\ast}\big(r_{n,i}^{\pm}(x),r_{n,j}^{\pm}(y)\big)-\widetilde{\varphi}_{k/n}^{\ast}(x,y)\Big|\\
&\leq \sup_{x,y\in[\varepsilon,\varepsilon^{-1}]}\Big|\widetilde{\varphi}_{k/n}^{\ast}\big(r_{n,i}^{\pm}(x),r_{n,j}^{\pm}(y)\big)- R_{n,i,j}\big(r_{n,i}^{\pm}(x),r_{n,j}^{\pm}(y)\big)\Big|\\
&\hspace{1cm} + \sup_{x,y\in[\varepsilon,\varepsilon^{-1}]}\Big|\widetilde{\varphi}_{k/n}^{\ast}(x,y)- R_{n,i,j}(x,y)\Big|\\
&\hspace{1cm} + \sup_{x,y\in[\varepsilon,\varepsilon^{-1}]}\Big|R_{n,i,j}\big(r_{n,i}^{\pm}(x),r_{n,j}^{\pm}(y)\big)- R_{i,j}\big(r_{n,i}^{\pm}(x),r_{n,j}^{\pm}(y)\big)\Big|\\
&\hspace{1cm} + \sup_{x,y\in[\varepsilon,\varepsilon^{-1}]}\Big|R_{n,i,j}(x,y) - R_{i,j}(x,y)\Big|\\
&\hspace{1cm} + \sup_{x,y\in[\varepsilon,\varepsilon^{-1}]}\Big|R_{i,j}\big(r_{n,i}^{\pm}(x),r_{n,j}^{\pm}(y)\big)- R_{i,j}(x,y)\Big|\\
&=:E_1+ E_2 + E_3 + E_4 + E_5.
\end{align*}
The fact that $E_1=o_{\p}(1)$ and $E_2=o_{\p}(1)$ follows from Lemma~\ref{lem:Lem 1} together with $\sup_{x\in[\varepsilon,\varepsilon^{-1}]}|r_{n,i}^{\pm}(x)-x|=o(1)$ for all $i=1,\ldots,D$ from (a straightforward generalization of) Lemma~\ref{lem:r n pm}. That $E_3=o(1)$ and $E_4=o(1)$ follows from Assumption~\ref{ass:R*}*, where the convergence in \eqref{eq:Rij} is uniform by Theorem~1~(v) of \citet{SS06}. Finally, $E_5=o(1)$ follows from $\sup_{x\in[\varepsilon,\varepsilon^{-1}]}|r_{n,i}^{\pm}(x)-x|=o(1)$ and the (Lipschitz) continuity of $R_{i,j}(\cdot,\cdot)$ by Theorem~1~(iii) of \citet{SS06}.
\end{proof}

\begin{lem}\label{lem:3 tilde}
Under the conditions of Theorem~\ref{thm:main result2}, it holds that, as $n\to\infty$,
\[
	\widehat{\varphi}_{k/n}(k_i/k,k_j/k)-\widetilde{\varphi}_{k/n}(k_i/k,k_j/k)\overset{\p}{\longrightarrow}0.
\]
\end{lem}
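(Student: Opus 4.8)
The plan is to reduce the claim to the two lemmas already established, Lemmas~\ref{lem:6tilde} and~\ref{lem:5}, by means of an exact identity that re-expresses the order-statistic-based quantities in terms of the $U_i$-based ones, in direct analogy to the relations $e_n\widetilde{\theta}^{\ast}_{ke_n/n}=\widetilde{\theta}_{k/n}$ and $\widehat{e}_n\widehat{\theta}_{k\widehat{e}_n/n}^{\ast}=\widehat{\theta}_{k/n}$ exploited in Lemma~\ref{lem:MES final}. Since $k_i$ and $k_j$ are integers, $\lfloor k\cdot(k_i/k)\rfloor+1=k_i+1$, so the thresholds in $\widetilde{\varphi}_{k/n}(k_i/k,k_j/k)$ are the order statistics $\varepsilon_{(k_i+1),Y_i}$ and $\varepsilon_{(k_j+1),Y_j}$. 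By the definition of $e_{n,i}$ and the continuity of $F_i$, one has $\varepsilon_{(k_i+1),Y_i}=U_i\big(n/[ke_{n,i}]\big)$, and similarly $\widehat{\varepsilon}_{(k_i+1),Y_i}=U_i\big(n/[k\widehat{e}_{n,i}]\big)$ from the definition of $\widehat{e}_{n,i}$. This yields the exact identities
\[
	\widetilde{\varphi}_{k/n}(k_i/k,k_j/k)=\widetilde{\varphi}_{k/n}^{\ast}(e_{n,i},e_{n,j}),\qquad \widehat{\varphi}_{k/n}(k_i/k,k_j/k)=\widehat{\varphi}_{k/n}^{\ast}(\widehat{e}_{n,i},\widehat{e}_{n,j}).
\]

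With these identities in hand, I would decompose the difference of interest as
\[
	\widehat{\varphi}_{k/n}(k_i/k,k_j/k)-\widetilde{\varphi}_{k/n}(k_i/k,k_j/k)=\underbrace{\big[\widehat{\varphi}_{k/n}^{\ast}(\widehat{e}_{n,i},\widehat{e}_{n,j})-\widetilde{\varphi}_{k/n}^{\ast}(\widehat{e}_{n,i},\widehat{e}_{n,j})\big]}_{=:A}+\underbrace{\big[\widetilde{\varphi}_{k/n}^{\ast}(\widehat{e}_{n,i},\widehat{e}_{n,j})-\widetilde{\varphi}_{k/n}^{\ast}(e_{n,i},e_{n,j})\big]}_{=:B}.
\]
By Lemma~\ref{lem:6tilde}, both $\widehat{e}_{n,i}$ and $e_{n,i}$ converge in probability to $q_i\in(0,\infty)$, so for any sufficiently small $\varepsilon\in(0,1)$ the arguments $(\widehat{e}_{n,i},\widehat{e}_{n,j})$ and $(e_{n,i},e_{n,j})$ lie in $[\varepsilon,\varepsilon^{-1}]^2$ with probability approaching one. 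On that event, term $A$ is bounded by $\sup_{x,y\in[\varepsilon,\varepsilon^{-1}]}\big|\widehat{\varphi}_{k/n}^{\ast}(x,y)-\widetilde{\varphi}_{k/n}^{\ast}(x,y)\big|$, which is $o_{\p}(1)$ by Lemma~\ref{lem:5}.

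For term $B$, I would first establish local uniform convergence of $\widetilde{\varphi}_{k/n}^{\ast}$ to its continuous limit. Lemma~\ref{lem:Lem 1} gives $\sup_{x,y\in[\varepsilon,\varepsilon^{-1}]}\big|\widetilde{\varphi}_{k/n}^{\ast}(x,y)-R_{n,i,j}(x,y)\big|=O_{\p}(1/\sqrt{k})$, and since $R_{n,i,j}\to R_{i,j}$ uniformly on compacts by Theorem~1~(v) of \citet{SS06}, it follows that $\sup_{x,y\in[\varepsilon,\varepsilon^{-1}]}\big|\widetilde{\varphi}_{k/n}^{\ast}(x,y)-R_{i,j}(x,y)\big|=o_{\p}(1)$. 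Combining this with the (Lipschitz) continuity of $R_{i,j}$ from Theorem~1~(iii) of \citet{SS06} and the convergences $\widehat{e}_{n,i},e_{n,i}\overset{\p}{\longrightarrow}q_i$, both $\widetilde{\varphi}_{k/n}^{\ast}(\widehat{e}_{n,i},\widehat{e}_{n,j})$ and $\widetilde{\varphi}_{k/n}^{\ast}(e_{n,i},e_{n,j})$ converge in probability to $R_{i,j}(q_i,q_j)$, whence $B=o_{\p}(1)$. Adding the bounds for $A$ and $B$ delivers the claim.

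The main obstacle is essentially bookkeeping rather than analysis: one must confirm that the randomly perturbed evaluation points stay within a fixed compact set bounded away from $0$ and $\infty$ so that the uniform-on-compacts conclusions of Lemmas~\ref{lem:5} and~\ref{lem:Lem 1} (and the generalization of Lemma~\ref{lem:r n pm} used therein) apply, and then transfer the $o_{\p}(1)$ bounds through the random arguments using the regularity of the tail-copula limit $R_{i,j}$. A pleasant simplification is that, because only an $o_{\p}(1)$ conclusion is required here, the delicate $1/\sqrt{k}$-rate asymptotic equivalence of $\widehat{e}_{n,i}$ and $e_{n,i}$ is not needed; their common convergence to $q_i$ established in Lemma~\ref{lem:6tilde} suffices.
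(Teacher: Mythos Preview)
Your proposal is correct and follows essentially the same route as the paper. Both start from the identity $\widehat{\varphi}_{k/n}(k_i/k,k_j/k)-\widetilde{\varphi}_{k/n}(k_i/k,k_j/k)=\widehat{\varphi}_{k/n}^{\ast}(\widehat{e}_{n,i},\widehat{e}_{n,j})-\widetilde{\varphi}_{k/n}^{\ast}(e_{n,i},e_{n,j})$; the paper then splits the right-hand side into six telescoping pieces $F_1,\ldots,F_6$ (inserting $R_{n,i,j}$ and $R_{i,j}$ at both evaluation points), whereas you use the more compact two-term split $A+B$, with $A$ coinciding with the paper's $F_1$ and your treatment of $B$ collapsing the paper's $F_2$--$F_6$ into a single uniform-convergence-plus-continuity step---the underlying ingredients (Lemmas~\ref{lem:Lem 1}, \ref{lem:6tilde}, \ref{lem:5}, uniform convergence of $R_{n,i,j}$, and continuity of $R_{i,j}$) are identical.
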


\begin{proof}
Write
\begin{align*}
 \widehat{\varphi}_{k/n}(k_i/k,k_j/k)-\widetilde{\varphi}_{k/n}(k_i/k,k_j/k) &= \widehat{\varphi}_{k/n}^{\ast}(\widehat{e}_{n,i},\widehat{e}_{n,j})-\widetilde{\varphi}_{k/n}^{\ast}(e_{n,i},e_{n,j})\\
&= \big[\widehat{\varphi}_{k/n}^{\ast}(\widehat{e}_{n,i},\widehat{e}_{n,j})-\widetilde{\varphi}_{k/n}^{\ast}(\widehat{e}_{n,i},\widehat{e}_{n,j})\big]\\
&\hspace{2cm} + \big[\widetilde{\varphi}_{k/n}^{\ast}(\widehat{e}_{n,i},\widehat{e}_{n,j})- R_{n,i,j}(\widehat{e}_{n,i},\widehat{e}_{n,j})\big]\\
&\hspace{2cm} + \big[R_{n,i,j}(\widehat{e}_{n,i},\widehat{e}_{n,j}) - R_{i,j}(\widehat{e}_{n,i},\widehat{e}_{n,j})\big]\\
&\hspace{2cm} + \big[R_{i,j}(\widehat{e}_{n,i},\widehat{e}_{n,j}) - R_{i,j}(e_{n,i},e_{n,j})\big]\\
&\hspace{2cm} +\big[R_{i,j}(e_{n,i},e_{n,j}) - R_{n,i,j}(e_{n,i},e_{n,j})\big]\\
&\hspace{2cm}+\big[R_{n,i,j}(e_{n,i},e_{n,j}) - \widetilde{\varphi}_{k/n}^{\ast}(e_{n,i},e_{n,j})\big]\\
&=: F_1+F_2+F_3 + F_4 + F_5 + F_6.
\end{align*}
By Lemmas~\ref{lem:6tilde}--\ref{lem:5}, $F_1=o_{\p}(1)$ follows. By Lemmas~\ref{lem:Lem 1}--\ref{lem:6tilde}, $F_2=o_{\p}(1)$ and $F_6=o_{\p}(1)$. That $F_4=o_{\p}(1)$ and $F_5=o_{\p}(1)$ follows from the fact that the convergence in \eqref{eq:Rij} is uniform (by Theorem~1~(v) of \citet{SS06}) together with Lemma~\ref{lem:6tilde}. Finally, $F_3=o_{\p}(1)$ follows from Lemma~\ref{lem:6tilde} and the continuity of $R_{i,j}(\cdot,\cdot)$ \citep[Theorem~1~(iii)]{SS06}.
\end{proof}

\begin{proof}[\textbf{Proof of Proposition~\ref{prop:cons}}]
Write 
\begin{align*}
\widehat{R}_{i,j}(q_i, q_j) - R_{i,j}(q_i, q_j) &= \widehat{\varphi}_{k/n}(k_i/k,k_j/k) - R_{i,j}(q_i, q_j)\\
&= \big[\widehat{\varphi}_{k/n}(k_i/k,k_j/k) - \widetilde{\varphi}_{k/n}(k_i/k,k_j/k) \big]\\
&\hspace{2cm} + \big[\widetilde{\varphi}_{k/n}(k_i/k,k_j/k) - R_{n,i,j}(k_i/k,k_j/k)\big] \\
&\hspace{2cm} + \big[R_{n,i,j}(k_i/k,k_j/k) - R_{i,j}(k_i/k,k_j/k)\big] \\
&\hspace{2cm} + \big[R_{i,j}(k_i/k,k_j/k) - R_{i,j}(q_i, q_j)\big] \\
&=:G_1 + G_2 +  G_3 + G_4.
\end{align*}

We show that each term is asymptotically negligible.
By Lemma~\ref{lem:3 tilde}, $G_1=o_{\p}(1)$.

To show that $G_2=o_{\p}(1)$, we write
\begin{align*}
	G_2&= \widetilde{\varphi}_{k/n}^{\ast}(e_{n,i}, e_{n,j}) - R_{n,i,j}(k_i/k,k_j/k)\\
	&= \big[\widetilde{\varphi}_{k/n}^{\ast}(e_{n,i}, e_{n,j}) - R_{n,i,j}(e_{n,i}, e_{n,j})\big]\\
	&\hspace{2cm} + \big[R_{n,i,j}(e_{n,i}, e_{n,j})  - R_{i,j}(e_{n,i}, e_{n,j})\big]\\
	&\hspace{2cm} + \big[R_{i,j}(e_{n,i}, e_{n,j}) - R_{i,j}(k_i/k,k_j/k)\big]\\
	&\hspace{2cm} + \big[R_{i,j}(k_i/k,k_j/k) - R_{n,i,j}(k_i/k,k_j/k)\big]\\
	&=:G_{21} + G_{22} + G_{23} + G_{24}.
\end{align*}
Together, Lemma~\ref{lem:Lem 1} and Lemma~\ref{lem:6tilde}  imply that $G_{21}=o_{\p}(1)$. That the remaining terms are also $o_{\p}(1)$ can be established similarly as for the proof of $F_4=o_{\p}(1)$, $F_5=o_{\p}(1)$ and $F_6=o_{\p}(1)$ in the proof of Lemma~\ref{lem:3 tilde}. Hence, $G_2=o_{\p}(1)$.

From Assumption~\ref{ass:R*}* and the uniform convergence in \eqref{eq:Rij} implied by \citet[Theorem~1~(v)]{SS06}, $G_3=o(1)$. 

That $G_4=o(1)$ follows from the continuity of $R_{i,j}(\cdot,\cdot)$ \citep[Theorem~1~(iii)]{SS06} and $k_i/k\to q_i$, $k_j/k\to q_j$, as $n\to\infty$.

Overall, the conclusion follows.
\end{proof}

\singlespacing

\bibliographystyle{jaestyle2}
\bibliography{thebib}

\end{document}